%% file: main.tex
\documentclass{article}

\usepackage{dilab_arxiv}

\usepackage{amsfonts}
\usepackage{ifthen}
\usepackage{textcomp}
\usepackage{mathtools}
\usepackage{bbm,dsfont} 
\usepackage{algorithm}
\usepackage{wrapfig}
\usepackage{algpseudocode} 
\usepackage{multirow}
\usepackage{enumitem} 
\usepackage[section]{placeins} 
\usepackage{tikz}
\usetikzlibrary{arrows.meta,positioning,fit,backgrounds,calc,decorations,decorations.pathreplacing,shapes.geometric}

\usepackage{CJKutf8} 

\usepackage[capitalise,nameinlink]{cleveref}

\let\standardEqref\eqref

\input{math_commands}
\input{figures/SNIPPET_ttt_macros.tex}
\input{figures/SNIPPET_headline_macros.tex}

\providecommand{\Description}[2][]{}

\let\eqref\standardEqref

\theoremstyle{plain}

\newtheorem{proposition}{Proposition}

\theoremstyle{definition}

\theoremstyle{remark}

\newcommand{\compilehidecomments}{false}
\ifthenelse{ \equal{\compilehidecomments}{true} }{%
	\newcommand{\yu}[1]{}
    \newcommand{\longbo}[1]{}
    
    \colorlet{ruishuocolor}{black}
}{
	\newcommand{\yu}[1]{{\color{cyan}[\text{Yu:} #1]}}
    \newcommand{\longbo}[1]{{\color{orange}[\text{Longbo:} #1]}}
    \definecolor{ruishuocolor}{RGB}{0,155,80}
    
}

\title{GPU-CFR: 80x Faster Counterfactual Regret Minimization by Compiling the Game to Static Dataflow and CUDA Graph Replay}
\runningtitle{Compiling CFR to Static Dataflow and CUDA Graph Replay}
\date{\today}

\paperlogo{\includegraphics[height=1.5cm]{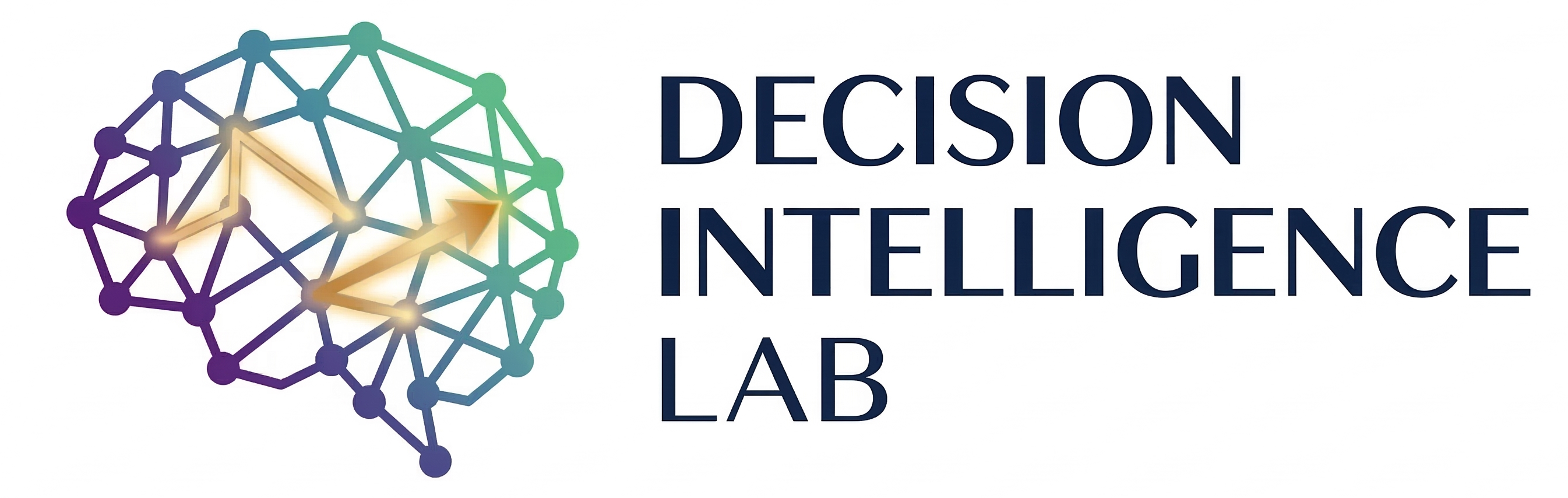}}

\author{
  Boning Li$^{1}$ and Longbo Huang$^{1\,\text{\faEnvelope}}$
  \\[0.3em]\normalfont
  $^1$Institute for Interdisciplinary Information Sciences, Tsinghua University
  \\
  \text{\faEnvelope}\ Correspondence: longbohuang@tsinghua.edu.cn 
}

\begin{document}

\maketitle
\thispagestyle{fancy}

\begin{abstract}
\input{sections/0_abstract}
\end{abstract}

\maketitle

\input{sections/1_introduction}
\input{sections/2_related_work}
\input{sections/3_preliminaries}
\input{sections/4_method}
\input{sections/4b_theory}
\input{sections/5_setup}
\input{sections/6_results}

\input{sections/8_conclusion}

\clearpage

\bibliography{references}
\bibliographystyle{dilab_ref}

\makeappendixtoc
\input{sections/A_hunl_construction}
\input{sections/B_protocol}
\input{sections/C_correctness}
\input{sections/D_extra_results}
\input{sections/E_libratus_headtohead}
\input{sections/G_profile}

\input{sections/H_compiler}

\input{sections/F_proofs}
\input{sections/I_future_work}

\end{document}

%% file: math_commands.tex
\providecommand{\R}{\mathbb{R}}

\providecommand{\nodes}{\mathcal{H}}
\providecommand{\termnodes}{\mathcal{Z}}
\providecommand{\infosets}{\mathcal{I}}
\providecommand{\acts}{A}
\providecommand{\reach}{\pi}
\providecommand{\creach}{\pi^{-i}}
\providecommand{\cfv}{v}
\providecommand{\regret}{R}
\providecommand{\strat}{\sigma}
\providecommand{\avgstrat}{\bar{\sigma}}
\providecommand{\expl}{\mathrm{expl}}

\newcommand{\numnodes}{N}
\newcommand{\numedges}{E}
\newcommand{\numslots}{A_{\mathrm{tot}}}
\newcommand{\numinfosets}{I}

\newcommand{\fwd}{\mathrm{F}}             
\newcommand{\bwd}{\mathrm{B}}             
\newcommand{\acc}{\mathrm{R}}             
\newcommand{\ulp}{\mathrm{ulp}}           

%% file: figures/SNIPPET_ttt_macros.tex
\newcommand{\tttRatioMin}{3.8}
\newcommand{\tttRatioMax}{44}

%% file: figures/SNIPPET_headline_macros.tex
\newcommand{\kimGraphRange}{29.8--80.4}
\newcommand{\kimGraphMedian}{44.1}
\newcommand{\kimEagerRange}{17.4--23.5}
\newcommand{\kimCpuRange}{2.2--51.1}
\newcommand{\kimCpuMedian}{11.5}
\newcommand{\liteGraphLargestRange}{14--258}
\newcommand{\liteGraphTurn}{258}
\newcommand{\graphCpuLargestRange}{3.1--13.5}
\newcommand{\cpuLiteLargestRange}{4.4--25.8}
\newcommand{\eagerGraphRange}{1.6--3.6}
\newcommand{\eagerMsRange}{0.401--0.933}
\newcommand{\graphMsTwoLargest}{0.380--0.556}
\newcommand{\turnSteadyMs}{0.397}
\newcommand{\turnFirstSolveSec}{4.936}
\newcommand{\turnKimThousandSec}{11.893}
\newcommand{\turnLiteThousandSec}{102.160}

%% file: sections/0_abstract.tex
Counterfactual regret minimization (CFR) is the standard solver for imperfect-information
extensive-form games. It is also one of the few large numerical
workloads that still runs faster on CPUs than on GPUs. Each iteration sweeps a game
tree with up to billions of states in millions of small, interdependent gather and
scatter steps issued through a generic tree interface. On a GPU every kernel finishes
in microseconds, so kernel launches and framework dispatch dominate the run time, and
prior GPU implementations have lost to optimized CPU code. We observe that for a fixed
game, everything about a CFR iteration except the numerical values is known before the
first iteration runs. We propose GPU-CFR, a compiler and runtime built on this
observation. It compiles any two-player zero-sum perfect-recall game once into static dataflow: flat
edge and information-set arrays, precomputed indices, and depth-level batched passes
fix the entire operation sequence, and only solver state changes between iterations.
Static chance folding, depth-level execution blocks, and a dual-lane reach buffer cut
the number of framework operations by up to 18.1$\times$. Because shapes, indices, and
buffer addresses never change, CUDA Graph Replay records the iteration once and
replays it with a single graph launch. On one A100, across an eight-game suite that
spans card games, dice games, and board games, GPU-CFR runs \kimGraphRange$\times$
faster than the fastest prior GPU CFR on the same accelerator (median
\kimGraphMedian$\times$), and \liteGraphLargestRange$\times$ faster than LiteEFG, one of the
fastest open-source CPU implementations, on the four largest games. The compiled
representation carries most of that margin: on eight CPU threads with no accelerator
it is already \kimCpuRange$\times$ faster than the GPU baseline. On the CPU the
optimized path reproduces the reference iterates bitwise, and tree construction and
graph capture pay for themselves within the first solve. GPU-CFR beats every CPU and
GPU baseline on the mid-to-large games of the suite without changing the update rule.
Code is available at \url{https://github.com/lbn187/GPU-CFR}.

%% file: sections/1_introduction.tex
\section{Introduction}
\label{sec:intro}

Counterfactual regret minimization (CFR) \cite{zinkevich2007regret} is the algorithm
of choice for computing equilibria in large games of hidden information, a class that
includes poker, bargaining, and adversarial security settings. Interestingly, tabular
CFR has resisted the move to accelerators, and its fastest implementations still run
on CPUs. Its execution
model explains why. One iteration touches every node of a game tree in millions of tiny
dependent steps. Each step chases a pointer, branches on the node type, and updates a
handful of floats. Every read and write is therefore a data-dependent gather or scatter
\cite{kim2026parallelizing}, and a wide processor finds nothing to batch in such a
walk. Deployed implementations therefore still execute CFR as a node-wise tree
traversal through a generic game interface
\cite{lanctot2019openspiel,liu2024liteefg,steinberger2019pokerrl,li2026real}.

This cost matters because the solves are time-bound. Modern game-playing agents must
solve a game tree with tens of millions of states within a few seconds during play
\cite{burch2014solving,ganzfried2015endgame,moravcik2016refining,brown2017safe,moravvcik2017deepstack,brown2018superhuman,brown2019superhuman,vsustr2019monte,brown2018depth,steinberger2019pokerrl},
so the cost of one solver iteration bounds what any such agent can do. Tabular CFR and
its CFR$^{+}$ refinement \cite{tammelin2014solving} produced the essential solution of
heads-up limit hold'em \cite{bowling2015heads,tammelin2015solving} and drive
superhuman poker AIs
\cite{brown2018superhuman,moravvcik2017deepstack,brown2019superhuman}. The tabular form
remains the field's workhorse: the reference against which every approximation is
checked \cite{lanctot2009monte,brown2019deep,schmid2019variance}, the source of the
exact best-response evaluations that certify solution quality
\cite{johanson2011accelerating,li2026av,li2026agents}, and the inner loop of online re-solving
\cite{burch2014solving,brown2017safe,sustr2020sound,li2025efficient,li2026real}.

That demand has driven many attempts to move the workload onto accelerators, and none has
produced a GPU solver whose iteration time is below that of an efficient CPU
implementation \cite{kim2026parallelizing,DBLP:journals/corr/abs-2508-06559}. The
milestone poker systems sidestepped the GPU entirely and used supercomputer-scale CPU
parallelism instead \cite{bowling2015heads,tammelin2015solving,brown2018superhuman}.
Porting the traversal to a GPU shows why. Every kernel runs for microseconds, so the
time goes into launching kernels and dispatching framework operations
\cite{guide2020cuda,paszke2019pytorch}, and arithmetic is a small fraction of it. The fastest prior GPU CFR
is the sequence-form \cite{koller1997representations} implementation of Kim (2026)
\cite{kim2026parallelizing}. It expresses each tree level as a sparse matrix product,
yet its iteration still takes longer than an optimized CPU solver. Its iteration also
cannot be recorded as a CUDA graph, because the sparse products and per-level index
traffic it relies on are not capturable (\cref{app:profile}). The obstacle lies in the
representation the algorithm runs on, and the workload itself is unusually regular. When the same game is solved for many
iterations, the tree topology, information sets, chance behavior, tensor shapes, and
data dependencies stay constant. Only solver state changes: strategies, reaches,
regrets, values, and the iteration weight. This raises a systems question. How much
of tabular CFR's cost is an artifact of its execution plan, and how much of it can a
compiler remove without touching the algorithm?

\begin{figure}[!htb]
    \centering
    \includegraphics[width=0.62\linewidth]{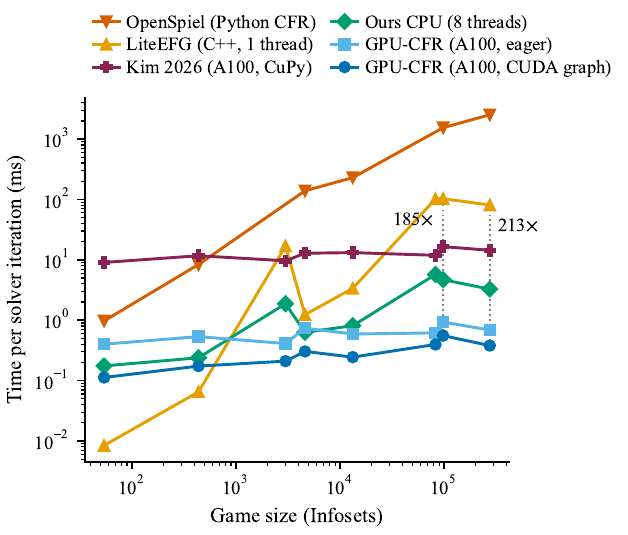}
    \caption{Steady-state time per CFR iteration versus game size (both axes
    logarithmic; lower is better); backends and variants as in \cref{tab:matrix}.}
    \Description{Log-log scatter plot of steady-state milliseconds per iteration
    against game size in Infosets for six systems. The graph-replayed GPU-CFR curve
    stays nearly flat while baseline curves rise with game size, giving the widest
    margins on the largest games.}
    \label{fig:hero}
\end{figure}

To remove these obstacles, we propose GPU-CFR, which treats a fixed game as a program
to be compiled. GPU-CFR walks a game once (two
players, zero-sum, perfect recall) and emits flat edge,
information-set slot, and depth arrays. An iteration then becomes a fixed sequence of
depth-level batched gather and scatter operations over those arrays. Three techniques
shape the compiled iteration. Static chance folding moves all strategy-independent
chance work to build time. Depth-level execution blocks collapse each traversal stage
into one batched operation per tree level. A dual-lane reach buffer advances both
players' reach probabilities in a single operation, and a sentinel slot removes the
last per-edge branch. The resulting sequence has fixed shapes, indices, and buffer
addresses, so CUDA Graph Replay \cite{guide2020cuda} records it once and replays it
with one graph launch per iteration. The design separates the two sources of run-time
cost. The compiled representation reduces the work issued to the tensor framework, and
graph replay reduces the cost of issuing it.

The compiled iteration pays off most where the trees are largest. \Cref{fig:hero} plots
steady-state time per iteration against game size for every backend we test. The CPU
libraries and the prior GPU implementation slow down as the tree grows. The
graph-replayed GPU-CFR curve stays nearly flat, so the margin widens with scale. On
one A100, a heads-up no-limit Texas hold'em (HUNL) turn subgame with 83{,}040
information sets (Infosets) runs at \turnSteadyMs\,ms per
steady-state CFR$^{+}$ iteration. Against the matched A100 implementation of
Kim (2026)~\cite{kim2026parallelizing}, graph replay is \kimGraphRange$\times$ faster
across the eight-game suite. Against LiteEFG \cite{liu2024liteefg}, one of the
fastest open-source CPU implementations, it is
\liteGraphLargestRange$\times$ faster on the four largest games. Most of this margin
comes from the compiled representation itself: the same compiled solver on eight CPU
threads already beats the A100 baseline by \kimCpuRange$\times$. GPU-CFR in turn
outperforms every CPU path we test, including that eight-thread arm and its one-socket
scaling in \cref{app:profile}.

We answer three systems questions. (Q1) How can a full CFR iteration be compiled into
fixed tensor dataflow while preserving its update semantics? (Q2) Which gains come
from the compiled representation and which come from CUDA Graph Replay? (Q3) When does
the lower iteration cost translate into lower wall-clock time for solution quality and
for repeated solves? Our contributions are:
\begin{enumerate}
    \item \textbf{A game-to-dataflow compiler for tabular CFR.} GPU-CFR maps a fixed
game to flat arrays and a depth-level execution schedule, including static chance
folding and a dual-lane reach representation
(\cref{fig:archpipeline,fig:archdataflow}).
    \item \textbf{CUDA Graph Replay with a causal performance decomposition.} We
separate representation cost from framework dispatch and graph-launch cost through
operation counts, eager/graph ablations, scaling, and repeated-solve measurements.
    \item \textbf{State-of-the-art GPU performance for tabular CFR.} GPU-CFR runs
\kimGraphRange$\times$ faster than the matched A100 baseline across the eight-game
suite and outperforms every tested CPU path (\cref{sec:results}).
\end{enumerate}

%% file: sections/2_related_work.tex
\section{Related Work}
\label{sec:related}

\paragraph{The CFR family and exact equilibrium solving.}
CFR \cite{zinkevich2007regret} minimizes counterfactual regret independently at every
information set. In two-player zero-sum games the average strategy profile converges
to a Nash equilibrium \cite{nash1950equilibrium}. CFR$^{+}$
\cite{tammelin2014solving} replaces regret matching \cite{hart2000simple} with regret matching$^{+}$
and linear averaging, and was the engine behind the essential solution of heads-up
limit hold'em \cite{tammelin2015solving,bowling2015heads}.
Subsequent variants tune the update rule: discounting \cite{brown2019solving},
dynamic discounting \cite{xu2024dynamic}, predictive/optimistic updates connecting
CFR to Blackwell approachability
\cite{farina2021faster,farina2019optimistic,farina2019stable,xu2024minimizing}, and
refinements of regret matching itself \cite{farina2023regret,meng2026a}.
First-order methods reach equilibria by a different route, smoothing
\cite{hoda2010smoothing,nesterov2005excessive} on the sequence form
\cite{koller1997representations}. Burch et al.\ \cite{burch2019revisiting} analyze
whether the two players are updated \emph{simultaneously} or \emph{alternately},
which changes iterate quality at matched iteration counts. Our work targets the
execution layer shared by the tabular CFR family. The compiled slot representation
supports vanilla CFR, CFR$^{+}$, DCFR \cite{brown2019solving}, and predictive
CFR$^{+}$ \cite{farina2021faster} under both update orders.

\paragraph{Scaling CFR by approximation.}
When the full tree is too large, the community samples it, shrinks it, prunes it, or
learns it. Monte Carlo CFR estimates counterfactual values from sampled trajectories
\cite{lanctot2009monte}, with public-chance and other variance-controlled variants
for solving and evaluation
\cite{johanson2012efficient,burch2012efficient,schmid2019variance,li2026correlated,li2026av}.
Abstraction maps the game onto a smaller proxy
\cite{gilpin2007lossless,waugh2009abstraction,kroer2018unified,li2024rl,li2025efficient,li2026effective,li2026abstraction},
a line of work motivated directly by CFR's per-iteration cost
\cite{johanson2012efficient,brown2015hierarchical}. Regret-based and online pruning
skip low-value branches during the traversal \cite{brown2015regret,li2025efficient}.
Deep CFR and its successors replace tabular regrets with neural approximation
\cite{brown2019deep,DBLP:journals/corr/abs-1901-07621}, and search-based and
language-model agents combine learned models with online play
\cite{brown2020combining,schmid2023student,li2026pokerskill,wang2026solver}. These methods reduce the game or the
work sampled from it. Our compiler addresses the regime in which the full tree is
available and repeatedly executed. Every one of these methods still runs a tabular CFR
sweep on whatever tree remains, so a faster sweep composes with all of them.

\paragraph{Systems and frameworks for game solving.}
OpenSpiel \cite{lanctot2019openspiel} is the field's reference library. Its CFR
implementations favor readability and breadth of game support. LiteEFG
\cite{liu2024liteefg} compiles extensive-form games into a graph representation
executed by a single-threaded C++ backend. We use it as the primary compiled CPU
baseline. PokerRL \cite{steinberger2019pokerrl} vectorizes CFR over hand ranges
\cite{johanson2012efficient,moravvcik2017deepstack}, a HUNL-specific representation.
Its released Libratus endgames are our external poker reference in
\cref{app:libratus}. Accelerated best-response computation
\cite{johanson2011accelerating} speeds up the \emph{evaluation} half of the loop with
game-specific structure. Our exploitability evaluator is a generic analogue that runs
on the same compiled arrays as the solver. Parallel CFR has been explored through
distributed sampling and equilibrium computation
\cite{jackson2013slumbot,brown2015hierarchical}, on CPU clusters
\cite{DBLP:conf/ispa/ZhangLZXW21a}, and in real-time solvers that partition the
traversal across CPU threads under per-decision budgets \cite{li2026real}. The
milestone poker systems used supercomputer-scale CPU parallelism
\cite{bowling2015heads,brown2018superhuman}. Those designs parallelize the
\emph{traversal}. We change the representation the traversal runs on, so the CPU arm
of our implementation needs no CFR-specific threading of its own (\cref{sec:setup}).

Accelerators already drive game \emph{simulation} at scale \cite{koyamada2023pgx}.
GPU CFR has recent precedent
\cite{kim2026parallelizing,DBLP:journals/corr/abs-2508-06559}, and the GPU path of
the fastest such system still runs slower than an optimized CPU solver. We instead compile
the game into depth-stratified dense tensors and replay the whole iteration as one
captured CUDA graph. This compilation step is what makes stock PyTorch
\cite{paszke2019pytorch} and NVIDIA CUDA graphs \cite{guide2020cuda} applicable to
CFR at all.

%% file: sections/3_preliminaries.tex
\section{Preliminaries}
\label{sec:prelim}

A two-player zero-sum extensive-form game is a finite tree of \emph{histories}, also
called \emph{states} \cite{koller1997representations,zinkevich2007regret}. A
history $h \in \nodes$ is the sequence of all actions taken from the root
$\varnothing$, including chance's. In poker it fixes both players' private cards,
the public cards, and the betting sequence. Performing an action $a$ at a non-terminal
history $h$ leads to the history $h \cdot a$. Each non-terminal history is owned by
player 1, player 2, or chance; chance histories have fixed action distributions. A
terminal history $z \in \termnodes \subset \nodes$ has no actions and pays $u(z)$ to
player~1 (player~2 receives $-u(z)$).

Player $i$ cannot distinguish histories that differ only in what is hidden from $i$.
An \emph{information set} (Infoset) $I \in \infosets_i$ is a maximal set of histories that are
indistinguishable to $i$. They share $i$'s private information and the public
sequence, and differ only in the opponent's private information. Every history $h$,
terminal or not, therefore lies in exactly one information set of each player,
$I_1(h)$ and $I_2(h)$. In poker $I_i(h)$ is $i$'s hand, the board, and the betting. Histories in the same $I$ owned by $i$ share an action set
$\acts(I)$. We assume perfect recall \cite{zinkevich2007regret}. A \emph{public node} $s(h)$
\cite{burch2014solving,moravvcik2017deepstack} collects the
histories that share everything both players can see. These objects give the two size counts used throughout. \emph{States} is the
number of non-chance histories, terminals included. \emph{Infosets} is the number of
pairs $(i, I_i(h))$ over those histories, again terminals included.
A behavioral strategy $\strat_i$ assigns each $I \in \infosets_i$ owned by $i$ a
distribution over $\acts(I)$. For a profile $\strat = (\strat_1, \strat_2)$, the reach
probability $\reach^{\strat}(h)$ of history $h$ factors into player-1, player-2, and
chance contributions. The counterfactual reach $\creach(h)$ excludes player $i$'s own
contribution. We report exploitability as NashConv \cite{lanctot2019openspiel} divided
by two, computed by exact best response \cite{johanson2011accelerating}:
\begin{equation}
    \expl(\strat) = \frac{1}{2} \sum_i \left[\max_{\strat_i'} u_i(\strat_i', \strat_{-i}) - u_i(\strat)\right].
    \label{eq:exploitability}
\end{equation}

CFR \cite{zinkevich2007regret} is an iterative self-play procedure over this tree.
Iteration $t$ holds a current profile $\strat^t$, a cumulative regret $\regret^t(I,a)$
for every information-set action, and an average-strategy accumulator. Four steps
turn $\strat^t$ into $\strat^{t+1}$. First, a \emph{forward reach pass} propagates
reach probabilities from the root, one product per edge:
\begin{equation}
    \reach^{\strat}(h\cdot a) = \reach^{\strat}(h)\,\strat_{P(h)}(I(h),a),
    \label{eq:reach}
\end{equation}
where $P(h)$ is the owner of $h$ and chance edges use their fixed probabilities.
Second, a \emph{backward value pass} folds terminal payoffs up the tree, one
strategy-weighted sum per node, and aggregates them into counterfactual values that
weight each history by every contribution except player $i$'s own:
\begin{equation}
\begin{aligned}
    \cfv_i(I,a) &= \sum_{h\in I}\creach(h)\sum_{z\in\termnodes}\reach^{\strat}(h\cdot a, z)\,u_i(z),\\
    \cfv_i(I) &= \sum_{a\in\acts(I)}\strat_i(I,a)\,\cfv_i(I,a).
\end{aligned}
    \label{eq:cfv}
\end{equation}
Third, the solver adds the instantaneous regret of each action to its accumulator:
\begin{equation}
    \regret^{t}(I,a) = \regret^{t-1}(I,a) + \cfv_i^{t}(I,a) - \cfv_i^{t}(I).
    \label{eq:regret}
\end{equation}
Fourth, regret matching \cite{hart2000simple} sets the next strategy proportional to
positive regret, and the reach-weighted current strategy is folded into the average:
\begin{equation}
\begin{aligned}
    \strat^{t+1}(I,a) &= \frac{\regret^{t}_{+}(I,a)}{\sum_{a'}\regret^{t}_{+}(I,a')},\\
    \avgstrat^{T}(I,a) &\propto \sum_{t\le T} w_t\,\reach_i^{\strat^t}(I)\,\strat^t(I,a),
\end{aligned}
    \label{eq:rm}
\end{equation}
with $\regret_{+}=\max(\regret,0)$, a uniform strategy when the denominator is zero,
and $w_t=1$ for uniform or $w_t=t$ for linear averaging \cite{tammelin2014solving}.
The average $\avgstrat^{T}$ converges to a Nash equilibrium at rate
$O(1/\sqrt{T})$ \cite{zinkevich2007regret}. CFR$^{+}$ \cite{tammelin2014solving} changes one line. It clamps the accumulator
itself, $\regret^{t}\gets\max(\regret^{t},0)$ after \cref{eq:regret}, and
converges much faster in practice \cite{tammelin2015solving,burch2019revisiting}. DCFR \cite{brown2019solving} and PCFR$^{+}$
\cite{farina2021faster} (\cref{tab:convergence}) keep this loop and change only how
the two accumulators evolve. In systems terms, every rule is the same sparse computation each iteration: two
sweeps over a fixed irregular tree plus a segment-normalized update per information
set, with only the numbers changing. Players may be updated \emph{simultaneously}
from one strategy profile or \emph{alternately}, where the first player's update is
visible to the second's in the same sweep \cite{burch2019revisiting}. An alternating
iteration is one half-update per player and therefore roughly twice the work. GPU-CFR
supports both orders and both uniform and linear strategy averaging. The timing
matrix fixes these choices for matched comparisons (\cref{sec:setup}).

%% file: sections/4_method.tex
\section{Compiling CFR onto Accelerators}
\label{sec:method}

The design follows one observation: once the game is fixed, the entire structure of
a CFR iteration is known at build time, and only the numbers change. We therefore split the solver into a one-time \emph{compilation} of the game into
flat arrays (\cref{sec:compiled}) and a per-iteration \emph{dataflow} over those
arrays whose operation sequence never changes (\cref{sec:dataflow}). The fixed
sequence makes the iteration a legal target for CUDA-graph capture
(\cref{sec:cudagraph}). \Cref{fig:archpipeline} shows the pipeline on an 11-node
example game with a chance root, two decisions per player, and six terminals, and
\cref{fig:archdataflow} reuses the same game. A layered verification design checks
the compiled schedule, graph replay, and evaluator (\cref{sec:correctness}).

\input{sections/fig_arch_pipeline}
\input{sections/fig_arch_dataflow}
\subsection{Compiled game representation}
\label{sec:compiled}

Compilation walks the game once, numbering nodes in breadth-first order so that
every parent precedes its children, and emits:
\begin{itemize}
    \item \textbf{Node arrays} of length $\numnodes$: owner (player 1/2, chance,
    terminal), depth, and a terminal payoff vector.
    \item \textbf{Edge arrays} of length $\numedges$: for every edge, its parent node,
    child node, and \emph{slot}, the index of the corresponding (information set,
    action) pair in a flat slot space of size $\numslots = \sum_{I} |\acts(I)|$.
    Chance edges carry their fixed probability and no slot.
    \item \textbf{Information-set arrays} of length $\numinfosets$ and $\numslots$:
    per-set action counts and slot offsets, so regret matching operates on flat
    vectors with segment reductions and never touches a ragged structure.
\end{itemize}
All solver state (accumulated regrets $\regret \in \R^{\numslots}$, strategy sums
$\bar{s} \in \R^{\numslots}$, and the iteration scratch below) lives in
preallocated device tensors. On solver construction, root reach entries are set to
one and every buffer is written before it is read. The scratch footprint is
$(5\numnodes + \numslots + \numinfosets + 1)$ floats plus the integer index arrays.
Measured peak GPU memory on the largest game is 183\,MiB allocated and 236\,MiB
reserved, including the CUDA-graph pool and evaluation scratch (\cref{tab:memory}).
That is about 570 bytes per tree node. Float scratch, persistent slot vectors, and
the int64 edge-index arrays all scale linearly in $\numnodes$, $\numedges$, and
$\numslots$, with $\numedges \approx \numnodes$ in trees. The same code path runs
unchanged on CPU and GPU.

\subsection{One iteration as a fixed dataflow}
\label{sec:dataflow}

A CFR iteration is regret matching, a forward reach pass, a backward value pass with
instantaneous-regret accumulation, and the update rule
(\crefrange{eq:reach}{eq:rm}). \Cref{fig:archdataflow} shows the four phases as they
execute. Three structural observations turn the two tree passes into a handful of
batched tensor operations each. Throughout, ``overwrite'' semantics are available
because the game is a tree. Every non-root node has exactly one incoming edge, so a
scatter along edges writes each destination exactly once per iteration, and buffers
never need zeroing within a solve.

\paragraph{(a) Static chance folding.}
Chance behavior is strategy-independent, so for every node $h$ the product of chance
probabilities on the root-to-$h$ path, $\pi_c(h)$, is a build-time constant. We fold
it into a \emph{values template}
\begin{equation}
    \cfv_{\mathrm{tmpl}}(h) \;=\;
    \begin{cases}
        \pi_c(h)\, u(h) & h \in \termnodes,\\
        0 & \text{otherwise},
    \end{cases}
    \label{eq:template}
\end{equation}
from which the backward pass starts every iteration. The invariant is that chance probabilities appear \emph{only} through
\cref{eq:template}. During the dynamic passes, every chance edge acts as multiplier
$1$ (via the sentinel slot below). The backward recurrence
$\cfv(h) = \sum_{(h,a,h')} \strat_{\mathrm{ext}}(h,a)\, \cfv(h')$
then returns at each node $h$ the \emph{chance-weighted} continuation value
$\sum_{z \succeq h} \pi_c(z)\, \reach^{\strat}_{1,2}(h \to z)\, u(z)$. There is
no double counting, because each terminal's full-path chance product enters exactly
once, at the template. This removes every chance multiplication from the
per-iteration forward pass. Chance edges remain in the depth blocks only as sentinel
reads.

\paragraph{(b) Depth-level execution blocks.}
The forward pass must respect topology, since a child's reach depends on its
parent's. In a tree, every child sits exactly one depth level below its parent, so
grouping edges by \emph{parent depth} yields a valid schedule at the coarsest
granularity a depth-synchronous schedule allows. All edges at one depth execute as a
single batched gather, multiply, and scatter, and the number of sequential steps
equals the tree depth. Across our suite this is 4--15 blocks per pass
(\cref{tab:suite}). A generic topological schedule over (stage $\times$ depth)
intersections produced up to 100 blocks on the same trees. The backward pass uses the same blocks
in reverse with \texttt{index\_add} scatters, which tolerate repeated parents within
a block.

\paragraph{(c) Sentinel slot and dual-lane reach buffer.}
Player $i$'s reach lane $\reach_i$ must multiply $i$'s own strategy entries and copy
through everything else:
\begin{equation}
    \reach_i(h') =
    \begin{cases}
        \reach_i(h)\, \strat(\mathrm{slot}(h,a)) & (h,a,h') \text{ owned by } i,\\
        \reach_i(h) & \text{otherwise}.
    \end{cases}
    \label{eq:lanes}
\end{equation}
We implement \cref{eq:lanes} branchlessly. The strategy vector is extended by one \emph{sentinel} slot pinned to $1.0$. For
each edge and each lane, a precomputed index points either at the edge's true slot
(own-player edges) or at the sentinel (all others). The reach buffer holds both lanes contiguously ($2\numnodes$ entries), and
per-block index arrays over the doubled range advance both lanes with a single
gather, multiply, and scatter. The same trick removes the last per-edge branch from regret accumulation. For a
player-$i$ edge $(h, a)$ with slot $q = \mathrm{slot}(h,a)$, the instantaneous
regret contribution is
\begin{equation}
    r(q) \;\mathrel{+}=\; s(e)\,\reach_{-i}(h)\,\big(\cfv(h') - \cfv(h)\big),
    \label{eq:instregret}
\end{equation}
where $e=(h,a,h')$, $s(e)=+1$ for player 1 and $-1$ for player 2,
$\cfv$ stores player-1 values, and $\reach_{-i}(h)$ is read from the
opponent's lane through a precomputed index array. Summing \cref{eq:instregret} over an information-set action with a slot-indexed
\texttt{index\_add} yields $\sum_{h\in I}s(e)\reach_{-i}(h)(\cfv(h')-\cfv(h))$,
the standard counterfactual regret. Chance reach is already in $\cfv$ through
\cref{eq:template} and is not multiplied again.

Per-block index arrays $P_2, C_2, S_2$ are the precomputed dual-lane parent/child/slot
arrays of block $b$, covering every edge (chance and other-player entries point at the
sentinel). $P_{\mathrm{val}}$, $C_{\mathrm{val}}$, $S_{\mathrm{val}}$ cover every edge of the
backward value pass; $P_{\mathrm{reg}}$, $C_{\mathrm{reg}}$, $S_{\mathrm{reg}}$,
$O_{\mathrm{reg}}$
cover player decision edges only, with $O_{\mathrm{reg}}$ indexing the opponent-reach lane at
the parent and $M_{\mathrm{reg}}$ the acting player's own lane there, used by the
averaging update. Because $\cfv$ stores player-1 values, a build-time sign
$s(e) \in \{\pm 1\}$ orients each decision edge's value difference toward its
acting player; $s(b)$ gathers these signs for the block's decision edges.
Chance edges and sentinel slots never receive regret updates.
$\strat_{\mathrm{ext}}$ denotes the strategy vector with the sentinel entry fixed at $1$.

\begin{algorithm}[t]
\caption{One compiled CFR$^{+}$ iteration under simultaneous updates.}
\label{alg:iteration}
\begin{algorithmic}[1]
\State $\strat \gets \mathrm{regret\_matching}(\regret)$ \Comment{flat over $\numslots$ slots; segment-normalized}
\For{$b = 1, \dots, D$} \Comment{forward, both lanes at once, \cref{eq:lanes}}
    \State $\reach[C_2(b)] \gets \reach[P_2(b)] \cdot \strat_{\mathrm{ext}}[S_2(b)]$
\EndFor
\State $\cfv \gets \cfv_{\mathrm{tmpl}}$ \Comment{chance-weighted payoffs, \cref{eq:template}}
\For{$b = D, \dots, 1$} \Comment{backward values}
    \State $\cfv.\mathrm{index\_add}\big(P_{\mathrm{val}}(b),\; \strat_{\mathrm{ext}}[S_{\mathrm{val}}(b)] \cdot \cfv[C_{\mathrm{val}}(b)]\big)$
    \State $\regret.\mathrm{index\_add}\big(S_{\mathrm{reg}}(b),\; s(b) \cdot \reach[O_{\mathrm{reg}}(b)] \cdot (\cfv[C_{\mathrm{reg}}(b)] - \cfv[P_{\mathrm{reg}}(b)])\big)$ \Comment{\cref{eq:instregret}, decision edges only}\label{alg:regretline}
\EndFor
\State $\bar{s} \mathrel{+}= w_t \cdot \reach_{\mathrm{own}} \cdot \strat$;\qquad
       $\regret \gets \max(\regret, 0)$ \Comment{update rule (here: CFR$^{+}$)}
\end{algorithmic}
\end{algorithm}

\cref{alg:iteration} summarizes the iteration, where $w_t$ is the averaging weight.
Alternating updates \cite{burch2019revisiting} run its body twice per iteration,
once per player, and restrict the regret update and the regret matching$^{+}$ clamp
to the acting player's slots. The player order is static, so every index array and
the captured graph stay the same. \Cref{tab:opsteps} lists the realized operation
schedule. One realization detail: the regret scatter of line~\ref{alg:regretline}
is hoisted out of the level loop and applied once, flat over all decision edges after
the value pass. The average-strategy accumulation shares the same flat scatter. The
parent and child values it reads are final by then, so the flat scatter accumulates
the same summands in a different grouping. Only the forward and value recurrences contribute
to the depth-proportional term. The result is a fixed sequence of eight Aten
\cite{paszke2019pytorch} operations per depth level plus a constant part: 64--152
operations per iteration across the suite. A reference implementation of the same
algorithm with per-(stage, depth) blocks, masked chance handling, and per-iteration
allocations needs 110--1{,}742 (\cref{tab:suite}). The Aten-operation count is
exactly reproducible and independent of machine load. It therefore serves both as a
regression gate (\cref{sec:correctness}) and as a benchmarking instrument.

\input{sections/tab_opsteps}

\subsection{CUDA-graph execution}
\label{sec:cudagraph}

Even at 96 operations per iteration, a 434k-state game runs each GPU kernel for only
microseconds, so framework dispatch and kernel launches dominate wall-clock time. The
compiled dataflow executes eagerly at \eagerMsRange\,ms per iteration on an A100
across our suite, largely independent of tree size, the signature of a launch-bound
workload. Because the operation sequence, tensor shapes, and buffer
addresses are all fixed, the entire iteration qualifies for CUDA-graph capture
\cite{guide2020cuda}: record the kernel sequence once, then replay it with one graph
launch per iteration. CUDA graphs remove per-kernel launch and dispatcher overhead
and leave the kernels themselves unchanged, so the same kernels run back to back.
\Cref{app:profile} measures hand-fused kernels and \cref{app:compiler} lists the
compiler passes.

Three details make capture faithful to eager execution. First, CFR$^{+}$'s
linear-averaging weight $w_t = t$ changes every iteration, which a static graph
cannot see. We keep $w$ in a 0-dimensional device tensor that the captured body
increments in place after use. Before each replay batch the host refills it with the
exact starting index, so replay $k$ uses precisely $w = t_0 + k$ and matches eager
execution bit for bit. The counter is float32 in every reported run and represents
$w$ exactly up to $2^{24}$ iterations. Second, capture must be preceded by warmup
iterations on a side stream. In GPU-CFR these are genuine eager iterations that count
toward the iteration budget, and the accounting is preserved across arbitrarily split
solver calls. Third, replaying a graph whose buffers have been reallocated would
silently read stale memory. The solver stores the data pointers of all eight
persistent buffers at capture time and refuses to replay if any has changed. If
capture itself fails, for example on an unsupported driver, the solver warns once and
falls back to eager execution. The fallback runs the same kernels in the same order
and, within one process, computes the identical floats.

Graph replay cuts steady-state iteration time by \eagerGraphRange$\times$
depending on tree size (\cref{tab:matrix}, \cref{app:extra}). The two largest games
run at \graphMsTwoLargest\,ms per iteration. One-time capture cost
(0.037--0.301\,s) lands in the first solver call, and \cref{sec:results} reports
steady-state and end-to-end numbers separately.

\subsection{Layered verification}
\label{sec:correctness}

Verification proceeds at three levels. First, graph replay and compiled eager
execution run the same kernels in the same order and agree bitwise over 30 iterations
within one process.
Second, the compiled dataflow is compared with the pre-optimization implementation.
The two agree bitwise in float32 after 22 iterations on all eight benchmark games
(\cref{tab:suite}; CPU, fixed reduction order) and to $10^{-12}$ in float64 over 30
iterations. Third, independently written implementations provide tolerance-level
checks on values and exploitability.

GPU scatter reductions can accumulate in different orders across independent
processes. Near an indifferent action a last-bit difference can separate subsequent
trajectories: on the HUNL river subgame, 4 of 20 identical runs reach a second
exploitability cluster $0.6\%$ away. \Cref{app:correctness} traces this to
reduction order and quantifies its effect. The verification layers are:
\begin{enumerate}
    \item \textbf{Reference-oracle parity.} A complete copy of the solver from
    before optimization is preserved inside the test suite and compared as above,
    across $\{$CFR, CFR$^{+}\} \times \{$uniform, linear$\}$ averaging.
    \item \textbf{Independent reference solver.} A per-node Python implementation,
    sharing no code with the compiled path, must agree to $10^{-9}$ on the poker
    subgames.
    \item \textbf{Best-response oracles and analytic anchors.} An exact recursive best-response oracle \cite{johanson2011accelerating}, checked
    by brute-force strategy enumeration on small instances, validates the evaluator.
    Analytic cases pin absolute values.
    \item \textbf{Operation-count regression gates.} The per-iteration count of
    Aten operations is asserted against the affine schedule size $c_1 + c_2 D$ of
    \cref{sec:theory}, and graph replay is compared with eager execution over 30
    iterations. These deterministic checks complement wall-clock measurements.
\end{enumerate}
The full test inventory (499 tests) appears in \cref{app:correctness}.

%% file: sections/fig_arch_pipeline.tex
\begin{figure}[!htb]
\begin{minipage}[b]{0.50\linewidth}\centering
\resizebox{\linewidth}{!}{%
\begin{tikzpicture}[
    font=\sffamily\small,
    dec1/.style={circle, draw=teal!60!black, fill=teal!45, minimum size=3.4mm, inner sep=0pt},
    dec2/.style={circle, draw=violet!60!black, fill=violet!40, minimum size=3.4mm, inner sep=0pt},
    chn/.style={diamond, draw=black!65, fill=black!22, minimum size=4.2mm, inner sep=0pt},
    ter/.style={rectangle, draw=black!70, fill=black!55, minimum size=2.5mm, inner sep=0pt},
    edge/.style={black!55, line width=0.9pt},
    flow/.style={-{Stealth[length=2.2mm]}, line width=1.0pt, black!60},
    fold/.style={-{Stealth[length=1.8mm]}, densely dashed, line width=0.8pt, black!55},
    mapline/.style={line width=1.0pt, opacity=0.80, shorten <=1.4pt},
    tick/.style={orange!80!black, line width=1.0pt},
    gtick/.style={orange!95!black, line width=1.6pt},
    note/.style={font=\sffamily\small, text=black!55},
    lanelabel/.style={font=\sffamily\small\bfseries, text=black!60}
]
\foreach \y in {0, -1.7}
  \fill[blue!6] (0.10,\y+0.33) rectangle (7.95,\y-0.33);
\node[chn]  (c0)  at (4.60, 0.00) {};
\node[dec1] (p1a) at (2.60,-0.85) {};
\node[dec1] (p1b) at (6.60,-0.85) {};
\node[dec2] (q2a) at (1.70,-1.70) {};
\node[ter]  (t2a) at (3.50,-1.70) {};
\node[dec2] (q2b) at (5.70,-1.70) {};
\node[ter]  (t2b) at (7.60,-1.70) {};
\node[ter]  (t3a) at (1.30,-2.55) {};
\node[ter]  (t3b) at (2.30,-2.55) {};
\node[ter]  (t3c) at (5.20,-2.55) {};
\node[ter]  (t3d) at (6.20,-2.55) {};
\draw[edge, densely dashed] (c0) -- (p1a);
\draw[edge, densely dashed] (c0) -- (p1b);
\draw[edge] (p1a) -- (q2a); \draw[edge] (p1a) -- (t2a);
\draw[edge] (p1b) -- (q2b); \draw[edge] (p1b) -- (t2b);
\draw[edge] (q2a) -- (t3a); \draw[edge] (q2a) -- (t3b);
\draw[edge] (q2b) -- (t3c); \draw[edge] (q2b) -- (t3d);
\node[chn, minimum size=3.2mm] (legchance) at (0.55,0.62) {};
\node[note, anchor=west] at (0.98,0.62) {chance};
\node[dec1, minimum size=3.0mm] (legp1) at (2.30,0.62) {};
\node[note, anchor=west] at (2.73,0.62) {P1};
\node[dec2, minimum size=3.0mm] (legp2) at (3.48,0.62) {};
\node[note, anchor=west] at (3.91,0.62) {P2};
\node[ter, minimum size=2.4mm] (legterm) at (4.70,0.62) {};
\node[note, anchor=west] at (5.13,0.62) {terminal node};

\def\cw{0.62}
\def\sx{0.72}
\def\slotb{-4.17}   
\def\tmplb{-5.87}   
\foreach \n/\i/\c in {c0/0/black!50, p1a/1/teal!75, p1b/2/teal!75,
                      q2a/3/violet!75, t2a/4/black!50, q2b/5/violet!75,
                      t2b/6/black!50, t3a/7/black!50, t3b/8/black!50,
                      t3c/9/black!50, t3d/10/black!50}
  \draw[mapline, \c] (\n.south) -- (\n.south |- 0,-2.85)
    .. controls (\n.south |- 0,-3.25) and ({\sx+(\i+0.5)*\cw},-3.10) ..
    ({\sx+(\i+0.5)*\cw},\slotb+\cw);
\node[note, align=left, anchor=west] at (0.00,-3.02) {compile\\walk once};
\node[note, anchor=east] at (\sx-0.10,\slotb+0.5*\cw) {nodes};
\foreach [count=\i from 0] \c in
  {black!22, teal!45, teal!45, violet!40, black!55, violet!40, black!55,
   black!55, black!55, black!55, black!55}{
  \draw[black!60, fill=\c] ({\sx+\i*\cw},\slotb) rectangle ++(\cw,\cw);
}
\draw[black!60, densely dashed, fill=white, fill opacity=0.55]
  (\sx,\slotb) rectangle ++(\cw,\cw);
\foreach \s in {0, 1, 3, 7, 11}
  \draw[black!85, line width=1.2pt] ({\sx+\s*\cw},\slotb-0.05) -- ++(0,\cw+0.10);
\foreach \d/\a/\b in {0/0/1, 1/1/3, 2/3/7, 3/7/11}
  \node[note] at ({\sx+(\a+\b)*\cw/2},\slotb-0.26) {$d{=}\d$};
\foreach \a/\b in {1/3, 3/7, 7/11}
  \draw[flow, line width=0.9pt, -{Stealth[length=1.8mm]}]
    ({\sx+\a*\cw+0.06},\slotb-0.56) -- ({\sx+\b*\cw-0.06},\slotb-0.56);
\node[note] at ({\sx+9*\cw},\slotb-0.86) {$D$ blocks};
\node[note, anchor=east] at (\sx-0.10,\tmplb+0.5*\cw) {$\cfv_{\mathrm{tmpl}}$};
\foreach [count=\i from 0] \c in
  {white, white, white, white, black!55, white, black!55,
   black!55, black!55, black!55, black!55}{
  \draw[black!60, fill=\c] ({\sx+\i*\cw},\tmplb) rectangle ++(\cw,\cw);
}
\foreach \s in {0, 1, 3, 7, 11}
  \draw[black!85, line width=1.2pt] ({\sx+\s*\cw},\tmplb-0.05) -- ++(0,\cw+0.10);
\draw[fold] ({\sx+0.04},\slotb)
  .. controls (0.28,-4.60) and (0.28,-4.95) .. ({\sx+0.08},\tmplb+\cw)
  node[note, pos=0.5, left=0.6mm] {fold};

\begin{scope}[yshift=-0.10cm]
\def\wzero{1.66}
\def\itw{1.62}
\def\itg{0.14}
\foreach \i/\f in {0/black!5, 1/black!0, 2/black!5}
  \fill[\f] ({\wzero+\i*(\itw+\itg)},-7.30) rectangle ++(\itw,-1.15);
\node[note] at ({\wzero+0.5*\itw},-7.46) {one iteration};
\node[note, anchor=east] at (0.72,-7.78) {eager};
\draw[black!35] (\wzero,-7.78) -- ({\wzero+3*\itw+2*\itg},-7.78);
\foreach \i in {0,1,2}{
  \foreach \k in {0,...,8}{
    \draw[tick] ({\wzero+0.18+\i*(\itw+\itg)+\k*0.17},-7.65) -- ++(0,-0.26);
  }
}
\node[note, anchor=east] at (0.72,-8.25) {graph};
\draw[black!35] (\wzero,-8.25) -- ({\wzero+3*\itw+2*\itg},-8.25);
\node[draw=orange!80!black, fill=orange!25, rounded corners=1.5pt,
      minimum height=3.4mm, inner sep=1.6pt, anchor=east]
  at ({\wzero-0.08},-8.25) {record};
\foreach \i in {0,1,2}{
  \draw[gtick] ({\wzero+0.18+\i*(\itw+\itg)},-8.06) -- ++(0,-0.38);
}
\draw[flow, line width=0.9pt]
  ({\wzero+3*\itw+2*\itg},-8.25) -- ++(0.26,0);
\node[note, align=left, anchor=west] at (7.10,-8.25) {replay\\$\times T$};
\draw[-{Stealth[length=1.6mm]}, black!40] (5.50,-8.65) -- (6.70,-8.65)
  node[note, pos=0.5, above=0.2mm] {time};
\end{scope}

\begin{scope}[on background layer]
\node[draw=blue!45!black!30, rounded corners=3pt, inner sep=1.2mm, fill=blue!3,
      fit={(-0.22,0.88) (8.14,-6.35)}] (buildlane) {};
\node[draw=orange!70!black!30, rounded corners=3pt, inner sep=1.2mm, fill=orange!4,
      fit={(-0.22,-7.15) (8.14,-9.00)}] (runlane) {};
\end{scope}
\node[lanelabel, anchor=south west] at (buildlane.north west)
  {Build: once per game};
\node[lanelabel, anchor=south west] at (runlane.north west)
  {Run: every iteration};
\end{tikzpicture}}
\caption{The compilation pipeline on an 11-node example game. A build-time walk
flattens the tree into the depth-segmented node array and values template
(colored lines trace each node to its cell); at run time, eager execution
submits $c_1 + c_2 D$ framework operations per iteration (each one or more kernel
launches), graph replay one graph launch.}
\Description{Two-lane diagram. The build-time lane shows a small game tree with
a diamond chance root, colored decision nodes, and square terminals over
alternating depth bands; thick colored lines funnel every node into its
breadth-first cell of a flat array strip divided into depth segments, above a
values-template strip whose terminal cells are pre-filled; a short dashed arrow
folds the chance slot into that template. The run-time lane shows two aligned
timelines over three shaded iteration windows: the eager row has a dense comb
of launch ticks in every window, while the graph row has one record block
followed by a single tick per window and an arrow labeled replay times T.}
\label{fig:archpipeline}
\end{minipage}\hfill%

%% file: sections/fig_arch_dataflow.tex
\begin{minipage}[b]{0.46\linewidth}\centering
\resizebox{\linewidth}{!}{%
\begin{tikzpicture}[
    font=\sffamily\small,
    flow/.style={-{Stealth[length=2.6mm]}, line width=1.4pt, black!55},
    arr/.style={-{Stealth[length=1.6mm]}, line width=0.9pt, black!60},
    carc/.style={-{Stealth[length=1.6mm]}, line width=1.0pt, black!50},
    tarc/.style={-{Stealth[length=1.6mm]}, line width=1.0pt, teal!70!black},
    varc/.style={-{Stealth[length=1.6mm]}, line width=1.0pt, violet!70!black},
    sarr/.style={-{Stealth[length=1.5mm]}, line width=0.9pt, orange!75!black},
    note/.style={font=\sffamily\small, text=black!55},
    phaselabel/.style={font=\sffamily\small\bfseries, text=black!65},
    badge/.style={circle, fill=orange!85!black, text=white,
                  font=\sffamily\small\bfseries, inner sep=1.0pt}
]
\def\cw{0.58}
\def\sx{0.82}
\newcommand{\slotstrip}[3]{
  \foreach \i in {0,...,7}
    \draw[black!60, fill=#3] ({#1+\i*\cw},#2) rectangle ++(\cw,\cw);
}
\newcommand{\nodestrip}[3]{
  \foreach [count=\i from 0] \c in
    {#3, #3, #3, #3, black!55, #3, black!55, black!55, black!55, black!55, black!55}
    \draw[black!60, fill=\c] ({#1+\i*\cw},#2) rectangle ++(\cw,\cw);
  \foreach \s in {0, 1, 3, 7, 11}
    \draw[black!85, line width=1.0pt] ({#1+\s*\cw},#2-0.03) -- ++(0,\cw+0.06);
}
\newcommand{\dualstrip}[2]{
  \foreach \i in {0,...,10}{
    \fill[violet!25] ({#1+\i*\cw},#2) rectangle ++(\cw,0.5*\cw);
    \fill[teal!28]   ({#1+\i*\cw},{#2+0.5*\cw}) rectangle ++(\cw,0.5*\cw);
    \draw[black!60]  ({#1+\i*\cw},#2) rectangle ++(\cw,\cw);
  }
  \foreach \s in {0, 1, 3, 7, 11}
    \draw[black!85, line width=1.0pt] ({#1+\s*\cw},#2-0.03) -- ++(0,\cw+0.06);
}

\node[badge] at (1.00,13.30) {1};
\node[phaselabel, anchor=west] at (1.22,13.30) {regret matching};
\node[note, anchor=east] at (\sx-0.10,12.74) {$\regret$};
\slotstrip{\sx}{12.45}{orange!30}
\node[note, anchor=east] at (\sx-0.10,11.64) {$\strat_{\mathrm{ext}}$};
\slotstrip{\sx}{11.35}{green!25}
\draw[black!60, fill=black!45] ({\sx+8*\cw},11.35) rectangle ++(\cw,\cw);
\node[note] at ({\sx+8.5*\cw},11.13) {$\bot$};
\foreach \i in {0,...,7}
  \draw[arr] ({\sx+(\i+0.5)*\cw},12.40) -- ({\sx+(\i+0.5)*\cw},11.98);
\node[note, anchor=west] at ({\sx+8.3*\cw},12.19) {RM$^{+}$};
\draw[flow] (3.25,11.18) -- (3.25,10.78);

\node[badge] at (1.00,10.66) {2};
\node[phaselabel, anchor=west] at (1.22,10.66) {forward reach $\times D$};
\node[note, anchor=east] at (\sx-0.10,9.31) {$\reach_{1}$};
\node[note, anchor=east] at (\sx-0.10,9.00) {$\reach_{2}$};
\dualstrip{\sx}{8.85}
\newcommand{\fwdarc}[6]{
  \draw[#5] ({\sx+(#1+#2)*\cw},9.49)
    .. controls ({\sx+(#1+#2)*\cw},9.49+#4) and ({\sx+(#3+0.5)*\cw},9.49+#4) ..
    ({\sx+(#3+0.5)*\cw},9.49);
  \fill[#6, draw=black!70, line width=0.5pt]
    ({(\sx+(#1+#2)*\cw)/2+(\sx+(#3+0.5)*\cw)/2-0.08},{9.49+0.75*#4-0.08})
    rectangle ++(0.16,0.16);
}
\fwdarc{0}{0.35}{1}{0.45}{carc}{black!45}
\fwdarc{0}{0.65}{2}{0.60}{carc}{black!45}
\fwdarc{1}{0.35}{3}{0.45}{tarc}{green!45}
\fwdarc{1}{0.65}{4}{0.60}{tarc}{green!45}
\fwdarc{2}{0.35}{5}{0.75}{tarc}{green!45}
\fwdarc{2}{0.65}{6}{0.90}{tarc}{green!45}
\fwdarc{3}{0.35}{7}{0.45}{varc}{green!45}
\fwdarc{3}{0.65}{8}{0.60}{varc}{green!45}
\fwdarc{5}{0.35}{9}{0.75}{varc}{green!45}
\fwdarc{5}{0.65}{10}{0.90}{varc}{green!45}
\draw[flow] (3.25,8.68) -- (3.25,8.28);

\node[badge] at (1.00,8.05) {3};
\node[phaselabel, anchor=west] at (1.22,8.05) {backward values $\times D$};
\node[note, anchor=east] at (\sx-0.10,7.54) {$\cfv_{\mathrm{tmpl}}$};
\nodestrip{\sx}{7.25}{white}
\node[note, anchor=east] at (\sx-0.10,6.54) {$\cfv$};
\nodestrip{\sx}{6.25}{gray!15}
\draw[arr] ({\sx+4.5*\cw},7.20) -- node[note, right=0.4mm] {copy} ({\sx+4.5*\cw},6.88);
\newcommand{\bwdarc}[6]{
  \draw[#5] ({\sx+(#1+0.5)*\cw},6.21)
    .. controls ({\sx+(#1+0.5)*\cw},6.21-#4) and ({\sx+(#2+#3)*\cw},6.21-#4) ..
    ({\sx+(#2+#3)*\cw},6.21);
  \fill[#6, draw=black!70, line width=0.5pt]
    ({(\sx+(#1+0.5)*\cw)/2+(\sx+(#2+#3)*\cw)/2-0.08},{6.21-0.75*#4-0.08})
    rectangle ++(0.16,0.16);
}
\bwdarc{1}{0}{0.35}{0.45}{carc}{black!45}
\bwdarc{2}{0}{0.65}{0.60}{carc}{black!45}
\bwdarc{3}{1}{0.35}{0.45}{tarc}{green!45}
\bwdarc{4}{1}{0.65}{0.60}{tarc}{green!45}
\bwdarc{5}{2}{0.35}{0.75}{tarc}{green!45}
\bwdarc{6}{2}{0.65}{0.90}{tarc}{green!45}
\bwdarc{7}{3}{0.35}{0.45}{varc}{green!45}
\bwdarc{8}{3}{0.65}{0.60}{varc}{green!45}
\bwdarc{9}{5}{0.35}{0.75}{varc}{green!45}
\bwdarc{10}{5}{0.65}{0.90}{varc}{green!45}
\draw[flow] (3.25,5.00) -- (3.25,4.68);

\node[badge] at (1.00,4.50) {4};
\node[phaselabel, anchor=west] at (1.22,4.50) {regret update, flat};
\node[note, anchor=east] at (\sx-0.10,3.99) {$\regret$};
\slotstrip{\sx}{3.70}{orange!30}
\node[note, anchor=east] at (\sx-0.10,2.59) {$\cfv$};
\nodestrip{\sx}{2.30}{gray!15}
\foreach \c/\s in {3/0, 4/1, 5/2, 6/3, 7/4, 8/5, 9/6, 10/7}
  \draw[sarr] ({\sx+(\c+0.5)*\cw},2.92)
    to[bend left=12] ({\sx+(\s+0.5)*\cw},3.66);

\draw[flow] (-0.30,4.30) -- (-0.30,12.50)
  node[note, midway, rotate=90, above=0.6mm] {next iteration};
\begin{scope}[on background layer]
\node[draw=black!30, rounded corners=4pt, fill=black!2, inner sep=1.2mm,
      fit={(-0.62,13.50) (7.38,1.92)}] (graphframe) {};
\end{scope}
\node[note, anchor=south east] at ($(graphframe.south east)+(-0.12,-0.02)$)
  {one CUDA graph};
\end{tikzpicture}}
\caption{One compiled CFR$^{+}$ iteration on the example game of
\cref{fig:archpipeline}, stacked in execution order inside the captured CUDA
graph (rounded frame). Arcs are colored by the acting owner of their edge (gray
chance, teal P1, violet P2); each apex square is the gathered strategy factor
(green) or the sentinel (dark).}
\Description{Four phases stacked top to bottom inside one rounded frame labeled
one CUDA graph. Phase one shows eight parallel arrows from an orange regret
strip to a green strategy strip with an extra dark sentinel cell that receives
no arrow. Phase two shows a reach strip of split teal and violet cells with
staggered arcs from parent cells to child cells, colored gray, teal, or violet
by the acting owner; each arc carries a small green square for a strategy
factor or a dark square for the sentinel. Phase three shows a template strip
copied into a value strip, with the same colored arcs folding child cells back
into parent cells. Phase four shows a gently curved fan of orange arrows
scattering child-value cells into the regret strip. A thick upward arrow on the
left labeled next iteration closes the cycle.}
\label{fig:archdataflow}
\end{minipage}
\end{figure}

%% file: sections/tab_opsteps.tex
\begin{table}[!htb]
\centering
\caption{The realized operation schedule of one CFR$^{+}$ iteration
(\cref{fig:archdataflow}). $\mathrm{ia}$ abbreviates \texttt{index\_add},
$I(q)$ is the information set owning slot $q$, and the build-time sign
$s \in \{\pm 1\}$ orients each edge's regret toward its acting player. Only the
two per-level rows repeat with depth.}
\label{tab:opsteps}
{
\setlength{\tabcolsep}{3pt}
\begin{tabular}{@{}llr@{}}
\toprule
Realized operation & Data & Ops \\
\midrule
\multicolumn{3}{@{}l}{\emph{Regret matching (constant)}} \\
$\strat^{+} \gets \mathrm{clamp}(\regret, 0)$ & $[\numslots]$ & 1 \\
$\mathrm{tot} \gets 0$;\; $\mathrm{tot}.\mathrm{ia}(I(q), \strat^{+})$ & $[\numslots] {\to} [\numinfosets]$ & 2 \\
gather $\mathrm{tot}$ back to slots & $[\numinfosets] \to [\numslots]$ & 1 \\
$m \gets \mathrm{tot} > 0$;\; guard divisor via $\mathrm{where}$ & $[\numslots]$ & 2 \\
$\strat \gets \mathrm{where}(m, \strat^{+}/\mathrm{tot}, \mathrm{unif})$ & $[\numslots]$ & 2 \\
$\strat_{\mathrm{ext}}[{:}\numslots] \gets \strat$ \; (sentinel stays $1$) & $[\numslots{+}1]$ & 1 \\
\midrule
\multicolumn{3}{@{}l}{\emph{Forward reach, per level $\ell = 1, \dots, D$}} \\
$\reach[C_2(\ell)] \gets \reach[P_2(\ell)] \cdot \strat_{\mathrm{ext}}[S_2(\ell)]$ \; (both lanes) & $[2\numnodes]$ & $4 \cdot D$ \\
\midrule
\multicolumn{3}{@{}l}{\emph{Backward values, per level $\ell = D, \dots, 1$}} \\
$\cfv \gets \cfv_{\mathrm{tmpl}}$ \; (chance folded at build) & $[\numnodes]$ & 1 \\
$\cfv.\mathrm{ia}\big(P(\ell), \strat_{\mathrm{ext}}[S(\ell)] \cdot \cfv[C(\ell)]\big)$ & $[\numnodes]$ & $4 \cdot D$ \\
\midrule
\multicolumn{3}{@{}l}{\emph{Regret and average over decision edges (constant)}} \\
$\Delta \gets s \cdot (\cfv[C_{\mathrm{reg}}] - \cfv[P_{\mathrm{reg}}])$ & decision edges & 4 \\
$\regret.\mathrm{ia}\big(S_{\mathrm{reg}}, \reach[O_{\mathrm{reg}}] \cdot \Delta\big)$ & ${\to} [\numslots]$ & 3 \\
$\bar{s}.\mathrm{ia}\big(S_{\mathrm{reg}}, (w \cdot \reach[M_{\mathrm{reg}}] \cdot \strat_{\mathrm{ext}}[S_{\mathrm{reg}}])_{\mathrm{f64}}\big)$ & ${\to} [\numslots]$ & 6 \\
$\regret \gets \max(\regret, 0)$ \; (CFR$^{+}$) & $[\numslots]$ & 1 \\
$w \mathrel{{+}{=}} 1$ \; (inside the captured body) & scalar & 1 \\
\bottomrule
\end{tabular}}
\end{table}

%% file: sections/4b_theory.tex
\section{The Compiled Iteration, Formally}
\label{sec:theory}

This section states the per-iteration dataflow as a composition of three indexed
operators and establishes four properties of it. The folded chance representation
computes the same values as an explicit chance chain. The branchless dual-lane update
computes \cref{eq:lanes} exactly. The depth schedule is the shortest among schedules
that execute edges in dependency-respecting groups. Unrescaled linear averaging has a
representability horizon that fixes the accumulator precision. Proofs are in
\cref{app:proofs}.

\paragraph{Three indexed operators.}
\label{sec:operators}

Fix a compiled game with node set $\nodes$, $|\nodes| = \numnodes$, and edge set
indexed $1, \dots, \numedges$. Write $p(e), c(e)$ for the parent and child of edge
$e$ and $q(e)$ for its slot. Let $d(h)$ be the depth of node $h$ and
\begin{equation}
    B_\ell = \{\, e : d(p(e)) = \ell \,\},
    \quad
    \mathcal{D} = \{\, \ell : B_\ell \neq \emptyset \,\},
    \quad
    D = |\mathcal{D}|,
    \label{eq:blocks}
\end{equation}
so the blocks $\{B_\ell\}_{\ell \in \mathcal{D}}$ partition the edges by parent depth.
Let $\strat_{\mathrm{ext}} \in \R^{\numslots + 1}$ be the strategy vector extended by a
sentinel coordinate pinned at $\strat_{\mathrm{ext}}(\bot) = 1$, and for a lane
$i \in \{1,2\}$ define the compiled slot map
\begin{equation}
    q_i(e) \;=\;
    \begin{cases}
        q(e) & \text{edge } e \text{ is a decision edge of player } i,\\
        \bot & \text{otherwise (opponent or chance edge)},
    \end{cases}
    \label{eq:slotmap}
\end{equation}
which is a build-time constant. Its lane-free analogue $q_{\mathrm{val}}(e)$
($q(e)$ on decision edges, $\bot$ on chance edges) drives the backward pass.
Three operators then generate the iteration. For a
block $B_\ell$, a lane $i$, and vectors $x \in \R^{\numnodes}$, $y \in \R^{\numslots+1}$:
\begin{align}
    \fwd_{\ell,i}(x, y)(h)
        &=
        \begin{cases}
            x(p(e))\, y(q_i(e)) & h = c(e),\; e \in B_\ell,\\
            x(h) & \text{otherwise},
        \end{cases}
        \label{eq:opfwd} \\[2pt]
    \bwd_{\ell}(x, y)(h)
        &= x(h) + \sum_{e \in B_\ell \,:\, p(e) = h} y(q_{\mathrm{val}}(e))\, x(c(e)),
        \label{eq:opbwd}
\end{align}
\begin{equation}
    \acc_{\ell}(r, x, \rho)(q)
        = r(q)
        + \sum_{\substack{e \in B_\ell \,:\, q(e) = q \\ e \text{ a decision edge}}}\!\!
        s(e)\,\rho_{-i(e)}(p(e))\,\big(x(c(e)) - x(p(e))\big).
        \label{eq:opacc}
\end{equation}
Here $i(e)$ is the decision-edge owner, $s(e)=+1$ for player 1 and $-1$
for player 2, and $\rho_{-i(e)}$ selects the opponent's reach lane. The vector
$x$ holds chance-folded player-1 values, so no further chance factor is used.
Each operator uses gathers, elementwise arithmetic, and a scatter over $|B_\ell|$
entries, which is how \cref{alg:iteration} realizes them: \cref{eq:opfwd} as an
indexed assignment, \cref{eq:opbwd,eq:opacc} as \texttt{index\_add}. The forward pass is
$\fwd$ composed over $\ell \in \mathcal{D}$ in increasing depth, the backward pass is
$\bwd$ and $\acc$ composed in decreasing depth, and \cref{eq:opfwd} runs on the doubled
index range so both lanes advance in one call.

The forward pass assigns in \cref{eq:opfwd} where the backward pass accumulates.
Assignment is what frees it of buffer clearing. In a tree each non-root node is the
child of exactly one edge, so within a block every destination is written exactly
once and stale contents are unreachable. The backward operators accumulate instead, because a parent has several
children per block.

\paragraph{What the operators compute.}

\begin{proposition}[Folded chance is exact]
\label{prop:chance}
Initialize $\cfv \gets \cfv_{\mathrm{tmpl}}$ as in \cref{eq:template} and apply
$\bwd_\ell$ for $\ell \in \mathcal{D}$ in decreasing order, with every chance edge
assigned multiplier $1$ through $q_{\mathrm{val}}$. Then on termination, for every node
$h$,
\begin{equation}
    \cfv(h) \;=\; \sum_{z \succeq h,\; z \in \termnodes}
        \pi_c(h \!\to\! z)\, \pi_c(h)\, \reach^{\strat}_{1,2}(h \!\to\! z)\, u(z),
    \label{eq:chanceexact}
\end{equation}
that is, the chance-weighted continuation value at $h$. In particular at the root
$\cfv$ equals the expected payoff under $\strat$, and each terminal's chance product
$\pi_c(z)$ enters exactly once.
\end{proposition}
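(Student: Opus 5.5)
I will argue by backward induction on depth, proving a per-node invariant. After all blocks $B_\ell$ with $\ell \ge d(h)$ have been applied, and in fact from the moment the block $B_{d(h)}$ finishes, the entry $\cfv(h)$ equals the right-hand side of \cref{eq:chanceexact}. Write $V(h)$ for that right-hand side. Note that $\pi_c(h\to z)\,\pi_c(h) = \pi_c(z)$, because chance probabilities multiply along the unique root-to-$z$ path. So $V(h) = \sum_{z\succeq h} \pi_c(z)\,\reach^{\strat}_{1,2}(h\to z)\,u(z)$.

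The first step is the bookkeeping of \emph{when} an entry stops changing. By \cref{eq:opbwd}, $\bwd_\ell$ modifies only the entries $h$ with $d(h)=\ell$, since only parents of edges in $B_\ell$ are touched. It reads $x(c(e))$ only at depth $\ell+1$. Within one block, the updates to distinct parents read disjoint children and never read depth-$\ell$ entries. Hence applying $\bwd_\ell$ as one batched operator, as in \cref{alg:iteration}, coincides with the sequential per-edge recurrence. Blocks are processed in decreasing $\ell$, and depth $\ell+1$ entries are written only by $B_{\ell+1}$. Therefore, when $B_\ell$ runs, every child entry already holds its final value. After $B_\ell$, no later block writes a depth-$\ell$ entry, so that value is final too. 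Depths $\ell\notin\mathcal{D}$ contain only terminal nodes, or no nodes at all, so nothing is lost by skipping them.

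The second step is the induction. \emph{Base:} a terminal $z$ is never a parent, so it keeps $\cfv_{\mathrm{tmpl}}(z)=\pi_c(z)u(z)=V(z)$. Here $\reach_{1,2}(z\to z)=1$ and $\pi_c(z\to z)=1$. \emph{Step:} for a non-terminal $h$, the template gives $0$. After $B_{d(h)}$ we have $\cfv(h)=\sum_{e:p(e)=h} y(q_{\mathrm{val}}(e))\,V(c(e))$. There are two cases.
\begin{itemize}
\item If $h$ is a decision node, $y(q_{\mathrm{val}}(e))=\strat(I(h),a)$. Then $\strat(I(h),a)\,\reach_{1,2}(h\cdot a\to z)=\reach_{1,2}(h\to z)$, and we obtain $V(h)$.
\item If $h$ is a chance node, the multiplier is the sentinel $1$. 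In addition, $\reach_{1,2}(h\to z)=\reach_{1,2}(h\cdot a\to z)$, since chance contributes nothing to the player reaches.
\end{itemize}
In both cases $\sum_a V(h\cdot a)=V(h)$, because the terminal sets $\{z\succeq h\cdot a\}$ partition $\{z\succeq h\}$ (tree property).

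The conclusions follow directly. At the root, $\pi_c(\varnothing)=1$, so $V(\varnothing)=\sum_z \pi_c(z)\reach_1(z)\reach_2(z)u(z)=\sum_z\reach^{\strat}(z)u(z)$, the expected payoff. For "exactly once", observe that each $\pi_c(z)$ enters only through the template entry of $z$. Every other multiplier along the propagation path is either a player strategy or the sentinel $1$. Unrolling the recurrence therefore expresses $\cfv(h)$ as a sum over root-to-terminal subpaths with a single chance factor each.

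The main obstacle is the first step: justifying that batched \texttt{index\_add} with repeated parents in a block, together with the accumulate-onto-template semantics, matches the sequential recurrence and never reads a stale or partially updated entry. I would handle it with the depth-locality argument above, using that the template entry of a non-terminal is $0$. This keeps the accumulation from adding spurious mass; the template is written before the backward pass each iteration, so the zeros are fresh.
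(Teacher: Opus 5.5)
Your proof is correct and follows essentially the same route as the paper's: induction from the leaves upward (you index by depth, the paper by height), the template as the base case, and a two-case step in which chance edges contribute the sentinel multiplier $1$ and decision edges contribute $\sigma(q(e))$, closed by the fact that the terminal sets below the children of $h$ partition those below $h$. Two differences are cosmetic. Rewriting $\pi_c(h\!\to\!z)\,\pi_c(h)=\pi_c(z)$ up front makes the chance case immediate, whereas the paper tracks the factor $\gamma(e)$ moving between the two arguments. Your depth-locality paragraph spells out the batched-\texttt{index\_add} justification that the paper compresses into one sentence.
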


\Cref{prop:chance} says that folding saves work and leaves the computed values
unchanged. The chance chain is deleted from the dynamic pass and reappears only as
the build-time constant multiplying $u(z)$ in \cref{eq:template}.

\begin{proposition}[The dual-lane update is exact and branch-free]
\label{prop:lanes}
For every lane $i$, block $\ell$, and edge $e \in B_\ell$, the operator $\fwd_{\ell,i}$
of \cref{eq:opfwd} evaluated with the compiled slot map \cref{eq:slotmap} and
$y = \strat_{\mathrm{ext}}$ satisfies the case definition of \cref{eq:lanes} at
$c(e)$. The slot map is resolved at build time, so the per-iteration cost of the case
distinction is zero and no per-edge predicate is evaluated during execution.
\end{proposition}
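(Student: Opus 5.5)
The plan is a direct case analysis on the edge, after first fixing what the lane vector $x$ holds when $\fwd_{\ell,i}$ is applied. Fix a lane $i$, a block $\ell \in \mathcal{D}$ and an edge $e \in B_\ell$, with $h = p(e)$ and $h' = c(e)$. The forward pass applies the blocks in increasing depth. I will argue by induction on $\ell$ that when $\fwd_{\ell,i}$ is applied, $x(h)$ already equals the final lane value $\reach_i(h)$ for every $h$ at depth $\ell$.

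For the induction step, note that such an $h$ is either the root, initialized to one at solver construction, or the child $c(e')$ of an edge $e'$ at parent depth $\ell-1$. That child is written by $\fwd_{\ell-1,i}$, and no later block rewrites it. The reason is that in a tree every non-root node is the child of exactly one edge, and blocks with index at least $\ell$ only write nodes at depth at least $\ell+1$. This tree-uniqueness fact also shows that $h'$ is the destination of exactly one edge of $B_\ell$. So the first branch of \cref{eq:opfwd} is well defined: there is no ambiguity among writers, and the result is independent of scatter order.

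With this in place, I evaluate $\fwd_{\ell,i}(x,\strat_{\mathrm{ext}})(h') = x(h)\,\strat_{\mathrm{ext}}(q_i(e))$ and split according to \cref{eq:slotmap}.
\begin{itemize}
\item If $e$ is a decision edge of player $i$, then $q_i(e) = q(e) = \mathrm{slot}(h,a)$. Since $\strat_{\mathrm{ext}}$ agrees with $\strat$ on the first $\numslots$ coordinates, the value is $\reach_i(h)\,\strat(\mathrm{slot}(h,a))$, which is the first case of \cref{eq:lanes}.
\item If $e$ is an opponent or chance edge, then $q_i(e) = \bot$ and $\strat_{\mathrm{ext}}(\bot) = 1$, so the value is $\reach_i(h)$, which is the second case.
\end{itemize}
These two cases exhaust all edges, because every edge has a non-terminal parent owned by player 1, player 2, or chance. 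One point needs a check: the regret-matching step writes only $\strat_{\mathrm{ext}}[{:}\numslots]$ (\cref{tab:opsteps}), so the sentinel really stays at $1$ in every iteration.

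For the dual-lane realization, I would note that the doubled index arrays $P_2, C_2, S_2$ are just the concatenation of the two lane instances with the lane-$2$ node indices offset by $\numnodes$. Their destination sets are therefore disjoint, and one gather--multiply--scatter call equals applying $\fwd_{\ell,1}$ and $\fwd_{\ell,2}$ separately.

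The branch-free claim then needs no further argument beyond observing where the case split lives. It sits entirely in $q_i$, a function of edge ownership, which is fixed at compile time and stored in $S_2$. At run time the operator is a gather, a multiply and an indexed assignment, with no data-dependent predicate.

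I expect the main obstacle to be the induction step on parent reach: showing that $x(p(e))$ is final and that no later write clobbers $c(e)$. Everything rests on the tree single-parent property and on depth-ordered execution, so I would state those two facts carefully before doing the case split.
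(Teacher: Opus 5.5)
Your proposal is correct and follows essentially the same route as the paper: a case split on edge ownership using $q_i(e)=q(e)$ versus $q_i(e)=\bot$ with $\strat_{\mathrm{ext}}(\bot)=1$. It then uses the same two closing observations: the lanes occupy disjoint halves of the $2\numnodes$ buffer, and tree single-parenthood plus depth-ordered execution make the assignment well defined, with each parent's reach written in block $\ell-1$ (or initialized at the root). Your explicit induction on $\ell$ and your check that regret matching never overwrites the sentinel are more detailed versions of points the paper states in a single line.
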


\begin{proposition}[The depth schedule is shortest]
\label{prop:depth}
Call a schedule \emph{dependency-respecting} if it partitions the edges into ordered
groups such that no group contains two edges $e, e'$ with $c(e) = p(e')$, and $e$
precedes $e'$ whenever $c(e) = p(e')$. Then:
\begin{enumerate}[label=(\roman*),topsep=1pt,itemsep=0pt]
    \item the partition \cref{eq:blocks} is dependency-respecting and has $D$ groups;
    \item every dependency-respecting schedule has at least $D$ groups;
    \item the number of operator invocations per iteration is $c_1 + c_2 D$ for
    constants $c_1, c_2$ independent of $\numnodes$ and $\numedges$.
\end{enumerate}
\end{proposition}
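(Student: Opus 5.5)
For (i), I will use the tree structure. Every edge $e$ satisfies $d(c(e)) = d(p(e)) + 1$, so $e \in B_\ell$ means $d(c(e)) = \ell+1$. Suppose $c(e) = p(e')$. Then $d(p(e')) = d(p(e)) + 1$, so $e'$ lies in $B_{\ell+1}$ and never in the same block as $e$. Ordering the blocks by increasing $\ell$ places $e$ strictly before $e'$. The nonempty blocks are indexed by $\mathcal{D}$, so the schedule has exactly $D = |\mathcal{D}|$ groups. I will also note that $\mathcal{D}$ is an interval $\{0,\dots,D-1\}$: if $B_\ell \neq \emptyset$ for some $\ell > 0$, its parent node at depth $\ell$ has an incoming edge whose parent is at depth $\ell-1$, so $B_{\ell-1} \neq \emptyset$. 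This contiguity is what part (ii) needs.

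For (ii), I will exhibit a dependency chain of length $D$. Let $\ell_{\max} = D-1$ be the largest element of $\mathcal{D}$, and pick an edge $e_{D}$ with $d(p(e_D)) = \ell_{\max}$. Following parent pointers from $p(e_D)$ back to the root gives edges $e_1, \dots, e_D$ on a root path. Consecutive edges satisfy $c(e_k) = p(e_{k+1})$. In any dependency-respecting schedule, $e_k$ and $e_{k+1}$ cannot share a group, and $e_k$'s group must strictly precede $e_{k+1}$'s. The group indices along the chain are therefore strictly increasing, which forces at least $D$ distinct groups. This is the standard argument that the longest path lower-bounds the number of antichain layers. The main thing to check is that the path from the root to a depth-$\ell_{\max}$ parent has exactly $\ell_{\max}+1 = D$ edges. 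That count follows from the root having depth $0$ and the contiguity shown in (i).

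For (iii), I will count directly from \cref{alg:iteration} and \cref{tab:opsteps}. The constant part covers regret matching, the template copy, the flat regret and average-strategy scatters hoisted out of the level loop, the CFR$^{+}$ clamp, and the weight increment. Each of these is a fixed number of Aten operations on whole flat vectors, regardless of their lengths $\numslots$, $\numinfosets$, $\numedges$. The depth-dependent part runs once per block $\ell \in \mathcal{D}$. It consists of one forward indexed assignment on the doubled lane range and one backward \texttt{index\_add}, each a fixed number of gathers, multiplies and scatters: $4$ each in the table. This gives $c_2 = 8$. Tensor sizes change the work inside each operation but never the number of operations, so $c_1$ and $c_2$ do not depend on $\numnodes$ or $\numedges$. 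For alternating updates the body runs twice with a static player order, which only doubles both constants.

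The main obstacle is (ii): the notion of a "group" has to be pinned down so that the chain argument applies. I will read "partitions into ordered groups" as a total order on the groups, with precedence meaning a strictly smaller group index; the strictly increasing chain then closes the proof. The other part needing care in (iii) is stating precisely what counts as an "operator invocation". I will define it as one framework operation issued by the fixed sequence, which is exactly what the regression gate asserts.
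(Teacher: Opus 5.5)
Your proposal is correct and follows the paper's proof essentially step for step: the depth-increment argument for (i), the root-path chain of $D$ edges with strictly increasing group indices for (ii), and a constant-per-block count for (iii). Two details differ, and both are valid. Your explicit contiguity argument for $\mathcal{D}=\{0,\dots,D-1\}$ fills in what the paper leaves implicit in ``the walk passes through one node per depth.'' Your count for (iii) uses the realized schedule, where the regret scatter is hoisted out of the level loop, so only the forward and backward passes contribute to $c_2=8$; the paper's proof instead counts one forward, one backward, and one regret invocation per block, and either accounting yields $c_1+c_2D$.
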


Part~(iii) is the reason iteration time is nearly flat in tree size across four
decades (\cref{tab:matrix}). Growth enters the batch \emph{widths}, which the device
absorbs in parallel, while the number of dispatches stays fixed. Measured on the
suite, $c_1 \approx 31$ (the itemized dispatches of \cref{tab:opsteps} plus framework
bookkeeping) and $c_2 = 8$. The realized schedule applies $\acc$ once, flat over all
decision edges (\cref{sec:dataflow}), so only $\fwd$ and $\bwd$ contribute to $c_2$,
at four operations per level each. \Cref{tab:suite} confirms that $D$ matches the
tree depth on every game. The reference implementation's per-(stage, depth) partition
needs up to $100$ groups where \cref{eq:blocks} needs $8$.

Kim (2026) \cite{kim2026parallelizing} organizes its sequence-form iteration by
level as well, so its critical path is likewise proportional to depth. The
difference the timing measures is per-step work. There, one level applies a masked
sparse product against a precomputed level matrix, whose cost is set by the sparsity
pattern and the sparse kernel. Here, one level applies
\cref{eq:opfwd,eq:opbwd,eq:opacc} on dense contiguous ranges with build-time
indices. \cref{app:extra} localizes the resulting gap
to the sparse product and the index assignment inside its per-iteration
\texttt{normalize}.

\paragraph{The averaging horizon.}

Linear averaging \cite{tammelin2014solving} accumulates
$\bar{s} \mathrel{+}= t\,\reach_{\mathrm{own}}\,\strat$ on iteration $t$ and never
rescales. The accumulator therefore grows quadratically while each contribution stays
linear, and in finite precision the two rates collide.

\begin{proposition}[Representability horizon of unrescaled linear averaging]
\label{prop:horizon}
Let $S_T = \sum_{t=1}^{T} t\,x_t$ be the ideal real sum, with
$x_t\in[0,1]$, $S_T = \Theta(T^2)$, and $x_T\ge c>0$.
For a binary format with $p$-bit significand, while $S_T$ is in the normal range,
\begin{equation}
    \frac{T x_T}{\ulp(S_T)} = \Theta\!\left(\frac{2^p}{T}\right).
    \label{eq:horizon}
\end{equation}
Here $\ulp$ is the spacing in the binade containing the ideal sum (the
actual rounded accumulator may sit in a neighboring binade). Without $x_T\ge c$, only the $O(2^p/T)$ bound
follows. Thus $2^p$ is an order-of-magnitude resolution scale and should be read as such;
the exact stalling point depends on the rounded trajectory. A nonnegative increment is locally absorbed under
round-to-nearest if it is strictly below half the upward spacing
at the actual old accumulator; ties depend on the rounding rule.
\end{proposition}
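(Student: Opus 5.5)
The plan is to prove \cref{eq:horizon} by bounding numerator and denominator separately, each up to constants, and then dividing.

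\textbf{Numerator.} The hypothesis $c \le x_T \le 1$ gives $cT \le T x_T \le T$, so $T x_T = \Theta(T)$. Without the lower bound only $T x_T \le T$ survives, which is exactly why the statement then claims only the $O(2^p/T)$ direction.

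\textbf{Denominator.} I will use the standard fact about binades. For a normal positive real $S$ in the binade $[2^k, 2^{k+1})$, the spacing there is $\ulp(S) = 2^{k-p+1}$ (with the $p$-bit significand convention, implicit bit included). Hence $2^{-p} S < \ulp(S) \le 2^{1-p} S$, that is, $\ulp(S) = \Theta(2^{-p} S)$ with absolute constants 1 and 2. Applying this to the ideal sum with $S_T = \Theta(T^2)$, say $a T^2 \le S_T \le b T^2$, gives $\ulp(S_T) = \Theta(2^{-p} T^2)$. Then
\[
\frac{T x_T}{\ulp(S_T)} = \frac{\Theta(T)}{\Theta(2^{-p}T^2)} = \Theta\!\left(\frac{2^p}{T}\right),
\]
with explicit constants $c/(2b)$ and $1/a$.

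\textbf{The remarks after the display.}
\begin{itemize}
\item The normal-range assumption is what licenses the binade identity; subnormals have fixed spacing.
\item The caveat about the \emph{actual} accumulator needs a short argument. The rounded accumulator $\hat S_T$ satisfies $\hat S_T = S_T(1+O(T 2^{-p}))$ by the usual recursive-summation error bound, so its binade differs from that of $S_T$ by at most one while $T 2^{-p}$ is small. Its spacing therefore changes by at most a factor of 2, which the $\Theta$ absorbs. Beyond that regime the statement deliberately makes no claim.
\item The absorption criterion follows from round-to-nearest. If $0 \le \delta < \tfrac12 u^+$, where $u^+$ is the gap from $\hat S$ to the next representable number, then $\hat S + \delta$ is strictly closer to $\hat S$ than to its successor and rounds back to $\hat S$. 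At $\delta = \tfrac12 u^+$ the outcome is decided by the tie-breaking rule.
\end{itemize}

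I expect the main obstacle to be not the asymptotics but stating precisely what $\Theta$ hides. This covers the factor-2 ambiguity at binade boundaries, the possible neighboring-binade drift of $\hat S_T$, and the asymmetric spacing just below a power of two. I would handle all three uniformly by working with the two-sided bound $2^{-p}S < \ulp(S) \le 2^{1-p}S$. I would also note explicitly that the exact stalling $T$ is therefore pinned only up to a constant factor, of order $2^p$.
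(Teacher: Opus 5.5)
Your proposal is correct and follows the paper's proof essentially step for step: $cT \le Tx_T \le T$, the binade bound $2^{-p}S < \ulp(S) \le 2^{1-p}S$ combined with $S_T=\Theta(T^2)$, division, and the absorption criterion read off from round-to-nearest at the upward spacing of the actual old accumulator. The recursive-summation argument for the neighboring-binade remark is a harmless extra, since the statement makes no claim there; however, drop the closing assertion that the stalling $T$ is ``pinned up to a constant factor.'' Bounds on the ideal sum do not yield that, which is why the paper leaves the stalling point to the rounded trajectory.
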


GPU-CFR holds $\bar{s}$ in float64 for all update rules, while the other buffers
and iteration arithmetic use the configured dtype. The scales
$2^{24}\approx1.7\times10^7$ and $2^{53}\approx9\times10^{15}$ show the precision
difference. The cost is one wider slot vector with no change to the operator
sequence. This accumulator issue is separate from the graph-replay counter of
\cref{sec:cudagraph}. That float32 counter represents consecutive integers exactly
through $2^{24}$, and the host refills its starting index before each replay batch.

%% file: sections/5_setup.tex
\section{Experimental Setup}
\label{sec:setup}

\paragraph{Three evaluation suites.}
Each suite answers a different experimental question. The \emph{eight-game main
timing/regression suite} in \cref{tab:suite} contains six public games via OpenSpiel
\cite{lanctot2019openspiel} and two native HUNL-style subgames
\cite{ganzfried2015endgame,brown2017safe} (\cref{app:hunl}). The suite spans 54 to
275{,}983 Infosets and supports the headline timing, operation-count, and
reference-parity results. The \emph{twelve-game update-rule suite} adds four more
OpenSpiel games for the four-rule, two-order CPU study of \cref{tab:convergence}. The
\emph{extended poker suite} contains the released Libratus river endgames \cite{brown2018superhuman,steinberger2019pokerrl} and larger
exported poker instances of the head-to-head in \cref{app:libratus}.

\input{figures/TABLE_T1_suite.tex}

\paragraph{Backends.}
\emph{GPU-CFR} runs the compiled float32 solver on one NVIDIA A100 80GB in eager or
CUDA Graph mode. The two modes differ only in dispatch, and a vanilla CFR arm isolates
the engine from regret matching$^{+}$. \emph{Ours CPU} runs the same compiled tensor dataflow on eight threads
pinned to dedicated physical cores, with parallelism supplied by PyTorch's
intra-operator pool \cite{paszke2019pytorch}. \emph{LiteEFG} 0.1.5 \cite{liu2024liteefg} uses a single-threaded C++ backend. The
main matrix uses its simultaneous vanilla CFR mode, and \cref{app:extra} reports its
alternating CFR$^{+}$ preset. \emph{Kim (2026)}~\cite{kim2026parallelizing},
released as \texttt{noregret}, executes sequence-form
\cite{koller1997representations} CFR through CuPy sparse products on the same A100. The main matrix matches its update order
to GPU-CFR's simultaneous updates and reports its GPU backend. \emph{OpenSpiel} \cite{lanctot2019openspiel} contributes its Python CFR
implementation and exact best response. Every backend reports exploitability as NashConv$/2$. \Cref{tab:semantics} lists
the rule, update order, averaging, precision, and execution mode per arm.

\paragraph{Shared game representations.}
Public games are instantiated through OpenSpiel and converted to each backend's input
format. For the HUNL-style subgames and Libratus endgames, we export the compiled
intermediate representation to LiteEFG's generic extensive-form text format and load
the same checksummed tree through the Kim adapter. Structural counts and small-game behavior are checked across conversions
(\cref{app:protocol}), and \cref{app:libratus} reports the poker value-agreement
certificate.

\paragraph{Protocol.}
Timing rows run 1{,}000 CFR$^{+}$ iterations. OpenSpiel runs 200 CFR iterations and
is reported per iteration. GPU steady-state measurements use CUDA events \cite{guide2020cuda} after 50
warmup iterations. Graph capture and tree construction are measured separately, while
training-call rows include all work after solver construction. Each cell runs in an
independent low-load process, and tables report
medians over repeated runs. \Cref{app:protocol,app:correctness} gives the software
stack, repetition counts, load controls, and run-level traceability.

%% file: figures/TABLE_T1_suite.tex
\begin{table*}[!htb]
\centering
\caption{Benchmark suite and PyTorch/Aten operations issued per CFR$^+$ iteration
(load-immune count), reference implementation vs.\ compiled dataflow. Infosets counts
information sets, one per player at every non-chance history including terminals;
States counts the legal non-chance histories once both players' private information
is revealed. Blocks counts the sequential execution blocks per pass; the last column
is the maximum absolute regret difference after 22 float32 iterations.}
\label{tab:suite}
{
\setlength{\tabcolsep}{4pt}
\begin{tabular}{lrrrrrrrr}
\toprule
 & & & \multicolumn{2}{c}{Exec.\ blocks} & \multicolumn{2}{c}{Aten ops/iter} & & Regret \\
\cmidrule(lr){4-5} \cmidrule(lr){6-7}
Game & Infosets & States & ref & GPU-CFR & ref & GPU-CFR & Reduction & max diff \\
\midrule
Kuhn & 54 & 54 & 4 & 4 & 110 & 64 & 1.7$\times$ & 0.0 \\
Dark Hex 2x2 & 437 & 441 & 7 & 7 & 161 & 88 & 1.8$\times$ & 0.0 \\
HUNL river & 3,000 & 137,415 & 27 & 4 & 501 & 64 & 7.8$\times$ & 0.0 \\
Leduc & 4,620 & 9,300 & 11 & 11 & 229 & 120 & 1.9$\times$ & 0.0 \\
Goofspiel-5 & 13,293 & 26,931 & 9 & 8 & 195 & 96 & 2.0$\times$ & 0.0 \\
HUNL turn & 83,040 & 433,610 & 100 & 8 & 1,742 & 96 & 18.1$\times$ & 0.0 \\
Liar's Dice & 98,292 & 294,876 & 83 & 15 & 1,453 & 152 & 9.6$\times$ & 0.0 \\
Battleship & 275,983 & 209,941 & 25 & 10 & 467 & 112 & 4.2$\times$ & 0.0 \\
\bottomrule
\end{tabular}
}
\end{table*}

%% file: sections/6_results.tex
\section{Results}
\label{sec:results}
\input{figures/TABLE_T2_matrix.tex}

\begin{figure*}[!htb]
    \centering
    \begin{minipage}[t]{0.48\textwidth}
        \centering
        \includegraphics[width=\linewidth]{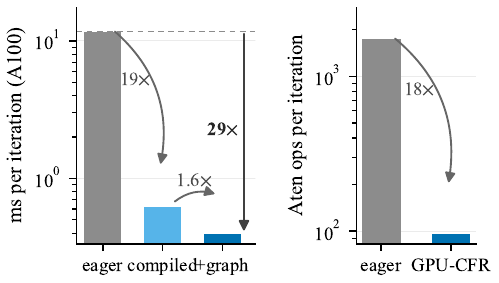}
        \Description{Two bar panels. The left panel shows per-iteration time dropping
        from 11.667 milliseconds for the eager reference to 0.619 for the compiled
        dataflow and 0.397 for the CUDA graph, with arrows marking the 19 times, 1.6
        times, and overall 29 times reductions. The right panel shows Aten operations
        per iteration dropping from 1742 to 96, an 18 times reduction.}
        \captionof{figure}{Per-iteration time (left) and PyTorch/Aten operations (right)
        on the HUNL turn subgame.}
        \label{fig:ladder}
    \end{minipage}\hfill
    \begin{minipage}[t]{0.465\textwidth}
        \centering
        \includegraphics[width=\linewidth]{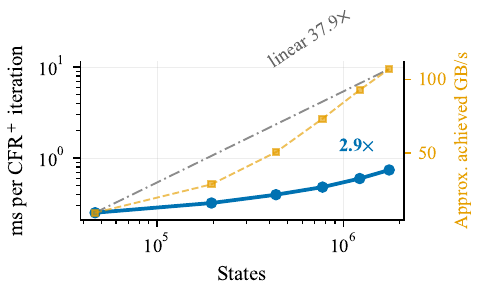}
        \Description{Log-log plot of milliseconds per CFR-plus iteration against the
        number of States. The measured solid curve rises only 2.9 times while a dash-dot
        reference line of slope one rises 37.9 times; a dashed second axis shows
        achieved bandwidth reaching about 107 gigabytes per second at the largest
        size.}
        \captionof{figure}{HUNL turn scaling, $K=4$ to $24$ hands per player (46k to
        1.76M States). Solid: steady-state graph time; dashed: achieved bandwidth.}
        \label{fig:scaling}
    \end{minipage}
\end{figure*}

\paragraph{Cross-framework timing.}
\label{sec:matrix}
\Cref{tab:matrix} and \cref{fig:hero} show that the preferred execution model
changes with game size. On the smallest trees the single-threaded C++ library wins.
The GPU pays a fixed launch-and-replay floor of about $0.113$\,ms per iteration, and
a 54-Infoset game cannot fill it. The compiled dataflow overtakes the CPU frameworks
as the game grows, and the margin widens from the HUNL turn subgame upward. There
graph replay is \liteGraphTurn$\times$ faster per iteration than LiteEFG. Across nearly four decades of game
size its per-iteration time changes by only about $5\times$. The number of dispatches
is set by tree depth, and additional states only widen each batched operation.

Kim (2026)~\cite{kim2026parallelizing} runs sequence-form CFR$^{+}$ on the same
A100, so the gap to it measures execution representation alone. The graph-replayed path is \kimGraphRange$\times$
faster across the eight-game suite, with a median of \kimGraphMedian$\times$, and the
eager compiled path is \kimEagerRange$\times$ faster. The flat schedule therefore carries most of the margin and graph replay adds the
rest. \Cref{tab:ttt} confirms the same ordering at matched exploitability
thresholds. The representation carries most of the
speedup. On eight CPU threads and no accelerator, the compiled
solver is \kimCpuRange$\times$ faster than the A100 baseline on all eight games, with
a median of \kimCpuMedian$\times$, and graph replay is a further
\graphCpuLargestRange$\times$ ahead of that CPU arm on the four largest games. Eight threads is a fair CPU ceiling for this dataflow. The sweep to 28 pinned cores
in \cref{fig:cpuscaling} saturates early, because a depth block is too small for the
intra-operator thread pool to split further. Against LiteEFG \cite{liu2024liteefg},
the general compiled CPU library, the same pattern holds. The eight-thread arm is already \cpuLiteLargestRange$\times$ faster on the four
largest games, and graph replay extends the margin to \liteGraphLargestRange$\times$.

\paragraph{Where the speed comes from.}
\label{sec:ladder}
\input{figures/TABLE_T6_overhead.tex}
The HUNL turn ablation in \cref{fig:ladder} times the two stages of the design in one
process under one software stack. Compiling the reference iteration into the
depth-level dataflow cuts Aten operations by $18\times$ and time by $19\times$. Graph
replay then removes the remaining host submissions for a further $1.6\times$, giving a
$29\times$ ladder end to end. The Nsight traces in \cref{app:profile} show the
mechanism. The eager path issues 87 kernel launches per iteration and keeps the GPU
busy a third of the time. Graph replay issues one graph launch per iteration and keeps
the GPU busy 94\% of the time. Recording 100 iterations per graph or handing the same body to
\texttt{torch.compile} \cite{paszke2019pytorch} changes nothing further. The
remaining time is the execution of the same 87 small gather and scatter kernels. The prior GPU baseline \cite{kim2026parallelizing} cannot take the second step at
all. Its iteration calls cuSPARSE through CuPy, which refuses to run under stream
capture. Its per-level row slicing and index assignment also move data between host
and device on every iteration. Capture therefore fails on the first sparse product
even at Kuhn size (\cref{app:profile}). Capturability is a property of the representation.

\Cref{fig:graphablation} shows graph-vs-eager steady-state times for every game. The
multiplier shrinks as trees grow, from 3.6$\times$ on Kuhn (54 Infosets,
fully launch-bound) to 1.6$\times$ on the HUNL turn subgame (83k Infosets), because
kernel execution time grows with data volume while launch overhead is fixed.

\begin{figure}[!htb]
  \begin{minipage}[b]{0.48\linewidth}\centering
    \includegraphics[width=\linewidth]{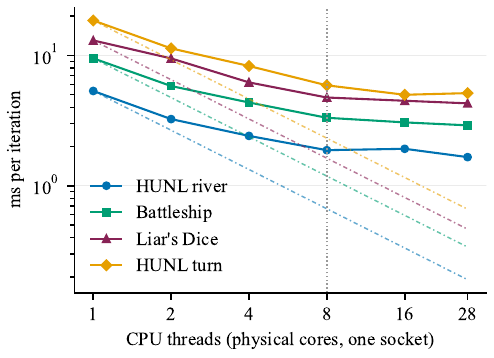}
    \Description{Log-log plot of milliseconds per iteration against CPU thread count
    for four games; each measured curve flattens after eight threads while a dash-dot
    ideal-scaling line keeps falling.}
    \caption{CPU thread scaling of the compiled solver (eager, CFR$^{+}$); dash-dot
    lines mark ideal scaling from the one-thread point, the dotted line the 8-thread
    arm of \cref{tab:matrix}.}
    \label{fig:cpuscaling}
  \end{minipage}\hfill
  \begin{minipage}[b]{0.48\linewidth}\centering
    \includegraphics[width=\linewidth]{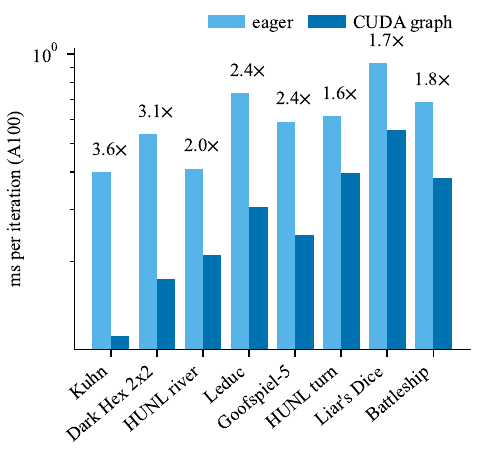}
    \Description{Grouped bar chart with one pair of bars per game on a log scale:
    a light bar for eager execution and a dark bar for CUDA-graph replay. The ratio
    printed above each pair shrinks from 3.6 times on Kuhn to 1.6 times on the HUNL
    turn subgame.}
    \caption{CUDA-graph replay vs.\ eager execution: steady-state ms per CFR$^{+}$
    iteration for every game (log scale). Labels give the eager/graph ratio.}
    \label{fig:graphablation}
  \end{minipage}
\end{figure}

The scaling sweep in \cref{fig:scaling} explains why the compiled schedule keeps
paying off on larger trees. Growing the turn subgame from 4 to 24 hands per player
multiplies the node count by $37.9\times$ while iteration time grows only
$2.95\times$. The extra hands add data volume and leave tree depth nearly unchanged,
so the operation sequence stays fixed and each operation simply gets wider. Achieved
bandwidth from a bytes-moved model reaches $5.5\%$ of the A100 peak at the largest
size, and hardware counters put measured traffic at $17\%$ (\cref{app:profile}). The
sweep is governed by launch latency and depth, not by bandwidth.

Precision and memory follow the same logic. Float64 costs $1.08\times$ on the graph
path for the turn subgame and nothing on Leduc, with eager times unchanged. Iteration
time is set by dispatch, and the arithmetic is a small part of it.
\Cref{tab:memory} (\cref{app:extra}) lists peak GPU memory per game against the
analytic model of \cref{sec:compiled}. Across the scaling sweep, allocated memory per tree node stays within
a $3\%$ band while the tree grows $37.9\times$, so memory scales linearly over the
tested range.

\begin{figure}[!htb]
    \centering
    \includegraphics[width=0.95\linewidth]{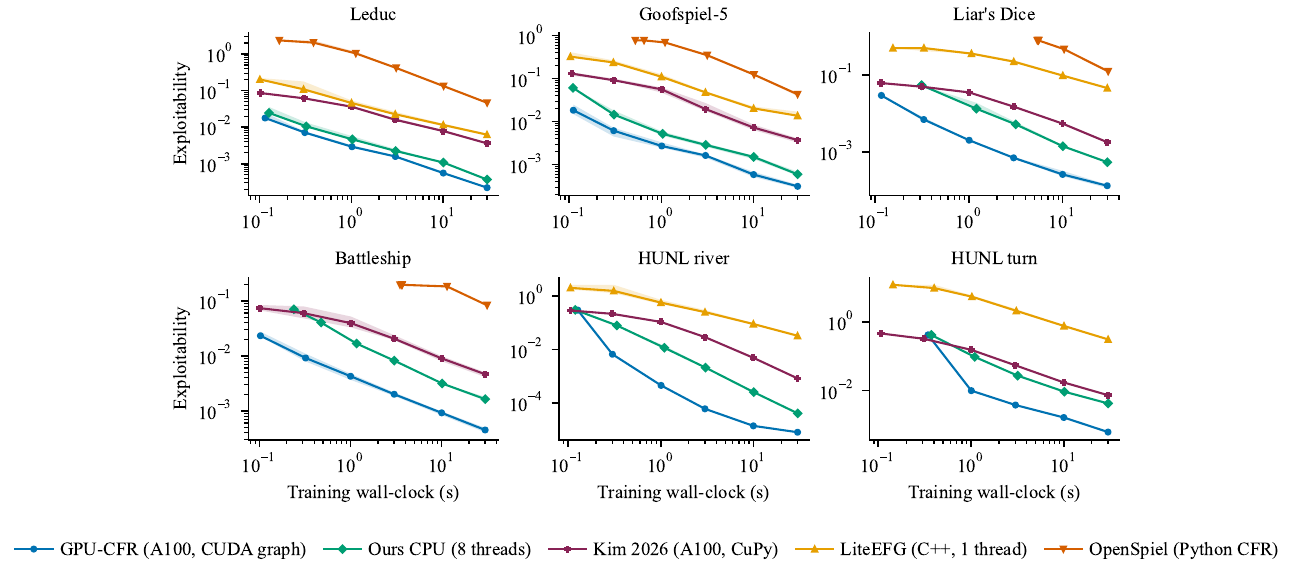}
    \Description{Six log-log panels, one per game, plotting exploitability against
    training wall-clock seconds for up to five backends. In every panel the GPU-CFR
    CUDA-graph curve sits lowest, reaching lower exploitability earlier; the CPU arm
    of the same compiled solver follows, then the Kim baseline, LiteEFG, and
    OpenSpiel.}
    \caption{Exploitability versus training wall-clock (both axes logarithmic) on four public
    games and the native HUNL subgames.}
    \label{fig:curves}
\end{figure}

\paragraph{Quality against wall-clock.}
\label{sec:quality}
\Cref{fig:curves} plots exploitability against training wall-clock. Each backend
runs its library's native preset (\cref{tab:matrix,tab:semantics}; the update-rule
study is in \cref{sec:convergence}), so the LiteEFG and OpenSpiel traces also differ
in algorithm. The matched-semantics comparison is GPU-CFR against Kim. The HUNL
panels use the shared exports and compare GPU-CFR on CUDA, our compiled CFR$^{+}$
implementation on CPU, Kim's NoRegret CFR$^{+}$ on CUDA, and LiteEFG's native vanilla
CFR. Each arm runs 10 independent processes at six log-spaced wall-clock targets from
0.1 to 30 seconds. Points are medians and bands are the full range across processes,
which \cref{app:correctness} traces to reduction order. In every panel the
graph-replayed curve sits lowest, and the lower per-iteration cost translates
directly into time. \Cref{tab:ttt} shows GPU-CFR reaching thresholds of 10, 3 and 1.5
times Kim's 30\,s exploitability \tttRatioMin--\tttRatioMax$\times$ sooner, with
disjoint bootstrap intervals. Kuhn and Dark Hex (converged within milliseconds) and
LiteEFG's Battleship trace (accessor stuck near $10^{-15}$) are omitted.

\Cref{tab:ttt} reports, for the matched GPU-CFR versus Kim pair of \cref{fig:curves},
the distribution over independent processes of the estimated wall-clock at which each
process first crosses a fixed exploitability. Thresholds are 10, 3 and 1.5 times Kim's
30\,s median exploitability; times are read off each trace by log-log interpolation between the
bracketing checkpoints, and the intervals are 10{,}000-resample bootstraps over ten
processes. The CPU column reports the median crossing time over ten independent
processes of the same compiled solver on eight threads.
\input{figures/TABLE_T13_ttt.tex}

\paragraph{Update rule and update order.}
\label{sec:rules}
The rules differ only in how the two accumulators evolve. CFR$^{+}$
\cite{tammelin2014solving} clamps accumulated regrets at zero after each update and
weights the strategy average linearly by iteration index. DCFR
\cite{brown2019solving} instead discounts the accumulators: positive and negative
regrets decay at separate iteration-indexed rates, and the strategy average decays
polynomially. Predictive CFR$^{+}$ (PCFR$^{+}$) \cite{farina2021faster} keeps the
CFR$^{+}$ regret update but matches the strategy against the accumulated regret plus
the current instantaneous regret as a predictor. Under \emph{simultaneous} updates
both players' regrets are computed from one strategy profile; under
\emph{alternating} updates the first player's update is visible to the second
player's update in the same sweep \cite{burch2019revisiting}.

The compiled engine supports these four rules under both update orders, allowing one
execution path to expose their convergence behavior. The full grid has 96 cells. At
equal reported iteration counts, alternating updates are tighter in 44 of the 48
game/rule comparisons; the four exceptions are on Dark Hex, where both orders are
within $1.3\times10^{-6}$ of zero. After charging alternating for its approximately
two player updates, it remains better in 41 of 48 cells. The timing matrix therefore fixes
one common update order, while this study measures the algorithmic effect of
changing it.

Update order also changes which rule leads. PCFR$^{+}$ is tightest on eleven of the
twelve games under simultaneous updates, with DCFR leading only on Dark Hex. Under
alternating updates, PCFR$^{+}$ leads six games and DCFR the other six. The interaction shows that the compiler
supports a family of CFR updates without making the systems result depend on one
preferred rule. \Cref{sec:convergence} checks the ranking against LiteEFG and OpenSpiel.

\input{figures/TABLE_T9_convergence.tex}

\paragraph{Repeated solves of one tree.}
\label{sec:hunl}
\Cref{tab:lifecycle,tab:overhead} break the HUNL turn tree into setup and per-solve
phases. Building and compiling the specification dominates, graph capture is an order
of magnitude cheaper, and each later 1{,}000-iteration solve is subsecond. Setup is
repaid within the first solve. The \turnFirstSolveSec\,s from specification to
solution is below the \turnKimThousandSec\,s of Kim's A100 implementation and the
\turnLiteThousandSec\,s of LiteEFG for the same 1{,}000 iterations
(\cref{tab:matrix}).

The export mechanism extends the comparison to Libratus endgame~4
\cite{brown2018superhuman,steinberger2019pokerrl} at full range. This game has
96{,}159 Infosets on a tree of 36 million states that is almost entirely terminal
(\cref{tab:sg4threeway}). At matched iteration counts GPU-CFR is $870\times$ faster
than LiteEFG, which inverts the small-scale C++ advantage. Kim's sequence-form path
is slightly faster here, because a terminal-heavy tree leaves little depth-dependent
work for the compiled schedule to remove. The three backends agree on the value to $2.5$ chips
(\cref{app:libratus}).
\input{figures/TABLE_T8_sg4threeway.tex}

%% file: figures/TABLE_T2_matrix.tex
\begin{table*}[!htb]
\centering
\caption{Per-iteration wall-clock in milliseconds, lower is better. Entries are
medians over independent processes (repetition counts and spread in
\cref{tab:spread}); bold marks the fastest system on each game, GPU-CFR entering
as its graph path. Baselines: Kim (2026)~\cite{kim2026parallelizing}, LiteEFG~\cite{liu2024liteefg}, OpenSpiel~\cite{lanctot2019openspiel}. ``n/a$^\ddag$'' marks a backend limitation
(\cref{app:hunl}).}
\label{tab:matrix}
{
\setlength{\tabcolsep}{3.5pt}
\begin{tabular}{l rrrrrrrr}
\toprule
Arm & Kuhn & \shortstack{Dark Hex\\2x2} & \shortstack{HUNL\\river} & Leduc & \shortstack{Goof-\\spiel-5} & \shortstack{HUNL\\turn} & \shortstack{Liar's\\Dice} & \shortstack{Battle-\\ship} \\
\midrule
Infosets & 54 & 437 & 3,000 & 4,620 & 13,293 & 83,040 & 98,292 & 275,983 \\
\midrule
\textbf{GPU-CFR} (A100) &  &  &  &  &  &  &  &  \\
\quad graph & 0.113 & 0.174 & \textbf{0.210} & \textbf{0.304} & \textbf{0.245} & \textbf{0.397} & \textbf{0.556} & \textbf{0.380} \\
\quad eager & 0.401 & 0.536 & 0.411 & 0.738 & 0.591 & 0.617 & 0.933 & 0.688 \\
\quad vanilla CFR & 0.136 & 0.200 & 0.311 & 0.328 & 0.266 & 0.701 & 0.591 & 0.403 \\
Ours CPU (8 threads) & 0.177 & 0.243 & 1.697 & 0.612 & 0.765 & 5.360 & 4.378 & 3.138 \\
Kim (2026) & 9.057 & 11.691 & 9.642 & 12.860 & 13.180 & 11.893 & 16.545 & 14.393 \\
LiteEFG (1 thread) & \textbf{0.008} & \textbf{0.065} & 16.857 & 1.233 & 3.375 & 102.160 & 102.880 & 80.952 \\
OpenSpiel (Python) & 0.973 & 8.369 & n/a$^\ddag$ & 139.236 & 230.114 & n/a$^\ddag$ & 1,542.649 & 2,528.514 \\
\bottomrule
\end{tabular}
}
\end{table*}

%% file: figures/TABLE_T6_overhead.tex
\begin{table*}[!htb]
\centering
\caption{One-time costs (CUDA, graph on): build (specification, compilation, solver construction; split in \cref{tab:lifecycle}), three eager warmup iterations, graph capture with first replay, and steady-state iteration time; medians over $n=20$ processes.}
\label{tab:overhead}
\begin{tabular}{lrrrrr}
\toprule
Game & Infosets & Build (s) & Warmup (ms) & Capture (ms) & Steady (ms/iter) \\
\midrule
HUNL river & 3,000 & 1.192 & 4.6 & 88.9 & 0.210 \\
HUNL turn & 83,040 & 4.233 & 5.2 & 301.0 & 0.398 \\
Liar's Dice & 98,292 & 3.837 & 6.8 & 43.3 & 0.555 \\
Battleship & 275,983 & 2.683 & 6.3 & 36.8 & 0.378 \\
\bottomrule
\end{tabular}
\end{table*}

%% file: figures/TABLE_T13_ttt.tex
\begin{table*}[!htb]
\centering
\caption{Time-to-threshold behind \cref{fig:curves}. For each game the
thresholds are 10$\times$, 3$\times$ and 1.5$\times$ the median exploitability Kim (2026)'s
A100 arm reaches at its 30\,s checkpoint; rows in which fewer than eight of ten
processes of either matched arm reach the threshold within the budget are omitted. Cells give the median over
independent processes of the estimated crossing time, the second at which a process's trace,
interpolated log--log between the bracketing checkpoints, first passes the threshold,
with a 10,000-resample bootstrap 95\% interval in brackets; the ratio
column is Kim's median over GPU-CFR's. GPU-CFR and Kim run
matched CFR$^{+}$ semantics; the CPU column is the same compiled solver on eight threads,
reported as a median over ten processes without an interval. \emph{Reached} counts
processes that attain the threshold within 30\,s (GPU-CFR, Kim).}
\label{tab:ttt}
{
\setlength{\tabcolsep}{4pt}
\begin{tabular}{lrrrrr}
\toprule
Threshold $\epsilon$ & GPU-CFR $t_\epsilon$ (s) & Kim (2026) $t_\epsilon$ (s) & Kim/GPU-CFR & CPU $t_\epsilon$ (s) & Reached \\
\midrule
\multicolumn{6}{l}{\emph{Leduc}} \\
$3.60\times10^{-2}$ & 0.115 [0.112, 0.115] & 1.015 [1.003, 1.074] & 8.846 & 0.126 & 10/10, 10/10 \\
$1.08\times10^{-2}$ & 0.198 [0.195, 0.207] & 5.828 [5.553, 6.188] & 29.454 & 0.314 & 10/10, 10/10 \\
$5.41\times10^{-3}$ & 0.450 [0.445, 0.467] & 16.971 [16.450, 17.302] & 37.732 & 0.828 & 10/10, 10/10 \\
\multicolumn{6}{l}{\emph{Goofspiel}} \\
$3.73\times10^{-2}$ & 0.111 [0.108, 0.113] & 1.585 [1.373, 1.760] & 14.321 & 0.156 & 10/10, 10/10 \\
$1.12\times10^{-2}$ & 0.182 [0.156, 0.190] & 6.139 [5.636, 6.626] & 33.793 & 0.415 & 10/10, 10/10 \\
$5.60\times10^{-3}$ & 0.358 [0.287, 0.391] & 15.721 [13.512, 17.551] & 43.970 & 0.945 & 10/10, 10/10 \\
\multicolumn{6}{l}{\emph{Liar's Dice}} \\
$1.81\times10^{-2}$ & 0.166 [0.165, 0.168] & 2.373 [2.347, 2.386] & 14.334 & 0.897 & 10/10, 10/10 \\
$5.42\times10^{-3}$ & 0.415 [0.414, 0.417] & 10.084 [9.949, 10.207] & 24.312 & 3.042 & 10/10, 10/10 \\
$2.71\times10^{-3}$ & 0.775 [0.773, 0.777] & 20.062 [19.909, 20.394] & 25.884 & 5.666 & 10/10, 10/10 \\
\multicolumn{6}{l}{\emph{Battleship}} \\
$4.65\times10^{-2}$ & 0.103 [0.102, 0.106] & 0.658 [0.515, 0.754] & 6.411 & 0.400 & 10/10, 10/10 \\
$1.39\times10^{-2}$ & 0.193 [0.182, 0.210] & 5.163 [4.988, 5.773] & 26.743 & 1.499 & 10/10, 10/10 \\
$6.97\times10^{-3}$ & 0.469 [0.434, 0.523] & 14.942 [13.839, 15.959] & 31.885 & 3.768 & 10/10, 10/10 \\
\multicolumn{6}{l}{\emph{HUNL river}} \\
$8.16\times10^{-3}$ & 0.287 [0.287, 0.290] & 7.023 [7.009, 7.038] & 24.472 & 1.344 & 10/10, 10/10 \\
$2.45\times10^{-3}$ & 0.470 [0.469, 0.473] & 15.216 [15.186, 15.232] & 32.398 & 2.756 & 10/10, 10/10 \\
$1.22\times10^{-3}$ & 0.641 [0.641, 0.646] & 23.357 [23.312, 23.362] & 36.411 & 4.100 & 10/10, 10/10 \\
\multicolumn{6}{l}{\emph{HUNL turn}} \\
$7.39\times10^{-2}$ & 0.566 [0.565, 0.570] & 2.174 [2.158, 2.187] & 3.840 & 1.362 & 10/10, 10/10 \\
$2.22\times10^{-2}$ & 0.801 [0.800, 0.806] & 7.710 [7.637, 7.813] & 9.624 & 3.980 & 10/10, 10/10 \\
$1.11\times10^{-2}$ & 0.978 [0.976, 0.983] & 17.821 [17.639, 18.008] & 18.213 & 8.432 & 10/10, 10/10 \\
\bottomrule
\end{tabular}}
\end{table*}

%% file: figures/TABLE_T9_convergence.tex
\begin{table*}[!htb]
\centering
\caption{Exploitability after $1{,}000$ iterations for four update rules
under both update orders, all on the same compiled engine. Bold marks the tightest
rule per game and order; cells are bit-reproducible CPU measurements, and a
repetition disagreement withholds the cell.}
\label{tab:convergence}
{\small
\setlength{\tabcolsep}{3pt}
\begin{tabular}{lrrrrrrrr}
\toprule
 & \multicolumn{4}{c}{Alternating} & \multicolumn{4}{c}{Simultaneous} \\
\cmidrule(lr){2-5} \cmidrule(lr){6-9}
Game & CFR & CFR$^{+}$ & DCFR & PCFR$^{+}$ & CFR & CFR$^{+}$ & DCFR & PCFR$^{+}$ \\
\midrule
Kuhn & $6.18\!\times\!10^{-4}$ & $8.73\!\times\!10^{-5}$ & $1.99\!\times\!10^{-4}$ & $\mathbf{2.20\!\times\!10^{-8}}$ & $1.32\!\times\!10^{-2}$ & $2.83\!\times\!10^{-3}$ & $5.05\!\times\!10^{-3}$ & $\mathbf{4.56\!\times\!10^{-5}}$ \\
Dark Hex 2x2 & $1.28\!\times\!10^{-6}$ & $1.28\!\times\!10^{-6}$ & $\mathbf{4.47\!\times\!10^{-9}}$ & $4.47\!\times\!10^{-9}$ & $1.26\!\times\!10^{-6}$ & $1.26\!\times\!10^{-6}$ & $\mathbf{1.62\!\times\!10^{-9}}$ & $1.62\!\times\!10^{-9}$ \\
Liar's Dice 1$\times$4 & $1.17\!\times\!10^{-4}$ & $4.52\!\times\!10^{-5}$ & $3.43\!\times\!10^{-5}$ & $\mathbf{1.46\!\times\!10^{-6}}$ & $1.86\!\times\!10^{-2}$ & $2.48\!\times\!10^{-3}$ & $2.54\!\times\!10^{-3}$ & $\mathbf{1.79\!\times\!10^{-5}}$ \\
Leduc & $1.04\!\times\!10^{-1}$ & $2.73\!\times\!10^{-4}$ & $\mathbf{1.57\!\times\!10^{-4}}$ & $7.13\!\times\!10^{-4}$ & $1.46\!\times\!10^{-1}$ & $6.65\!\times\!10^{-3}$ & $8.59\!\times\!10^{-3}$ & $\mathbf{1.46\!\times\!10^{-3}}$ \\
Goofspiel-5 & $1.44\!\times\!10^{-2}$ & $3.86\!\times\!10^{-4}$ & $2.73\!\times\!10^{-4}$ & $\mathbf{6.21\!\times\!10^{-5}}$ & $9.55\!\times\!10^{-2}$ & $5.41\!\times\!10^{-3}$ & $7.58\!\times\!10^{-3}$ & $\mathbf{1.16\!\times\!10^{-3}}$ \\
Liar's Dice 1$\times$5 & $5.57\!\times\!10^{-3}$ & $4.60\!\times\!10^{-5}$ & $\mathbf{1.73\!\times\!10^{-5}}$ & $8.27\!\times\!10^{-5}$ & $3.42\!\times\!10^{-2}$ & $2.68\!\times\!10^{-3}$ & $2.79\!\times\!10^{-3}$ & $\mathbf{3.02\!\times\!10^{-4}}$ \\
HUNL river & $5.35\!\times\!10^{-3}$ & $5.84\!\times\!10^{-4}$ & $\mathbf{8.54\!\times\!10^{-5}}$ & $5.51\!\times\!10^{-4}$ & $9.45\!\times\!10^{-2}$ & $5.47\!\times\!10^{-3}$ & $4.24\!\times\!10^{-3}$ & $\mathbf{9.11\!\times\!10^{-4}}$ \\
Battleship & $4.03\!\times\!10^{-3}$ & $8.25\!\times\!10^{-4}$ & $8.18\!\times\!10^{-4}$ & $\mathbf{2.19\!\times\!10^{-7}}$ & $6.66\!\times\!10^{-2}$ & $8.34\!\times\!10^{-3}$ & $9.13\!\times\!10^{-3}$ & $\mathbf{1.73\!\times\!10^{-3}}$ \\
Liar's Dice 1$\times$6 & $3.37\!\times\!10^{-3}$ & $1.27\!\times\!10^{-4}$ & $\mathbf{5.94\!\times\!10^{-5}}$ & $3.95\!\times\!10^{-4}$ & $3.54\!\times\!10^{-2}$ & $3.40\!\times\!10^{-3}$ & $3.58\!\times\!10^{-3}$ & $\mathbf{9.09\!\times\!10^{-4}}$ \\
HUNL turn & $5.25\!\times\!10^{-2}$ & $9.33\!\times\!10^{-4}$ & $\mathbf{2.70\!\times\!10^{-4}}$ & $9.94\!\times\!10^{-4}$ & $2.09\!\times\!10^{-1}$ & $1.62\!\times\!10^{-2}$ & $1.76\!\times\!10^{-2}$ & $\mathbf{1.52\!\times\!10^{-3}}$ \\
Liar's Dice 2$\times$3 & $2.20\!\times\!10^{-3}$ & $3.54\!\times\!10^{-5}$ & $1.93\!\times\!10^{-5}$ & $\mathbf{1.16\!\times\!10^{-5}}$ & $4.95\!\times\!10^{-2}$ & $4.05\!\times\!10^{-3}$ & $5.01\!\times\!10^{-3}$ & $\mathbf{4.57\!\times\!10^{-4}}$ \\
Goofspiel-6 & $2.05\!\times\!10^{-2}$ & $4.62\!\times\!10^{-4}$ & $4.37\!\times\!10^{-4}$ & $\mathbf{3.66\!\times\!10^{-4}}$ & $1.43\!\times\!10^{-1}$ & $1.43\!\times\!10^{-2}$ & $1.54\!\times\!10^{-2}$ & $\mathbf{2.12\!\times\!10^{-3}}$ \\
\bottomrule
\end{tabular}}
\end{table*}

%% file: figures/TABLE_T8_sg4threeway.tex
\begin{table*}[!htb]
\centering
\caption{\label{tab:sg4threeway}Libratus endgame~4 at full range
($96{,}159$ Infosets on $36{,}098{,}784$ States), solved by 3
independent backends from one $2.42$\,GB exported file. Columns as in
\cref{tab:riverthreeway}; GPU-CFR Train is the median of two runs, with quality
from the fastest run; baselines are single runs. The backends agree on the value to at most
1.4\% of the certificate bound.}
{
\setlength{\tabcolsep}{3pt}
\begin{tabular}{llrrr}
\toprule
Backend & Iterations & Exploitability & Value ($v_0$) & Train (s) \\
\midrule
GPU-CFR (CUDA) & 200 & $77.2$ & 540.245376 & 3.184 \\
Kim (2026)~\cite{kim2026parallelizing}, GPU & 200 & $63.1$ & 542.509644 & 2.374 \\
LiteEFG~\cite{liu2024liteefg} & 200 & $12.3$ & 542.766691 & 2,771.135 \\
\addlinespace[2pt]
\multicolumn{5}{l}{value spread $2.52$; tightest bound $151$; \cref{eq:valuecert} holds} \\
\midrule
GPU-CFR (CUDA) & 1,000 & $10.8$ & 544.555074 & 13.941 \\
Kim (2026)~\cite{kim2026parallelizing}, GPU & 1,000 & $10.0$ & 544.168457 & 11.820 \\
\addlinespace[2pt]
\multicolumn{5}{l}{value spread $0.387$; tightest bound $41.7$; \cref{eq:valuecert} holds} \\
\bottomrule
\end{tabular}}
\end{table*}

%% file: sections/8_conclusion.tex
\section{Conclusion}
\label{sec:conclusion}

For a decade, tabular CFR was the workload that GPU accelerators could not win. GPU-CFR
ends that. It is the first CFR solver on a GPU that beats efficient CPU
implementations, and by a wide margin: \turnSteadyMs\,ms per CFR$^{+}$ iteration on
the 83k-Infoset HUNL turn subgame, \kimGraphRange$\times$ faster than the previous
fastest GPU solver \cite{kim2026parallelizing} on identical A100 hardware, and
\liteGraphLargestRange$\times$ faster than LiteEFG \cite{liu2024liteefg} on the four largest games, with a gap
that grows with the tree. The idea behind the result is general. A fixed game is a
program whose control flow is known before the solver starts, so compiling it once into
flat arrays, folding chance into a template, scheduling by depth, and making every edge
branchless turns a pointer-chasing traversal into 64 to 152 tensor operations that a
CUDA graph replays in a single launch. The pipeline applies to any two-player
zero-sum game, any regret-minimization rule, and any tensor framework, and the iterates
agree bitwise with the reference implementation. Because the compiled sweep is the inner loop of subgame re-solving
\cite{burch2014solving,li2026real}, abstraction refinement
\cite{li2024rl,li2025efficient,li2026effective,li2026abstraction}, and neural CFR
\cite{brown2019deep}, every solver built on tabular CFR inherits the
speedup without changing its algorithm, and offline solves that once needed a cluster
become a job for one GPU. The full source code of GPU-CFR, including the compiler,
the game specifications, and the benchmark scripts, is available at
\url{https://github.com/lbn187/GPU-CFR}, so that the compiled representation can
serve as the starting point for future game solving hardware and software.

%% file: sections/A_hunl_construction.tex
\section{HUNL Subgame Construction}
\label{app:hunl}

\paragraph{The two native subgames.}
The native \emph{river} subgame (3{,}000 Infosets) deals non-conflicting hand pairs over
a fixed board and plays one street of check\slash bet\slash fold\slash call with
exact hand evaluation.
The \emph{turn} subgame (83{,}040 Infosets) adds a 44-card river chance layer between two
betting streets, with $K{=}12$ hands per player, sunk-cost payoffs, fractional
pot-based bet sizing, and all-in handling. These fixed-board, fixed-range trees
capture the repeated re-solving structure of HUNL-style workloads
\cite{burch2014solving,ganzfried2015endgame}.

\paragraph{Background: subgame solving and online re-solving.}
Decomposition solves subgames independently given summaries of the rest of the game
\cite{burch2014solving}, with safety refinements \cite{brown2017safe,zhang2021subgame},
practical endgame solving \cite{ganzfried2015endgame,moravcik2016refining}, and
depth-limited variants that cap the lookahead with value estimates
\cite{brown2018depth,kovavrik2023value,kroer2020limited}. DeepStack's
continual re-solving \cite{moravvcik2017deepstack,vsustr2019monte,sustr2020sound} and
Libratus's endgame solver
\cite{brown2018superhuman} run exactly this workload online, under second-scale time
budgets. Our two native HUNL subgames reproduce the structure of this workload, while
\cref{sec:hunl} separates their reusable compilation cost from marginal solver time.

Both native subgames use exact 7-card hand evaluation ported from a production poker
engine and validated against brute-force enumeration.

\paragraph{River subgame.}
Fixed board (default \texttt{Ks Js Th 7d 2c}), pot 20, stacks 100. A chance root deals
each player one of $K{=}100$ candidate hands, uniformly over non-conflicting pairs
(9{,}161 deals). One betting street: check/bet/fold/call with bet fractions
$\{0.5, 1.0\}$ of the pot, no raises. Payoffs at showdown are $\pm(\text{half pot} +
\text{matched bets})$; folds forfeit the half-pot plus any called amount. The default
instance has 3{,}000 Infosets on 137{,}415 States.

\paragraph{Turn subgame.}
Board \texttt{Ks Js Th 7d} plus a full 44-card river chance layer. Each player holds
one of $K{=}12$ hands ($131$ non-conflicting deals at the default seed); betting
occurs on the turn and again on the river, with fraction-of-pot sizing
$b = \min(\text{stack} - \text{committed},\, f \cdot \text{pot})$ and all-in handling
(no river betting after all-in). Payoffs use \emph{sunk-cost} semantics: a turn fold
loses the half-pot; a river fold additionally loses the caller's committed chips;
showdown transfers the half-pot plus matched wagers. The default instance has
83{,}040 Infosets on 433{,}610 States, and tree depth yielding 8 fused execution
blocks per pass.

\paragraph{Analytic anchors.}
Two closed-form checks pin absolute correctness of payoffs, chance weighting, and the
evaluator simultaneously. (i)~River nuts anchor: if one player's range contains only
the best possible hand, the equilibrium value in exact arithmetic is the half-pot,
$+10$; the implemented game's chance discretization shifts this to an enumerable
$+10.000033$, which the solver and a brute-force best-response enumeration both
reproduce to $10^{-10}$.
(ii)~Quads anchor: on the board \texttt{7s 7d 7h Ks} with hole cards \texttt{7c 2d},
four-of-a-kind wins all 44 rivers, forcing root value $+10$; the solver converges to
it with exploitability below $10^{-3}$ after 1{,}000 iterations.

\paragraph{Best-response oracle.}
A recursive exact oracle, weighted by counterfactual reach and expanded deepest
first, validates the generic evaluator on these games to $10^{-9}$. The oracle itself is
validated by brute-force enumeration of all pure strategies on reduced instances.

\paragraph{External reconstruction of the river subgame.}
The river subgame can be reconstructed outside our codebase, providing an independent
instance for the
head-to-head timings of \cref{sec:results}. We express it as an OpenSpiel
\texttt{universal\_poker} ACPC instance \cite{lanctot2019openspiel}: fixed five-card
board, \texttt{handReaches} marking the $K{=}100$ candidate hands per player, and one
betting round with two bet sizes. Three details decide whether the two games coincide.
The chips-behind parameter is set to the largest bet in our abstraction (below the
nominal stack), so that \texttt{universal\_poker}'s all-in action coincides with its
pot-size bet and is deduplicated away; at the nominal stack the external game acquires a
third bet size and is strictly larger. The \texttt{handReaches} slot order is
lexicographic in $(\mathrm{lo}, \mathrm{hi})$ with $\mathrm{card} = 4\,\mathrm{rank} +
\mathrm{suit}$ over ranks $2{\ldots}\mathrm{A}$ and suits $s, h, d, c$, which differs
from GPU-CFR's own renderer in both rank and suit order. The opening action order is a
permutation of GPU-CFR's, so strategies are compared by action name and never by position.
Under this mapping the two constructions agree exactly on 137{,}416 nodes and 9{,}161
deals, and on the payoff at every terminal.

The mapping is verified at the level of the iterate as well as the tree: after three
alternating CFR$^+$ iterations an independent from-scratch reference solver over the
external representation and GPU-CFR over its own representation reach exploitability
$1.6107099050007263$ and $1.6107099050007272$ respectively. The reference solver is itself pinned
against OpenSpiel's own CFR and CFR$^+$ on Kuhn poker at 1, 10, and 100 iterations to
$10^{-9}$; OpenSpiel's solvers cannot be run on the river subgame directly, because
\texttt{universal\_poker} enumerates all $1{,}624{,}350$ deals with the excluded ones at
probability zero and their tabular policy does not admit card removal. Exploitability
for the external arm is therefore scored by our own exact tabular best response, the
same evaluator used by every other arm.

\paragraph{Scope of the external reconstruction.}
The turn subgame is deliberately not reconstructed in \texttt{universal\_poker}. Its
all-in deduplication would have to hold on the turn and again on the river, and a
single chips-behind parameter cannot satisfy both: any value that collapses all-in
into the pot-size bet on one street splits them on the other. Because a game that differs from the one we solve would confound the comparison,
the \texttt{universal\_poker} arm
covers the river subgame and the turn subgame is reported against our own arms.

External HUNL head-to-heads do not depend on that reconstruction at all. The Libratus
endgames are read from the same released endgame files by both our builder and
PokerRL, and both external CFR baselines solve our own trees directly from the export
of \cref{app:libratus}, so those comparisons are on a single shared game.

%% file: sections/B_protocol.tex
\section{Benchmark Protocol and Environment}
\label{app:protocol}

\paragraph{Hardware and software.}
All measurements: one NVIDIA A100 80GB PCIe (dedicated, otherwise idle; driver
530.30) in a dual-socket host with two Intel Xeon Gold 6348 processors (Ice Lake,
28 cores and 56 threads each at 2.6\,GHz base, 112 logical CPUs, two NUMA nodes),
256\,GB of DDR4-3200 ECC memory on eight channels per socket, Linux 5.4 with the
\texttt{ondemand} frequency governor, GCC 9.4, PyTorch 1.13.1+cu117, CUDA 11.7,
Python 3.9.16. The three stages of the performance ladder (\cref{fig:ladder}) are
measured in one process under this stack: the pre-optimization iteration body is
preserved verbatim in the artifact (\texttt{opcount\_legacy.py}) and timed with the
same CUDA-event protocol as the compiled stages, and the same three stages re-measured
under PyTorch 2.5.1+cu121 are in \cref{tab:altexec}. CPU rows use 8
threads (\texttt{OMP\_NUM\_THREADS} and \texttt{MKL\_NUM\_THREADS} both set to~8)
pinned via
\texttt{taskset} to 8 whole physical cores on one NUMA node, with those cores' SMT
siblings excluded from the set (siblings share one core's execution units, so
counting them would overstate the arm's parallelism). LiteEFG 0.1.5 (single-threaded C++); OpenSpiel
Python CFR with exact best response. The multi-threaded CPU execution that the GPU
path is compared against is therefore this eight-thread compiled arm together with
the C++ library; \cref{fig:cpuscaling} reports how the compiled arm scales from one
to 28 threads.

\paragraph{Timing rules.}
GPU steady-state: CUDA events around a 1{,}000-iteration batch after 50
state-changing warmup iterations; the reported ms/iteration is the cost of iterations
51--1050. Training-call time (the ``Train'' columns of
\cref{tab:riverthreeway,tab:sg4threeway}): wall-clock of one training call with the
stated number of solver updates on a freshly constructed solver, so including the
call's three eager warmup iterations and graph capture,
but excluding game specification, compilation and solver construction (reported
separately in \cref{tab:overhead,tab:lifecycle}); a device
synchronize closes the training call before the evaluation bracket opens, so
asynchronous graph replays are never booked as evaluation time. The
warmup count differs because steady-state timing deliberately fills the captured
graph before measurement, whereas a training call measures the production setup
path. CPU and baseline
rows use \texttt{perf\_\allowbreak counter} wall-clock over the training call. Exploitability evaluation
time is excluded from all training times. The host is shared, so every reported number
comes from one serial gated campaign under the same protocol: 2{,}293 measurement cells
over 24.2 hours in two queues, a GPU queue pinned to one A100 and physical cores 48--55
(1{,}351 cells, 10.1 hours) and a CPU queue pinned to physical cores 0--7 (942 cells,
14.1 hours), covering the timing matrix, the ablations, the external baselines, the
convergence grids and the anytime curves.
Each cell is an independent process, with the 1-minute load average sampled before and
after it. A cell was not started above a load of 50 on the 112-core host; the realized
maximum was 7.4 in the GPU queue and 40.9 in the CPU queue, the latter from other
tenants on cores outside both pinned sets. Only two OpenSpiel cells ran above a load of
20, and the thread-scaling cells of \cref{fig:cpuscaling} additionally waited for a
load below 10. No cell was skipped or retried on that account.
Every speedup range and median quoted in the text is computed from the unrounded
cell medians by the table generators and enters the text as a generated macro, so
ratios of the rounded entries in \cref{tab:matrix} can differ from the quoted ones
in the last digit. Medians are over 20 repetitions for
GPU-CFR and the Ours CPU arm and 10 for the
vanilla-CFR control arm (regret matching without the $^{+}$ floor, linear averaging
retained) and for the baselines. Every archived result records the load at measurement time.

\paragraph{Operation and memory counters.}
The Aten operation counts of \cref{tab:suite} are collected with a
\texttt{TorchDispatchMode} interceptor, so they are load-immune and exactly
reproducible. \cref{tab:memory} reports peak GPU memory in binary megabytes from
the two \texttt{torch.cuda} counters, \texttt{max\_memory\_allocated} and
\texttt{max\_memory\_reserved}.

\paragraph{Solve protocol.}
CFR$^{+}$ (regret matching$^{+}$, linear averaging), simultaneous updates, float32,
seed 0. OpenSpiel runs 200 iterations of its vanilla CFR. Its three solvers
(CFR, CFR$^{+}$, DCFR) all fix \texttt{alternating\_updates=True}, so none of its
presets matches GPU-CFR's update order and we pick the one whose rule matches the
vanilla arm. Exploitability =
NashConv$/2$ for every backend. The update-rule study of \cref{tab:convergence} departs
from this protocol by design, sweeping four rules under both orders on the CPU at
$T{=}1{,}000$ in float32. It runs on the CPU because that device reproduces a solve
bitwise while the GPU does not, which is a requirement for ordering rules that sit close
together; \cref{app:correctness} quantifies the GPU spread. Every wall-clock number in
the paper, including the CPU arm of \cref{tab:matrix}, is unaffected. Quality comparisons across backend defaults use within-game wall-clock curves
(\cref{sec:quality}).

\begin{table*}[!htb]
\centering
\caption{Semantics of every timed arm. Order: simultaneous (S) or alternating (A,
one half-update per player per iteration). Averaging: uniform (U) or linear (L)
iterate weights. Every arm starts from uniform strategies and zero regrets; the
LiteEFG preset arm appears in \cref{app:extra}.}
\label{tab:semantics}
{
\setlength{\tabcolsep}{3pt}
\begin{tabular}{llccll}
\toprule
Arm & Rule & Order & Avg. & Precision & Execution \\
\midrule
GPU-CFR graph & CFR$^{+}$ & S & L & float32 & graph replay \\
GPU-CFR eager & CFR$^{+}$ & S & L & float32 & eager CUDA \\
GPU-CFR vanilla & CFR & S & L & float32 & graph replay \\
Ours CPU & CFR$^{+}$ & S & L & float32 & eager, 8 threads \\
Kim (2026) & CFR$^{+}$ & S & L & float32 & CuPy, eager \\
LiteEFG & CFR & S & U & float64 & C++, 1 thread \\
LiteEFG preset & CFR$^{+}$ & A & L & float64 & C++, 1 thread \\
OpenSpiel & CFR & A & U & float64 & Python, 200 it. \\
\bottomrule
\end{tabular}}
\end{table*}

\paragraph{Conversion validation.}
Public games are instantiated once through OpenSpiel and converted per backend.
Conversions are checked structurally (node, edge, and information-set counts match
across backends) and behaviorally: on the small games, exploitability curves and
their limiting values are consistent across backends, and Dark Hex 2$\times$2
reproduces its known first-player-win value ($v_0 = +1$) with exploitability
$\to 0$ under every backend.

\paragraph{Reproduction.}
Result files are JSONL with full environment echo (device, dtype, versions,
node/infoset counts, execution mode, load average). All tables and figures regenerate
from these files with the committed scripts. Campaign orchestration uses a separate
gated driver; the figure driver consumes only archived records and never launches a
solver. The 8-thread CPU arm was re-measured after the main campaign on whole physical
cores; both sets of records are archived, and the loader reads
the whole-core set. The artifact is provided as supplementary material: the solver,
the benchmark drivers, the archived JSONL records, the figure and table generators,
and the test suite, with a README that lists a smoke run of the test suite and the
full campaign commands together with the tolerances each check applies.

%% file: sections/C_correctness.tex
\section{Correctness Test Inventory}
\label{app:correctness}

The implementation is verified in layers: regression comparisons pin the optimized
path to the reference iterates, independent solvers and exact best responses check
computed values, and analytic poker cases check the evaluator. The suite contains
499 tests; the ones that pin the claims of this paper are:

\paragraph{Reference-oracle parity.}
The complete pre-optimization solver loop is preserved verbatim inside the test suite
as an oracle. Parity tests run 30 iterations on Kuhn, a mid-sized chance-heavy
synthetic game, and reduced HUNL turn/river instances, across
$\{$CFR, CFR$^{+}\} \times \{$uniform, linear$\}$ averaging, and assert agreement at
float64 \texttt{atol}$=10^{-12}$, \texttt{rtol}$=0$ on CPU (float32 on CUDA at
$10^{-9}$). Separately, the operation-count harness (\cref{tab:suite}) steps both
implementations 22 iterations in float32 on all eight full-size games and asserts the
maximum absolute regret difference is exactly $0.0$.

\paragraph{Independent reference solver.}
A per-node Python solver sharing no code with the compiled path (different traversal,
different data structures) must agree on 5-iteration solves of the HUNL instances to
$10^{-9}$.

\paragraph{Structural invariants.}
The dual-lane reach buffer's overwrite-coverage property (the union of all forward
scatter destinations equals every non-root slot in both lanes exactly once) is
asserted directly, as is bitwise equality between split and monolithic
\texttt{step()} batching, and safe behavior on degenerate games (root-terminal,
all-chance).

\paragraph{CUDA-graph path.}
Graph-vs-eager parity over 30 iterations across two games and all four
variant/averaging combinations ($10^{-9}$, and measured bitwise on Kuhn); warmup
accounting across split step calls (e.g., $2+8+3$ iterations equals $13$); zero-step
no-op; capture-failure fallback (injected constructor failure must warn exactly once
and produce eager-identical results); and buffer-rebinding detection (replacing any
persistent tensor after capture must raise an error before any replay).

\paragraph{Seed-invariance audit.}
Because we report medians over repetitions that differ only in a recorded seed, we
verify directly that the seed does nothing. For each game and device, 20 runs at a
fixed seed are interleaved with 4 runs at distinct seeds, and every pair of runs is
compared by maximum absolute deviation over the regret and average-strategy tensors.
On CPU the reduction order is fixed and every one of the $\binom{24}{2}$ pairs is
exactly zero. On GPU the reductions are unordered, so bitwise equality is
unavailable and the two arms are compared as distributions instead: relabelling
which runs form the same-seed arm gives an exact permutation null for the difference
between the median cross-seed and the median same-seed pairwise deviation, over all
$\binom{24}{20} = 10{,}626$ relabelings, holding both the pair-count split
and the dependence between pairs (they share runs) fixed. All eight games pass at
$\alpha = 0.05$, with minimum $p = 0.15$.

A max-versus-max comparison would not do here, and the failure is instructive. Our
first version of this audit gated on the cross-seed pairwise max not exceeding the
same-seed pairwise max, with 5 same-seed
and 4 cross-seed runs, comparing a maximum over $5 \times 4 = 20$ pairs against a
maximum over $\binom{5}{2} = 10$ pairs. A maximum grows with the number of draws, so
under a true null the larger collection wins with probability $20/30 = 2/3$; the
criterion rejected 4 of the 8 games and would have done so on any deterministic
solver with nonzero float noise. The permutation test replaces it.

The audit also settles the near-indifference bifurcation of \cref{sec:correctness}
as reduction-order noise, on two independent grounds:
the second exploitability cluster ($0.17314$, against $0.17213$ for the main
cluster) appears 4 times among the 20 \emph{same-seed} river runs and once among
the 4 cross-seed runs, and the same-seed and cross-seed pairwise maxima agree to
seven significant figures ($18949.053$ against $18949.052$ on the unnormalized
regret tensors).

\paragraph{How far reduction-order noise propagates.}
The audit above shows the seed is inert; this one bounds what the reduction order costs.
We ran the shipped solver eight times per cell over four update rules $\times$ two
update orders on Leduc and the HUNL river subgame, holding the seed fixed, and recorded
the spread of final exploitability as a fraction of the cell's median. At the main
protocol ($T{=}1000$, float32) no cell reproduces bitwise; the spread runs from $0.004\%$
(river, PCFR$^{+}$, alternating) to $65\%$ (Leduc, vanilla CFR, simultaneous), with a
median of $3.5\%$ over the sixteen cells. Precision and iteration count do not tame it:
at $T{=}8000$ in float64 on Leduc the median spread is $31\%$ and vanilla CFR under
simultaneous updates spreads $149\%$, because a longer run gives an early divergence more
iterations to compound. Regret matching$^{+}$ is the amplifier. It clamps at zero and
renormalizes, so two runs that disagree in the last bit of a regret near the clamp
produce different supports at the next iteration, a discrete change in the
strategy vector.

This is why \cref{tab:convergence} is measured on the CPU, where the reduction order is
fixed and repeated runs agree to the last bit, which the test suite asserts for all four
rules under both orders on the regret tensor, the strategy average, and the resulting
exploitability. Comparing update rules on the GPU would
require enough repetitions per cell to separate medians that sit closer together than
the within-cell spread. The wall-clock measurements are unaffected: per-iteration time
is a property of the executed kernel sequence, which the CUDA graph fixes, and it moves
by less than $2\%$ across the load range we logged.

\paragraph{Performance gates (load-immune).}
Per-iteration Aten operation counts are asserted under the bound
$40 + 10 \times \mathrm{blocks}$ with block counts equal to tree depth levels, using a
dispatch-interception counter; these tests are deterministic regardless of machine
load and act as the performance regression gate of \cref{sec:correctness}.

\paragraph{Evaluator and game-level checks.}
The generic best-response evaluator is validated against a recursive oracle (itself
checked by brute-force pure-strategy enumeration on reduced games), the analytic
anchors of \cref{app:hunl}, and known values (e.g., Dark Hex 2$\times$2 first-player
win, $v_0 = +1$, with exploitability $\to 0$ under both variants).

%% file: sections/D_extra_results.tex
\section{Additional Results}
\label{app:extra}

\paragraph{Solver lifecycle.}
\cref{tab:lifecycle} decomposes a re-solve on the HUNL turn subgame and on Libratus
endgame~4 into every phase a caller pays, from reading the specification to the
in-place \texttt{reset()}, \texttt{set\_payoffs} and \texttt{set\_root\_ranges}
calls that precede each solve (\cref{app:compiler}), and reports the measured
wall-clock of 1, 2, 10 and 100 back-to-back solves that reuse one solver against the
rebuild-per-solve model estimate.
\input{figures/TABLE_T14_lifecycle.tex}

\paragraph{Peak GPU memory.}
\cref{tab:memory} lists peak allocated and reserved GPU memory per game for the
1{,}000-iteration CFR$^{+}$ solve behind \cref{tab:matrix}.
\input{figures/TABLE_T4_memory.tex}

\paragraph{Timing dispersion.}
\cref{tab:spread} reports the repetition counts and the spread each \cref{tab:matrix}
median was taken over. Repetitions are independent processes, and both spread columns
are independent maxima across games, relative to each game's median: the interquartile
range and the full $\max-\min$ range. Parentheses name the range-maximizing game.
Spread concentrates on the smallest trees, where an iteration is short enough for
process-level noise to matter, and on the eager path, whose per-operation dispatch stays
sensitive to host scheduling at every size. On the four largest games every other arm
holds within $10\%$, and the graph path within $2\%$.

\input{figures/SNIPPET_exclusions.tex}
\input{figures/TABLE_T2b_spread.tex}

\paragraph{Kim (2026): precision.}
Double precision behaves as it does on GPU-CFR (\cref{sec:ladder}), for the same reason: the
iteration is launch bound. On \texttt{leduc\_poker} and
\texttt{liars\_dice}, float64 costs $0.999\times$ and $0.991\times$ the float32 time
and changes exploitability by at most $6.3\%$
($6.636\mathrm{e}{-3}$ vs.\
$6.240\mathrm{e}{-3}$ and $3.441\mathrm{e}{-3}$ vs.\
$3.445\mathrm{e}{-3}$).

The same launch pattern still performs well relative to OpenSpiel on larger games:
Kim's GPU backend is $11$--$176\times$ faster on the four larger shared games. On
Kuhn and Dark Hex, fixed launch overhead dominates both GPU implementations.

\subsection{Cross-implementation check of the update-rule study}
\label{sec:convergence}

The cross-implementation grid in \cref{tab:crossframework} checks the ranking pattern of
\cref{tab:convergence} (\cref{sec:results}).
Across the 10 comparisons with a unique winner on both sides, the winner agrees in 10;
2 further comparisons are excluded for a tie on at least one side. Because averaging
conventions vary across implementations, this comparison concerns within-backend rule
rankings. We adjudicate close rankings on the CPU path, whose reduction order is fixed,
and use the GPU path for wall-clock timing; \cref{app:correctness} reports the
repeated-run GPU analysis. Values are ranked only within a backend, game, and update
order; averaging tags make explicit why absolute values are not compared across
backends.

\input{figures/TABLE_T10_crossframework.tex}

\paragraph{LiteEFG: both arms.}
LiteEFG ships four update rules: vanilla CFR, CFR$^{+}$, DCFR and PCFR$^{+}$. Only
its vanilla CFR uses simultaneous updates; the other three alternate. None of its arms
matches GPU-CFR on both axes at once, since the latter is CFR$^{+}$ with simultaneous
updates, so we measured the two closest ($n=10$ per cell). \cref{tab:matrix} reports the
simultaneous vanilla arm, which yields $185\times$ and $213\times$ on Liar's Dice and
Battleship. The alternating CFR$^{+}$ preset costs $1.28$--$1.47\times$ more per
iteration and therefore yields $265\times$ and $306\times$ on those two games. The preset reaches lower at matched iteration counts (for instance
$1.24\mathrm{e}{-4}$ vs.\ $1.85\mathrm{e}{-2}$ on Liar's Dice), but that gain mixes
the variant with the update order and so is not attributable to either alone; we
therefore compare per-iteration cost at matched update order and leave the
convergence comparison to \cref{fig:curves}, which plots exploitability against
wall-clock time for the vanilla CFR arm.

\paragraph{Measurement provenance.}
Every timing number in the paper comes from the same gated campaign: each
measurement cell ran as its own process with its GPU and pinned cores isolated from the
host's other tenants, with
the load average sampled before and after and the cell rejected if it exceeded a
fixed threshold (\cref{sec:setup}). Reported values are medians over 20
repetitions for GPU-CFR and the Ours CPU arm and 10 for Kim's GPU path and the CPU
baselines, with ranges in the tables.

%% file: figures/TABLE_T14_lifecycle.tex
\begin{table}[!htb]
\centering
\caption{Solver lifecycle on the A100 (CFR$^{+}$, float32, CUDA graph; medians over
independent processes, HUNL turn: 5, Endgame 4: 5). Top: one-time and per-call phases. Middle: peak
GPU memory. Bottom: measured wall-clock of $k$ back-to-back 1{,}000-iteration solves
that reuse one solver through \texttt{reset()} plus a payoff and root-range update
before every solve after the first (one-time cost included), against a model estimate
for $k$ independent build-and-solve cycles. Per-solve time varied by at most 0.3\% / 0.0\% across the 100 reused solves.}
\label{tab:lifecycle}
{
\setlength{\tabcolsep}{3pt}
\begin{tabular}{lrr}
\toprule
Phase & HUNL turn & Endgame 4 \\
\midrule
Game spec: build / load / parse & 2.925\,s & 165.889\,s \\
Compile + solver construction & 1.305\,s & 124.751\,s \\
Eager warmup (3 iterations) & 0.005\,s & 0.045\,s \\
Graph capture + first replay & 0.291\,s & 0.063\,s \\
First steady replay & 0.441\,ms & 13.768\,ms \\
Steady-state iteration & 0.397\,ms & 13.438\,ms \\
\texttt{reset()} & 0.041\,ms & 0.118\,ms \\
\texttt{set\_payoffs} (H2D + template) & 0.256\,ms & 1.480\,ms \\
\texttt{set\_root\_chance} & 1.945\,ms & 139.277\,ms \\
\texttt{set\_root\_ranges} & 1.565\,ms & 522.956\,ms \\
One 1{,}000-iteration solve & 0.397\,s & 13.435\,s \\
\midrule
Peak allocated after capture & 112\,MiB & 9112\,MiB \\
Peak reserved after capture & 162\,MiB & 11236\,MiB \\
Peak allocated, end of run & 112\,MiB & 9112\,MiB \\
Peak reserved, end of run & 164\,MiB & 11236\,MiB \\
\midrule
1 solve, reuse & 4.942\,s & 303.777\,s \\
1 solve, rebuild & 4.943\,s & 303.776\,s \\
2 solves, reuse & 5.342\,s & 317.509\,s \\
2 solves, rebuild & 9.885\,s & 607.551\,s \\
10 solves, reuse & 8.537\,s & 426.225\,s \\
10 solves, rebuild & 49.425\,s & 3,037.756\,s \\
100 solves, reuse & 44.462\,s & 1,649.875\,s \\
100 solves, rebuild & 494.250\,s & 30,377.558\,s \\
\bottomrule
\end{tabular}}
\end{table}

%% file: figures/TABLE_T4_memory.tex
\begin{table}[!htb]
\centering
\caption{Peak GPU memory (MiB) during a 1000-iteration CFR$^+$ solve (float32),
median over $n$ independent runs. Reserved memory includes the private
CUDA-graph capture pool.}
\label{tab:memory}
{
\setlength{\tabcolsep}{3.5pt}
\begin{tabular}{lrrrr}
\toprule
Game & Infosets & Peak alloc.\ & Peak resv.\ & $n$ \\
\midrule
Kuhn & 54 & 0.1 & 8 & 20 \\
Dark Hex 2x2 & 437 & 0.2 & 8 & 20 \\
HUNL river & 3,000 & 56.5 & 84 & 20 \\
Leduc & 4,620 & 4.0 & 12 & 20 \\
Goofspiel-5 & 13,293 & 11.8 & 18 & 20 \\
HUNL turn & 83,040 & 183.0 & 236 & 20 \\
Liar's Dice & 98,292 & 124.4 & 178 & 20 \\
Battleship & 275,983 & 104.6 & 144 & 20 \\
\bottomrule
\end{tabular}}
\end{table}

%% file: figures/TABLE_T2b_spread.tex
\begin{table*}[!htb]
\centering
\caption{Dispersion behind the \cref{tab:matrix} medians. Columns are
independent maxima across games, relative to each game's median;
parentheses identify the range-maximizing game.}
\label{tab:spread}
{
\setlength{\tabcolsep}{4pt}
\begin{tabular}{lrrl}
\toprule
Arm & Reps & IQR & $\max-\min$ \\
\midrule
GPU-CFR graph & 20 & 1\% & 1\% (HUNL river) \\
GPU-CFR eager & 20 & 7\% & 34\% (Goofspiel-5) \\
GPU-CFR vanilla CFR & 10 & 1\% & 2\% (Kuhn) \\
Ours CPU (8 threads) & 20 & 1\% & 15\% (Goofspiel-5) \\
Kim (2026)~\cite{kim2026parallelizing} & 10 & 2\% & 5\% (Leduc) \\
LiteEFG (1 thread)~\cite{liu2024liteefg} & 10 & 7\% & 35\% (Dark Hex 2x2) \\
OpenSpiel (Python)~\cite{lanctot2019openspiel} & 10 & 3\% & 45\% (Kuhn) \\
\bottomrule
\end{tabular}}
\end{table*}

%% file: figures/TABLE_T10_crossframework.tex
\begin{table*}[!htb]
\centering
\caption{Cross-implementation check of the rule ranking in \cref{tab:convergence}
after $T=1000$ CFR CPU iterations on the six common games (LiteEFG~\cite{liu2024liteefg}, OpenSpiel~\cite{lanctot2019openspiel}). Bold marks the tightest
rule within each backend and order; superscripts give the scoring (uniform
$\mathrm{u}$, linear $\mathrm{l}$, last-iterate $\mathrm{z}$), ``n/a'' an
unsupported cell, ``--'' unmeasured, ``$\ne$'' inconsistent
repetitions. Among 10 two-sided comparisons with unique minima, 10 agree; 2 further comparisons are excluded for a tied minimum on at least one side.}
\label{tab:crossframework}
\begin{tabular}{llrrrr}
\toprule
Game & Implementation & CFR & CFR$^{+}$ & DCFR & PCFR$^{+}$ \\
\midrule
\multicolumn{6}{c}{\textit{Alternating updates}} \\
\midrule
Kuhn & Ours & $6.18\!\times\!10^{-4}$ & $8.73\!\times\!10^{-5}$ & $1.99\!\times\!10^{-4}$ & $\mathbf{2.20\!\times\!10^{-8}}$ \\
 & LiteEFG & n/a & $7.41\!\times\!10^{-5}$$^{\mathrm{l}}$ & $1.47\!\times\!10^{-4}$$^{\mathrm{z}}$ & $\mathbf{1.76\!\times\!10^{-6}}$$^{\mathrm{l}}$ \\
 & OpenSpiel & $9.38\!\times\!10^{-4}$$^{\mathrm{u}}$ & $\mathbf{8.74\!\times\!10^{-5}}$$^{\mathrm{l}}$ & $1.47\!\times\!10^{-4}$$^{\mathrm{l}}$ & n/a \\
\addlinespace[2pt]
Dark Hex 2x2 & Ours & $1.28\!\times\!10^{-6}$ & $1.28\!\times\!10^{-6}$ & $4.47\!\times\!10^{-9}$ & $4.47\!\times\!10^{-9}$ \\
 & LiteEFG & n/a & $1.11\!\times\!10^{-16}$$^{\mathrm{l}}$ & $1.87\!\times\!10^{-9}$$^{\mathrm{z}}$ & $1.11\!\times\!10^{-16}$$^{\mathrm{l}}$ \\
 & OpenSpiel & $6.25\!\times\!10^{-4}$$^{\mathrm{u}}$ & $1.25\!\times\!10^{-6}$$^{\mathrm{l}}$ & $\mathbf{1.87\!\times\!10^{-9}}$$^{\mathrm{l}}$ & n/a \\
\addlinespace[2pt]
Leduc & Ours & $1.04\!\times\!10^{-1}$ & $2.73\!\times\!10^{-4}$ & $\mathbf{1.57\!\times\!10^{-4}}$ & $7.13\!\times\!10^{-4}$ \\
 & LiteEFG & n/a & $2.36\!\times\!10^{-4}$$^{\mathrm{l}}$ & $\mathbf{1.72\!\times\!10^{-4}}$$^{\mathrm{z}}$ & $7.75\!\times\!10^{-4}$$^{\mathrm{l}}$ \\
 & OpenSpiel & $1.18\!\times\!10^{-2}$$^{\mathrm{u}}$ & $2.57\!\times\!10^{-4}$$^{\mathrm{l}}$ & $\mathbf{1.43\!\times\!10^{-4}}$$^{\mathrm{l}}$ & n/a \\
\addlinespace[2pt]
Goofspiel-5 & Ours & $1.44\!\times\!10^{-2}$ & $3.86\!\times\!10^{-4}$ & $2.73\!\times\!10^{-4}$ & $\mathbf{6.21\!\times\!10^{-5}}$ \\
 & LiteEFG & n/a & $3.16\!\times\!10^{-4}$$^{\mathrm{l}}$ & $3.25\!\times\!10^{-4}$$^{\mathrm{z}}$ & $\mathbf{1.46\!\times\!10^{-4}}$$^{\mathrm{l}}$ \\
 & OpenSpiel & $7.34\!\times\!10^{-3}$$^{\mathrm{u}}$ & $3.55\!\times\!10^{-4}$$^{\mathrm{l}}$ & $\mathbf{2.97\!\times\!10^{-4}}$$^{\mathrm{l}}$ & n/a \\
\addlinespace[2pt]
Liar's Dice 1$\times$6 & Ours & $3.37\!\times\!10^{-3}$ & $1.27\!\times\!10^{-4}$ & $\mathbf{5.94\!\times\!10^{-5}}$ & $3.95\!\times\!10^{-4}$ \\
 & LiteEFG & n/a & $1.24\!\times\!10^{-4}$$^{\mathrm{l}}$ & $\mathbf{5.70\!\times\!10^{-5}}$$^{\mathrm{z}}$ & $3.82\!\times\!10^{-4}$$^{\mathrm{l}}$ \\
 & OpenSpiel & $2.66\!\times\!10^{-3}$$^{\mathrm{u}}$ & $1.27\!\times\!10^{-4}$$^{\mathrm{l}}$ & $\mathbf{9.21\!\times\!10^{-5}}$$^{\mathrm{l}}$ & n/a \\
\addlinespace[2pt]
Battleship & Ours & $4.03\!\times\!10^{-3}$ & $8.25\!\times\!10^{-4}$ & $8.18\!\times\!10^{-4}$ & $\mathbf{2.19\!\times\!10^{-7}}$ \\
 & LiteEFG & n/a & $0$ & $5.88\!\times\!10^{-10}$$^{\mathrm{z}}$ & $0$ \\
 & OpenSpiel & $2.78\!\times\!10^{-3}$$^{\mathrm{u}}$ & $8.42\!\times\!10^{-4}$$^{\mathrm{l}}$ & $\mathbf{7.59\!\times\!10^{-4}}$$^{\mathrm{l}}$ & n/a \\
\midrule
\multicolumn{6}{c}{\textit{Simultaneous updates}} \\
\midrule
Kuhn & Ours & $1.32\!\times\!10^{-2}$ & $2.83\!\times\!10^{-3}$ & $5.05\!\times\!10^{-3}$ & $\mathbf{4.56\!\times\!10^{-5}}$ \\
 & LiteEFG & $7.11\!\times\!10^{-3}$$^{\mathrm{u}}$ & n/a & n/a & n/a \\
\addlinespace[2pt]
Dark Hex 2x2 & Ours & $1.26\!\times\!10^{-6}$ & $1.26\!\times\!10^{-6}$ & $1.62\!\times\!10^{-9}$ & $1.62\!\times\!10^{-9}$ \\
 & LiteEFG & $0$ & n/a & n/a & n/a \\
\addlinespace[2pt]
Leduc & Ours & $1.46\!\times\!10^{-1}$ & $6.65\!\times\!10^{-3}$ & $8.59\!\times\!10^{-3}$ & $\mathbf{1.46\!\times\!10^{-3}}$ \\
 & LiteEFG & $3.91\!\times\!10^{-2}$$^{\mathrm{u}}$ & n/a & n/a & n/a \\
\addlinespace[2pt]
Goofspiel-5 & Ours & $9.55\!\times\!10^{-2}$ & $5.41\!\times\!10^{-3}$ & $7.58\!\times\!10^{-3}$ & $\mathbf{1.16\!\times\!10^{-3}}$ \\
 & LiteEFG & $4.11\!\times\!10^{-2}$$^{\mathrm{u}}$ & n/a & n/a & n/a \\
\addlinespace[2pt]
Liar's Dice 1$\times$6 & Ours & $3.54\!\times\!10^{-2}$ & $3.40\!\times\!10^{-3}$ & $3.58\!\times\!10^{-3}$ & $\mathbf{9.09\!\times\!10^{-4}}$ \\
 & LiteEFG & $1.85\!\times\!10^{-2}$$^{\mathrm{u}}$ & n/a & n/a & n/a \\
\addlinespace[2pt]
Battleship & Ours & $6.66\!\times\!10^{-2}$ & $8.34\!\times\!10^{-3}$ & $9.13\!\times\!10^{-3}$ & $\mathbf{1.73\!\times\!10^{-3}}$ \\
 & LiteEFG & $0$ & n/a & n/a & n/a \\
\bottomrule
\end{tabular}
\end{table*}

%% file: sections/E_libratus_headtohead.tex
\section{Poker Head-to-Heads and the Value-Agreement Certificate}
\label{app:libratus}

The HUNL subgames and the Libratus endgames are built by our own code, so putting an
external solver on them requires handing it our tree in a format it can load.
This appendix describes how, what the resulting comparisons say, and the certificate
that establishes the three backends are solving the same game.

\paragraph{Exporting our trees.}
LiteEFG's \texttt{FileEnv} reads a generic extensive-form text format --- node lines for
terminal payoffs, chance edges with conditional probabilities, and player action lists,
plus infoset membership lines --- and its OpenSpiel bridge is one of several possible writers for that format. We emit it from our intermediate representation in a
single streaming pass with the root written first, so peak memory during export is set by the
infoset membership index, and the serialized text streams to disk. Kim (2026) reaches the same
file through a black-box game adapter, which is necessary because it runs in a separate
interpreter that must not import our package. Both backends load one checksummed
artifact per game, produced once and cached.

One detail of the adapter is worth recording because getting it wrong is silent. Kim
(2026) builds each infoset's action set by unioning the labels the adapter reports at
that infoset's member nodes, so a label must name an \emph{action}. Child
node names are globally unique and would therefore split each infoset into as many
action sets as it has members: the tree keeps its shape and every count still looks
right, while the solver optimizes over a strictly finer strategy space than the file
describes. On Kuhn poker that raises the sequence count from 13 to 25 per player and
improves apparent exploitability by more than two orders of magnitude, a number
comparable to nothing. Labelling actions by position among a node's children fixes it,
and is sound because the exporter validates that every member node of an infoset lists
its actions in the infoset's order.

\paragraph{Structural equivalence.}
Two checks confirm the export preserves the game. Exported Kuhn poker solved through the
adapter reproduces the exploitability of the same game loaded natively from OpenSpiel
\emph{bitwise}, both giving $0.0020750612020492554$ at $13$ sequences per player, and exported
Leduc poker gives $1093$ sequences per player either way. On the trees that matter here
the sequence counts Kim (2026) derives from the file agree exactly with what GPU-CFR's
compiler holds: the river subgame yields $701$ row and $701$ column sequences on
both sides, and Libratus endgame~4 yields $19{,}741$ and $27{,}497$, reproduced by
the baseline to the unit on a tree of $36{,}098{,}785$ nodes.

\paragraph{The certificate that scales: comparing on the value.}
The \emph{value} is comparable across backends whatever each one runs, because a
zero-sum tree has exactly one. For a
profile $(x,y)$ with per-player best-response gains $\delta_0, \delta_1$ and
$\mathrm{NashConv} = \delta_0 + \delta_1$, the usual sandwich
$\min_{y'} u_0(x,y') \le v^\ast \le \max_{x'} u_0(x',y)$ gives
$|u_0(x,y) - v^\ast| \le \max(\delta_0,\delta_1) \le \mathrm{NashConv}$, so for any two
backends
\begin{equation}
    |v_i - v_j| \;\le\; \mathrm{NashConv}_i + \mathrm{NashConv}_j .
    \label{eq:valuecert}
\end{equation}
\cref{eq:valuecert} is an exact identity: if it fails, at least one backend is
not solving the tree the other one is. It is also the check that scales, since it needs
only two scalars per backend and applies unchanged to trees far too large to compare
structurally. It is what caught the labelling bug above, and it is reported for every
poker head-to-head below.

\paragraph{Gaps agree once the variant is matched.}
The certificate constrains values, so on its own it leaves open whether the backends
merely solve the same game or also converge alike. Exploitability settles that, but only
when the algorithm is held fixed: LiteEFG's CFR$^{+}$ preset uses alternating updates
with linear averaging, and reading its gap against a simultaneous-update run compares
two algorithms. Matching the variant removes the
confound. Rerunning GPU-CFR on the river subgame in LiteEFG's own configuration
--- CFR$^{+}$, alternating, linear averaging, float64 --- gives exploitability
$1.27 \times 10^{-5}$ against LiteEFG's $1.25 \times 10^{-5}$ at $T = 8{,}000$, a
$1.6\%$ difference, with game values $1.77774849$ and $1.77774846$ agreeing to
$3 \times 10^{-8}$. Two independently written deterministic solvers reaching eight
significant figures of agreement on a $137{,}415$-state tree is a sharper statement than
\cref{eq:valuecert} alone can make, and it localizes the spread in the earlier tables to
the update order. The measurement is on the same
device and dtype for both, so it also fixes the price: an alternating iteration performs
one half-update per player and costs about twice a simultaneous one, and it still reaches
a smaller gap at matched wall-clock.

\paragraph{The river subgame, three ways.}
\cref{tab:riverthreeway} solves the $137{,}415$-state river subgame with all three
backends at two iteration counts. Every pair satisfies \cref{eq:valuecert} with wide
margin, using at most $24\%$ of the bound, and the agreement tightens with the gaps: an
eightfold increase in iterations shrinks the three-way value spread from
$1.02 \times 10^{-3}$ to $8.58 \times 10^{-5}$, a factor of $12$. The timings in the
table are median training-call wall-clock on the hardware of \cref{app:protocol};
quality metrics come from the fastest run. The
iteration counts are matched while the variants are not, so they measure what each
backend costs to run its own $1{,}000$ iterations, which is the quantity a
practitioner pays.

\input{figures/TABLE_T7_riverthreeway.tex}

\paragraph{Libratus endgame~4.}
The largest tree we put through all three backends is Libratus endgame~4 at full range:
$36{,}098{,}785$ nodes, exported as a $2.42$\,GB file. Its scale is what the export
mechanism buys and also what bounds it. GPU-CFR reads the endgame in $165.889$\,s,
compiles it and constructs the solver in $124.751$\,s, and solves it on one A100 at a
peak of $14.4$\,GiB allocated ($16.9$\,GiB reserved) in the benchmark process, which
also holds the float64 evaluation tree; the solver alone peaks at $8.9$\,GiB
(\cref{tab:lifecycle}). Kim (2026) parses the file in
$399.341$\,s and solves it on the same device; LiteEFG's
\texttt{FileEnv} loads it at $24.4$\,GiB resident and then iterates at
$13.856$\,s, so $200$ iterations is where all three meet. \cref{tab:sg4threeway}
(\cref{sec:results}) reports that point. The three values agree to $2.5$ chips on a
tree whose $200$-iteration gaps run to $77$ chips, using $1.4\%$ of what
\cref{eq:valuecert} allows, and the two GPU backends agree to $0.39$ chips at
$1{,}000$ iterations. The same $200$ iterations cost $3.180$\,s on GPU-CFR and
$2{,}771.135$\,s on LiteEFG: a factor of $870$ at matched iteration counts, each
backend running its own CFR$^{+}$, and the C++ implementation's advantage at small
scale (\cref{sec:matrix}) has inverted completely. On this tree the two GPU backends
cost within $14\%$ of each other: its $36$ million states carry $96{,}159$
Infosets, so our node-level dataflow and Kim's sequence-form products spend
$13.435$ and $11.820$\,ms per incremental iteration, and the $30$--$80\times$ gap of
\cref{tab:matrix} closes where the tree is almost all terminals. The GPU-CFR training times
report the median of two runs ($3.180$ and $3.188$\,s at $200$ iterations, $0.3\%$ apart),
with quality metrics from the fastest run; the baseline rows are single runs,
each costing $46$ minutes (LiteEFG) or a seven-minute parse (Kim). Each GPU-CFR
training call starts on a freshly constructed solver, so it includes the three eager
warmup iterations and graph capture; on this tree capture itself costs well under a
second (\cref{tab:lifecycle}), and the call is $197$ replays at the steady-state rate
plus that setup.

\paragraph{Endgame~3.}
We do not put the external baselines on endgame~3 ($194{,}556$ Infosets on
$93{,}168{,}610$ nodes): its
export would exceed $6$\,GB, and LiteEFG's load already peaks at $24.4$\,GiB on the
$2.6\times$ smaller endgame~4, so file loading would dominate the arm's time. The
head-to-head evidence stops at endgame~4's scale, where \cref{eq:valuecert} holds across
all three backends.

%% file: figures/TABLE_T7_riverthreeway.tex
\begin{table*}[!htb]
\centering
\caption{\label{tab:riverthreeway}The HUNL river subgame ($3{,}000$ Infosets)
solved by all three backends from one exported file. Value is player~0's payoff in
chips under each backend's own average profile, Train the training-call wall-clock
after solver construction (\cref{app:protocol}). Train is the median across runs;
quality metrics come from the fastest run;
exploitability feeds the certificate
of \cref{eq:valuecert} and is not comparable across backends (their averaging
conventions differ). Each backend runs its native CFR$^{+}$ preset: GPU-CFR and Kim
simultaneous updates with linear averaging in float32, LiteEFG alternating
updates (one half-update per player per iteration) in float64, so these LiteEFG
times are not those of the vanilla-CFR arm of \cref{tab:matrix}. No pair uses more than 24\% of its bound.}
{
\setlength{\tabcolsep}{3pt}
\begin{tabular}{llrrr}
\toprule
Backend & Iterations & Exploitability & Value ($v_0$) & Train (s) \\
\midrule
GPU-CFR (CUDA) & 1,000 & $5.47 \times 10^{-3}$ & 1.77663515 & 0.315 \\
Kim (2026)~\cite{kim2026parallelizing}, GPU & 1,000 & $5.05 \times 10^{-3}$ & 1.77669048 & 9.699 \\
LiteEFG~\cite{liu2024liteefg} & 1,000 & $5.71 \times 10^{-4}$ & 1.77765471 & 28.134 \\
\addlinespace[2pt]
\multicolumn{5}{l}{value spread $1.02 \times 10^{-3}$; tightest bound $1.12 \times 10^{-2}$; \cref{eq:valuecert} holds} \\
\midrule
GPU-CFR (CUDA) & 8,000 & $1.43 \times 10^{-4}$ & 1.77772592 & 1.774 \\
Kim (2026)~\cite{kim2026parallelizing}, GPU & 8,000 & $1.63 \times 10^{-4}$ & 1.77766263 & 77.111 \\
LiteEFG~\cite{liu2024liteefg} & 8,000 & $1.25 \times 10^{-5}$ & 1.77774846 & 232.649 \\
\addlinespace[2pt]
\multicolumn{5}{l}{value spread $8.58 \times 10^{-5}$; tightest bound $3.11 \times 10^{-4}$; \cref{eq:valuecert} holds} \\
\bottomrule
\end{tabular}}
\end{table*}

%% file: sections/G_profile.tex
\section{Kernel-Level Profile and Alternative Execution Models}
\label{app:profile}

This appendix opens the iteration with the vendor profilers and measures the
execution models a systems reader would propose in place of graph replay: more
iterations per graph, a generic tensor compiler, more CPU cores, and graph capture
of the prior GPU baseline.

\paragraph{What one iteration launches.}
\cref{tab:profile} traces 100 steady-state iterations with Nsight Systems inside an
NVTX window, in eager and graph mode, and replays two eager iterations under Nsight
Compute. The Aten operation count of \cref{tab:suite} is close to, but not the same
as, the kernel count: on the turn subgame 96 operations become 87 kernels (some
operations dispatch no kernel, a few dispatch two), and the graph path replays the
same kernels plus the device-resident counter update. Eager execution submits 87
host launches per iteration and keeps a kernel resident on the GPU for under half of
the window even with the tracer's own launch overhead inflating the iteration. Graph
replay issues one graph launch per iteration; the two further host launches per
iteration in \cref{tab:profile} are the fill kernels with which PyTorch's
\texttt{CUDAGraph.replay()} refreshes its random-number state on every replay, a
framework fixture the solver never reads, and the three extra kernels inside the
graph are those two fills plus the device-side iteration-counter increment. GPU
residency rises above 90\%. The remaining time is kernel execution: 87 kernels averaging under
5\,$\mu$s each, dominated by the gather kernels of the forward and backward passes
(38 per iteration, 58\% of GPU time), then the scatter and \texttt{index\_add}
kernels of the regret and strategy accumulation. Half of the kernels launch fewer
blocks than the A100 has SMs, so the device is latency bound on small grids rather
than throughput bound: the counters put DRAM traffic at 129\,MB per iteration, which
at the unprofiled 0.397\,ms is about 17\% of peak bandwidth averaged over the
iteration and 9\% time-weighted inside kernels, and SM throughput at 5\% of peak.
The bytes-moved model behind \cref{fig:scaling} counts only the solver's own
arrays and therefore reports a lower bound on this traffic.

\paragraph{More iterations per graph.}
Because the iteration counter advances on the device, any number of iterations can
be recorded into one graph (\cref{sec:cudagraph}). \cref{tab:altexec} records 1, 10,
and 100 iterations per graph: the per-iteration time is unchanged within
measurement noise on all three games. A graph launch costs a few microseconds
against a 0.304--0.557\,ms body, so the single-iteration graph already captures the
whole launch-amortization benefit, and the remaining cost is kernel execution.

\paragraph{A generic compiler on the same dataflow.}
\cref{tab:altexec} also re-measures the ladder under PyTorch 2.5.1+cu121 in one
process and wraps the compiled iteration body in \texttt{torch.compile} (Inductor),
driven with the same device-resident counter so the trace contains no per-iteration
Python scalars. The three ladder stages keep their PyTorch 1.13.1 ordering and magnitudes
(19$\times$ and 1.9$\times$ against 19$\times$ and 1.6$\times$), which settles the
framework-version question for \cref{fig:ladder}. On the turn subgame Inductor's
\texttt{default} mode runs 11\% faster than CUDA graph replay of the uncompiled body
(0.357 against 0.397\,ms); on the launch-bound Leduc tree it stays 1.4$\times$
slower, and \texttt{reduce-overhead} mode (which itself captures CUDA graphs) is no
faster in either case. The gather and scatter kernels that carry the iteration are
not fused into fewer launches by any tested Inductor configuration, so a generic compiler recovers little beyond the launch
amortization that graph replay already provides. It does so at a price the graph
path does not pay: the first call compiles for seconds (4.025--4.747\,s on the turn
subgame), and the compiled floats differ from the eager iterates because Inductor
reorders reductions, whereas graph replay runs the identical kernels in the
identical order. The game-level compilation of
\cref{sec:method} is what turns the tree walk into this flat dataflow in the first
place; the generic compiler can only consume the result.

\paragraph{Graph capture of the prior GPU baseline.}
Kim (2026)'s iteration cannot be captured. Its CuPy sparse products call cuSPARSE,
which CuPy refuses to execute under stream capture, and its per-level row slicing,
CSR index assignment, and boolean gathers each perform a host-to-device transfer per
iteration that capture forbids. We verified both on a Kuhn-sized instance:
\texttt{cupy.cuda.Stream.begin\_capture} raises on the first sparse product. The
representation is what makes the iteration capturable.

\paragraph{CPU thread scaling.}
\cref{fig:cpuscaling} (\cref{sec:results}) sweeps the compiled CPU solver from 1 to 28 threads pinned to
whole physical cores of one socket on the four largest games. Throughput saturates
well below the core count: the 8-thread arm of \cref{tab:matrix} sits within a
small factor of the 28-thread point on every game, because each depth block is a
gather or scatter over a few hundred thousand elements that PyTorch's intra-operator
pool cannot split efficiently. The 8-thread arm is therefore a representative CPU
ceiling for this dataflow.

\paragraph{Hand-fused kernels.}
The remaining execution model is to replace the generic gather and scatter kernels
with kernels written for this dataflow. \Cref{tab:kernels} measures a Triton
implementation that computes one depth level of the forward or backward pass per
launch and fuses regret matching, the regret update and the average-strategy update
into one kernel each, so an iteration issues $2D + 5$ launches against the
$8D + 32$ Aten operations of the generic path, the same count \cref{tab:suite} reports
(recounted under the PyTorch 2.5 dispatcher, it is unchanged); it drives the solver's own buffers and
index arrays and agrees with the generic path to $10^{-9}$ in float64. All four arms
run in one process on the PyTorch 2.5 stack that ships Triton, both with and without
graph capture, so the table separates the gain of fusion from the gain of launch
batching and places the compiled tensor path with graph replay on that frontier. The
compiled path keeps the property that motivates it: the same code runs every update
rule, every game, and the CPU, from one representation and without a kernel per
dataflow.

\input{figures/TABLE_T11_profile.tex}
\input{figures/TABLE_T12_altexec.tex}
\input{figures/TABLE_T15_kernels.tex}

%% file: figures/TABLE_T11_profile.tex
\begin{table*}[!htb]
\centering
\caption{Kernel-level profile of one steady-state CFR$^{+}$ iteration. Top: Nsight
Systems over 100 iterations; \emph{ops} are Aten operations, \emph{launches} the
host submissions per iteration (kernel launches plus graph launches), \emph{busy}
the fraction of the window with a kernel
resident on the GPU. The last column (\emph{unprof.}) is the unprofiled steady-state time in ms of
\cref{tab:matrix}; the tracer inflates eager launches. Bottom: Nsight Compute
hardware counters over two eager iterations (the graph replays the same kernels); \emph{MB/iter} is DRAM traffic per iteration,
the achieved bandwidth divides the measured DRAM traffic by the unprofiled
iteration time, and \emph{<1 wave} is the share of kernels whose grid has fewer
blocks than the A100's 108 SMs.}
\label{tab:profile}
{
\setlength{\tabcolsep}{3pt}
\begin{tabular*}{\textwidth}{@{\extracolsep{\fill}}llrrrrrr@{}}
\toprule
Game & Mode & Ops & Kern.\ & Launch.\ & Kern.\ ms & Busy & Unprof.\ \\
\midrule
HUNL turn & eager & 96 & 87 & 87 & 0.450 & 32\% & 0.617 \\
HUNL turn & graph & 96 & 90 & 3 & 0.395 & 94\% & 0.397 \\
Leduc & eager & 120 & 111 & 111 & 0.378 & 22\% & 0.738 \\
Leduc & graph & 120 & 114 & 3 & 0.311 & 93\% & 0.304 \\
\bottomrule
\end{tabular*}}

{
\setlength{\tabcolsep}{3pt}
\begin{tabular*}{\textwidth}{@{\extracolsep{\fill}}lrrrrr@{}}
\toprule
Game & MB/iter & GB/s (\% peak) & DRAM$_t$ & SM$_t$ & $<$1 wave \\
\midrule
HUNL turn & 129 & 325 (17\%) & 9\% & 5\% & 49\% \\
Leduc & 10 & 33 (2\%) & 1\% & 0\% & 100\% \\
\bottomrule
\end{tabular*}}
\end{table*}

%% file: figures/TABLE_T12_altexec.tex
\begin{table}[!htb]
\centering
\caption{Alternative execution models on the compiled dataflow, steady-state
ms/iteration (medians). Top: iterations recorded per CUDA graph under the paper's
PyTorch 1.13.1 stack. Bottom: the ladder re-measured under PyTorch 2.5.1+cu121, and
the same iteration body under \texttt{torch.compile}; ``fails'' marks a
configuration Inductor could not compile (\cref{app:profile}).}
\label{tab:altexec}
{
\setlength{\tabcolsep}{3pt}
\begin{tabular*}{\columnwidth}{@{\extracolsep{\fill}}lrr@{}}
\toprule
Execution model & HUNL turn & Leduc \\
\midrule
\multicolumn{3}{@{}l}{\emph{PyTorch 1.13.1}} \\
CUDA graph, 1 iteration per graph & 0.397 & 0.304 \\
CUDA graph, 10 iterations per graph & 0.391 & 0.299 \\
CUDA graph, 100 iterations per graph & 0.391 & 0.301 \\
\midrule
\multicolumn{3}{@{}l}{\emph{PyTorch 2.5.1+cu121}} \\
reference (eager) & 13.819 & 1.670 \\
compiled, eager & 0.740 & 0.850 \\
compiled, CUDA graph & 0.399 & 0.306 \\
\texttt{torch.compile} default & 0.357 & 0.415 \\
\texttt{torch.compile} reduce-overhead & 0.368 & 0.428 \\
\bottomrule
\end{tabular*}}
\end{table}

%% file: figures/TABLE_T15_kernels.tex
\begin{table*}[!htb]
\centering
\caption{Hand-fused Triton kernels against the compiled tensor path on the same
stack (PyTorch 2.5.1+cu121, Triton 3.1.0, A100). Steady-state ms/iteration, medians
over ten processes; \emph{ops} counts Aten operations per iteration on the generic
path and \emph{kern.}\ kernel launches on the fused path; \emph{graph} is CUDA graph replay. The fused kernels compute one depth level
of the forward or backward pass per launch and fuse regret matching, the regret
update and the average-strategy update into one kernel each; the last column is the
gain of fusion on top of graph replay.}
\label{tab:kernels}
{
\setlength{\tabcolsep}{3pt}
\begin{tabular}{lrrrrrrr}
\toprule
 & & & \multicolumn{2}{c}{compiled} & \multicolumn{2}{c}{fused Triton} & \\
\cmidrule(lr){4-5}\cmidrule(lr){6-7}
Game & Ops & Kern.\ & eager & graph & eager & graph & Gain \\
\midrule
Leduc & 120 & 27 & 0.866 & 0.309 & 0.393 & 0.133 & 2.318$\times$ \\
Liar's Dice & 152 & 35 & 1.080 & 0.562 & 0.514 & 0.255 & 2.203$\times$ \\
Battleship & 112 & 25 & 0.819 & 0.380 & 0.374 & 0.150 & 2.533$\times$ \\
HUNL river & 64 & 13 & 0.474 & 0.213 & 0.195 & 0.107 & 1.985$\times$ \\
HUNL turn & 96 & 21 & 0.698 & 0.400 & 0.307 & 0.193 & 2.075$\times$ \\
\bottomrule
\end{tabular}}
\end{table*}

%% file: sections/H_compiler.tex
\section{Compiler Passes and Memory Plan}
\label{app:compiler}

\Cref{sec:method} presents the compiled representation by what it contains;
this appendix presents it by how it is produced. \Cref{alg:compile} lists the
logical passes and \cref{fig:compiler} draws them on the example game of
\cref{fig:archpipeline}; these are data dependencies.
Node and edge walks plus the stable depth sort cost
$O(\numnodes + \numedges \log \numedges)$ time and $O(\numnodes + \numedges)$
host storage; device allocations may precede planning. \Cref{tab:overhead} and
\cref{tab:lifecycle} report the measured build times; on the 434k-state turn
subgame the game specification dominates and compilation proper is a fraction
of a second.

\begin{algorithm}[t]
\caption{\textsc{Compile}(GameSpec $G$, rule $\rho$, device $d$) $\to$ solver}
\label{alg:compile}
\begin{algorithmic}[1]
\State \textbf{Validate} $G$: node ids contiguous, one parent per non-root node
(tree), every information set owned by one player and its action list equal to
each member node's, chance probabilities present for every chance edge.
\Comment{structural checks}
\State \textbf{Slots}: sort information sets by key; $\mathrm{off}(I) \gets$
prefix sum of $|\acts(I)|$; $\numslots \gets \sum_I |\acts(I)|$.
\State \textbf{Node arrays}: number nodes parent-before-child; emit type, owner,
information set, action offset/count, terminal payoff $u$, chance probability
$p_c$ of each chance action.
\State \textbf{Edge arrays}: for each edge $(h,a,h')$ emit parent $h$, child
$h'$, slot $\mathrm{off}(I(h)) + a$ or the sentinel for chance edges, owner
flags, sign $s(e)$, multiplier $p_c$ or $1$. \Comment{\cref{sec:compiled}}
\State \textbf{Depth schedule}: $\mathrm{depth}(h)$ by one forward walk; stable
sort edges by $\mathrm{depth}(\text{parent})$; block $b$ = all edges of one
depth; assert parent depth $<$ child depth for every edge.
\Comment{$D$ blocks, \cref{sec:dataflow}(b)}
\State \textbf{Chance folding}: walk the blocks once, $\pi_c(h') \gets
\pi_c(h)\, p_c(h,a,h')$; template $\cfv_{\mathrm{tmpl}}(z) \gets \pi_c(z)\,u(z)$;
retain the root action owning each node and $\pi_c$ with the root factor
removed, for later range updates. \Comment{\cref{eq:template}}
\State \textbf{Index emission}: per block, dual-lane $P_2, C_2, S_2$ (chance and
other-player entries point at the sentinel) and single-lane $P_{\mathrm{val}},
C_{\mathrm{val}}, S_{\mathrm{val}}$; flat over decision edges $P_{\mathrm{reg}},
C_{\mathrm{reg}}, S_{\mathrm{reg}}, O_{\mathrm{reg}}, M_{\mathrm{reg}}, s$; per
player, the halves of the flat arrays for alternating updates.
\State \textbf{Memory plan}: estimate scheduled storage from
$\numnodes, \numslots, \numinfosets$ and dtype; record budget fit as a
planning indicator, provision iteration buffers, set root reach to one.
\State \textbf{Rule lowering}: bind $\rho$'s hooks into the iteration body:
\textsc{discount}$(\regret,\bar{s},t)$ before the passes,
\textsc{after}$(\regret)$ after them, and a prediction buffer when $\rho$
matches against $\regret + r_t$. \Comment{\cref{tab:rules}}
\State \textbf{Runtime}: on CUDA, three eager warmup iterations, capture of one
(or $k$) iteration bodies, record of the persistent buffer addresses; replay
thereafter with the device-side counter refilled per call.
\Comment{\cref{sec:cudagraph}}
\end{algorithmic}
\end{algorithm}

\input{sections/fig_compiler_passes}

\paragraph{Planning scope.}
The memory plan is an estimate: it sizes the scheduled buffers from the
game's counts and reports whether they fit the configured budget, and the
solver reads that verdict as a planning indicator and builds regardless. The regret and average-strategy vectors are allocated before the
schedule is planned, so the passes above describe the logical order of the
build.

\paragraph{Input representation.}
The input is a tabular game specification: a map from node id to (type, acting
player, information-set key, action list with child ids, terminal payoff), a map
from information-set key to (player, action names), and a map from node id to the
chance probability of reaching it from its parent. Native games and the
OpenSpiel and file-based converters all produce this one structure, so every
pass below it is shared; a converter is validated by the structural and
behavioral checks of \cref{app:protocol}.

\paragraph{Mutable and immutable fields.}
\Cref{tab:memplan} separates what the graph pins from what a caller may change.
Every index array is immutable after emission. Two inputs are mutable in place
without recompilation: the terminal payoffs and the root chance distribution,
both of which enter the iteration only through the values template
(\cref{eq:template}), which \texttt{set\_payoffs} and \texttt{set\_root\_chance}
rewrite in place from the retained per-node root owner and below-root chance
product. The solver state is reset in place by \texttt{reset()}, which zeroes
the two accumulators and the prediction buffer and resets the iteration counter;
because no buffer is reallocated, the captured graph stays valid and the next
call replays from iteration one. \Cref{tab:lifecycle} reports the cost of each
of these operations.

\begin{table*}[!htb]
\centering
\caption{Memory plan of the compiled solver. Sizes are element counts; the
dtype is the configured compute dtype except where noted. \emph{Pinned} marks
buffers whose addresses the captured graph records; \emph{Mutable} marks the
inputs a caller may rewrite in place between solves.}
\label{tab:memplan}
{
\setlength{\tabcolsep}{4pt}
\begin{tabular}{llccl}
\toprule
Buffer & Size & Pinned & Mutable & Written \\
\midrule
node arrays (type, owner, set, offsets) & $\numnodes$, int & -- & -- & build \\
edge index arrays (all of step 7) & $\approx 7\numedges + 2\numedges$, int64 & -- & -- & build \\
chance reach $\pi_c$ & $\numnodes$ & -- & via root & build \\
values template $\cfv_{\mathrm{tmpl}}$ & $\numnodes$ & yes & yes & build, updates \\
regrets $\regret$ & $\numslots$ & yes & reset & every iteration \\
strategy sum $\bar{s}$ (float64) & $\numslots$ & yes & reset & every iteration \\
prediction buffer (PCFR$^{+}$) & $\numslots$ & yes & reset & every iteration \\
strategy $\strat_{\mathrm{ext}}$ (+ sentinel) & $\numslots + 1$ & yes & -- & every iteration \\
dual-lane reach & $2\numnodes$ & yes & -- & every iteration \\
values $\cfv$ & $\numnodes$ & yes & -- & every iteration \\
set totals & $\numinfosets$ & yes & -- & every iteration \\
iteration counter $t$ (device) & 1, float32 & yes & reset & every replay \\
CUDA-graph pool & driver-managed & -- & -- & capture \\
\bottomrule
\end{tabular}}
\end{table*}

\paragraph{Update rules share one schedule.}
\Cref{tab:rules} counts the Aten operations each shipped rule issues per
iteration. The rules differ only in the hooks of step 9 (the dot matrix of
\cref{fig:compiler}), which act on the slot vectors: the delta against CFR$^{+}$ is a constant per rule (from $-1$ for vanilla
CFR, which drops the clamp, to $+29$ for DCFR's discount schedule) and
does not change between Leduc and the 434k-state turn subgame, while the
tree-dependent part of the schedule, the $8D$ operations of the two passes, is
shared verbatim. Every rule's body captures into a CUDA graph, because the hooks
read the iteration index from the device-side counter. The convergence of the rules is reported in \cref{app:extra}.

\input{figures/TABLE_T16_rules.tex}

%% file: sections/fig_compiler_passes.tex
\begin{figure*}[!htb]
\centering
\resizebox{\textwidth}{!}{%
\begin{tikzpicture}[
    font=\sffamily\small,
    dec1/.style={circle, draw=teal!60!black, fill=teal!45, minimum size=2.6mm, inner sep=0pt},
    dec2/.style={circle, draw=violet!60!black, fill=violet!40, minimum size=2.6mm, inner sep=0pt},
    chn/.style={diamond, draw=black!65, fill=black!22, minimum size=3.2mm, inner sep=0pt},
    ter/.style={rectangle, draw=black!70, fill=black!55, minimum size=2.0mm, inner sep=0pt},
    edge/.style={black!55, line width=0.7pt},
    flow/.style={-{Stealth[length=2.2mm]}, line width=1.0pt, black!60},
    arr/.style={-{Stealth[length=1.6mm]}, line width=0.9pt, black!60},
    fold/.style={-{Stealth[length=1.8mm]}, densely dashed, line width=0.8pt, black!55},
    ret/.style={-{Stealth[length=1.8mm]}, densely dashed, line width=0.9pt, orange!75!black},
    hook/.style={draw=orange!75!black, densely dashed, rounded corners=1.5pt, inner sep=1.5pt, text=orange!60!black},
    tick/.style={orange!80!black, line width=1.0pt},
    gtick/.style={orange!95!black, line width=1.6pt},
    note/.style={font=\sffamily\small, text=black!55},
    lanelabel/.style={font=\sffamily\small\bfseries, text=black!60},
    hdr/.style={font=\sffamily\small, text=black!65, anchor=west},
    badge/.style={circle, fill=orange!85!black, text=white,
                  font=\sffamily\small\bfseries, inner sep=1.0pt}
]
\def\cw{0.34}
\newcommand{\cell}[3]{\draw[black!60, fill=#3] ({#1},{#2}) rectangle ++(\cw,\cw);}
\newcommand{\seps}[3]{\foreach \s in {#3}
  \draw[black!85, line width=1.0pt] ({#1+\s*\cw},#2-0.03) -- ++(0,\cw+0.06);}
\newcommand{\blockstrip}[3]{%
  \foreach [count=\i from 0] \c in #3{
    \pgfmathsetmacro{\g}{(\i>1)*0.12 + (\i>5)*0.12}
    \cell{#1+\i*\cw+\g}{#2}{\c}}}
\def\Pfills{black!22, black!22, teal!45, teal!45, teal!45, teal!45, violet!40, violet!40, violet!40, violet!40}
\def\Cfills{teal!45, teal!45, violet!40, black!55, violet!40, black!55, black!55, black!55, black!55, black!55}
\def\Sfills{black!45, black!45, teal!45, teal!45, teal!45, teal!45, violet!40, violet!40, violet!40, violet!40}
\def\Qone{black!45, black!45, teal!45, teal!45, teal!45, teal!45, black!45, black!45, black!45, black!45}
\def\Qtwo{black!45, black!45, black!45, black!45, black!45, black!45, violet!40, violet!40, violet!40, violet!40}
\def\Nfills{black!22, teal!45, teal!45, violet!40, black!55, violet!40, black!55, black!55, black!55, black!55, black!55}
\def\Tfills{white, white, white, white, black!55, white, black!55, black!55, black!55, black!55, black!55}

\node[chn]  (c0)  at (1.45,-0.70) {};
\node[dec1] (p1a) at (0.85,-1.40) {};
\node[dec1] (p1b) at (2.05,-1.40) {};
\node[dec2] (q2a) at (0.50,-2.10) {};
\node[ter]  (t2a) at (1.20,-2.10) {};
\node[dec2] (q2b) at (1.70,-2.10) {};
\node[ter]  (t2b) at (2.40,-2.10) {};
\node[ter]  (t3a) at (0.30,-2.80) {};
\node[ter]  (t3b) at (0.70,-2.80) {};
\node[ter]  (t3c) at (1.50,-2.80) {};
\node[ter]  (t3d) at (1.90,-2.80) {};
\draw[edge, densely dashed] (c0) -- (p1a); \draw[edge, densely dashed] (c0) -- (p1b);
\draw[edge] (p1a) -- (q2a); \draw[edge] (p1a) -- (t2a);
\draw[edge] (p1b) -- (q2b); \draw[edge] (p1b) -- (t2b);
\draw[edge] (q2a) -- (t3a); \draw[edge] (q2a) -- (t3b);
\draw[edge] (q2b) -- (t3c); \draw[edge] (q2b) -- (t3d);
\node[note] at (1.45,-3.35) {game specification $G$};
\node[badge] at (3.05,-1.05) {1};
\node[diamond, draw=black!65, fill=blue!6, minimum size=5.2mm, inner sep=0pt] (gate) at (3.05,-1.65) {};
\draw[green!45!black, line width=1.0pt] (2.93,-1.65) -- (3.02,-1.76) -- (3.18,-1.54);
\draw[arr] (2.55,-1.65) -- (gate.west);
\draw[arr] (gate.east) -- (3.60,-1.65);
\def\sxa{3.65}
\node[badge] at (3.78,-0.42) {2};
\node[hdr] at (3.98,-0.42) {slots: prefix sum over sets};
\foreach [count=\i from 0] \c in {teal!45, teal!45, teal!45, teal!45, violet!40, violet!40, violet!40, violet!40}
  \cell{\sxa+\i*\cw}{-0.95}{\c}
\seps{\sxa}{-0.95}{0, 2, 4, 6, 8}
\node[badge] at (3.78,-1.45) {3};
\node[hdr] at (3.98,-1.45) {nodes, parent before child};
\foreach [count=\i from 0] \c in \Nfills \cell{\sxa+\i*\cw}{-1.98}{\c}
\seps{\sxa}{-1.98}{0, 1, 3, 7, 11}
\node[badge] at (3.78,-2.48) {4};
\node[hdr] at (3.98,-2.48) {edges: parent, child, slot};
\foreach [count=\i from 0] \c in \Pfills \cell{\sxa+\i*\cw}{-3.00}{\c}
\foreach [count=\i from 0] \c in \Cfills \cell{\sxa+\i*\cw}{-3.34}{\c}
\foreach [count=\i from 0] \c in \Sfills \cell{\sxa+\i*\cw}{-3.68}{\c}
\node[note, anchor=east] at (3.57,-2.83) {$P$};
\node[note, anchor=east] at (3.57,-3.17) {$C$};
\node[note, anchor=east] at (3.57,-3.51) {$S$};
\def\sxb{7.95}
\node[badge] at (8.08,-0.42) {6};
\node[hdr] at (8.28,-0.42) {template $\cfv_{\mathrm{tmpl}}$};
\foreach [count=\i from 0] \c in \Tfills \cell{\sxb+\i*\cw}{-0.95}{\c}
\seps{\sxb}{-0.95}{0, 1, 3, 7, 11}
\draw[arr] (7.45,-1.81) -- (7.70,-1.81) |- (7.90,-0.78);
\draw[arr] (7.45,-2.83) -- (7.90,-2.83);
\blockstrip{\sxb}{-3.00}{\Pfills}
\foreach \a/\b/\l in {7.95/8.63/0, 8.75/10.11/1, 10.23/11.59/2}{
  \draw[decorate, decoration={brace, mirror, amplitude=2pt}, black!55] (\a,-3.07) -- (\b,-3.07)
    node[note, midway, below=1.5pt] {$B_\l$};}
\draw[fold] (8.29,-2.96) -- (8.29,-0.99) node[note, pos=0.45, right=0.6mm] {fold chance};
\node[badge] at (8.08,-3.72) {5};
\node[hdr] at (8.28,-3.72) {depth schedule: stable sort by parent depth, $D{=}3$ blocks};
\def\sxc{12.45}
\node[badge] at (12.58,-0.42) {7};
\node[hdr] at (12.78,-0.42) {index emission per block (dark = $\bot$)};
\blockstrip{\sxc}{-0.95}{\Qone}
\blockstrip{\sxc}{-1.29}{\Qtwo}
\blockstrip{\sxc}{-1.63}{\Sfills}
\node[note, anchor=west] at (16.15,-0.78) {$q_1$};
\node[note, anchor=west] at (16.15,-1.12) {$q_2$};
\node[note, anchor=west] at (16.15,-1.46) {$q_{\mathrm{val}}$};
\draw[arr] (11.66,-2.83) -- (12.05,-2.83) |- (12.40,-1.29);
\draw[arr] (11.66,-2.83) -- (12.40,-2.83);
\foreach [count=\i from 0] \c in {teal!45, teal!45, teal!45, teal!45, violet!40, violet!40, violet!40, violet!40}
  \cell{\sxc+\i*\cw}{-3.00}{\c}
\node[hdr] at (12.45,-2.48) {flat over decision edges};
\foreach \a/\b/\l in {12.45/13.81/1, 13.81/15.17/2}{
  \draw[decorate, decoration={brace, mirror, amplitude=2pt}, black!55] (\a+0.02,-3.07) -- (\b-0.02,-3.07)
    node[note, midway, below=1.5pt] {player \l};}
\draw[flow] (16.95,-4.18) -- (16.95,-5.40);
\node[note, anchor=east] at (16.80,-4.78) {copy to the device, once};

\begin{scope}[yshift=-0.45cm]
\node[badge] at (0.35,-5.30) {8};
\node[hdr] at (0.55,-5.30) {memory plan};
\def\bx{1.95}      
\def\bs{0.085}     
\def\px{4.85}      
\def\mx{6.15}      
\def\ypred{-8.75}  
\node[note, anchor=base] at (\px,-5.68) {pinned};
\node[note, anchor=base] at (\mx,-5.68) {mutable};
\foreach [count=\r from 0] \l/\n/\f/\p/\m in {
  {index arrays}/26/black!22/0/0,
  {$\cfv_{\mathrm{tmpl}}$}/11/white/1/1,
  {$\regret$}/8/orange!30/1/1,
  {$\bar{s}$ (f64)}/8/orange!15/1/1,
  {prediction}/8/orange!15/1/1,
  {$\strat_{\mathrm{ext}}$}/9/green!25/1/0,
  {$\reach_{1,2}$}/22/teal!28/1/0,
  {$\cfv$}/11/gray!15/1/0,
  {set totals}/4/gray!15/1/0,
  {counter $t$}/1/gray!15/1/1}{
  \pgfmathsetmacro{\y}{-5.95-\r*0.30}
  \node[note, anchor=east] at (\bx-0.10,\y) {\l};
  \draw[black!60, fill=\f] (\bx,\y-0.10) rectangle ++({\n*\bs},0.20);
  \ifnum\p=1 \fill[blue!60!black] (\px,\y) circle (1.4pt); \fi
  \ifnum\m=1 \draw[orange!75!black, line width=0.9pt] (\mx,\y) circle (1.6pt); \fi
}
\fill[white] (3.35,-6.13) -- (3.42,-5.77) -- (3.52,-5.77) -- (3.45,-6.13) -- cycle;
\draw[black!60] (3.35,-6.13) -- (3.42,-5.77); \draw[black!60] (3.45,-6.13) -- (3.52,-5.77);
\node[note, anchor=west] at (4.02,-5.95) {${\approx}9\numedges$};

\node[badge] at (6.95,-5.30) {9};
\node[hdr] at (7.15,-5.30) {rule lowering};
\node[draw=black!30, rounded corners=4pt, fill=black!2, minimum width=20mm,
      minimum height=33mm, anchor=north] (body) at (8.10,-5.95) {};
\node[hook] at (8.10,-6.35) {discount};
\node[note] at (8.10,-6.95) {forward};
\node[note] at (8.10,-7.55) {backward};
\node[hook] at (8.10,-8.15) {after};
\node[hook] at (8.10,\ypred) {prediction};
\node[note, anchor=north] at ($(body.south)+(0,-0.05)$) {iteration body};
\def\rx{9.75}
\foreach [count=\j from 0] \r in {CFR, CFR$^{+}$, PCFR$^{+}$, DCFR}
  \node[note, rotate=60, anchor=west, inner sep=0pt] at ({\rx+\j*0.55},-6.10) {\r};
\foreach \y in {-6.35,-8.15,\ypred} \draw[black!25] (9.15,\y) -- (11.65,\y);
\foreach \j in {0,...,3} \foreach \y in {-6.35,-8.15,\ypred}
  \draw[black!35, fill=white] ({\rx+\j*0.55},\y) circle (1.5pt);
\foreach \j/\y in {1/-8.15, 2/-6.35, 2/-8.15, 2/\ypred, 3/-6.35}
  \fill[orange!85!black] ({\rx+\j*0.55},\y) circle (1.5pt);

\node[badge] at (12.35,-5.30) {10};
\node[hdr] at (12.60,-5.30) {runtime};
\def\wz{13.35}
\node[note, anchor=east] at (\wz-0.08,-6.05) {warm-up};
\draw[black!35] (\wz,-6.05) -- (16.05,-6.05);
\foreach \i in {0,1,2}{
  \foreach \k in {0,...,5}{
    \draw[tick] ({\wz+0.10+\i*0.90+\k*0.13},-5.92) -- ++(0,-0.26);}}
\node[draw=orange!80!black, fill=orange!25, rounded corners=1.5pt,
      minimum height=3.4mm, inner sep=1.6pt, anchor=west] (rec) at (16.15,-6.05) {record};
\node[note, anchor=east] at (\wz-0.08,-7.05) {replay};
\draw[black!35] (\wz,-7.05) -- (16.35,-7.05);
\foreach \i in {0,...,3}
  \draw[gtick] ({\wz+0.20+\i*0.90},-6.86) -- ++(0,-0.38);
\draw[flow, line width=0.9pt] (16.35,-7.05) -- ++(0.30,0);
\node[note, anchor=west] at (16.70,-7.05) {$\times T$};
\draw[ret] (16.35,-7.65) -- (16.35,-9.95) -- (\mx,-9.95) -- (\mx,-8.90);
\node[note, anchor=south east, text=orange!60!black, align=right] at (16.25,-9.90)
  {between solves:\\ rewrite mutable buffers only;\\ the captured graph stays valid};
\end{scope}

\begin{scope}[on background layer]
\node[draw=blue!45!black!30, rounded corners=3pt, inner sep=1.2mm, fill=blue!3,
      fit={(0.10,-0.25) (17.40,-3.98)}] (hostlane) {};
\node[draw=orange!70!black!30, rounded corners=3pt, inner sep=1.2mm, fill=orange!4,
      fit={(0.10,-5.55) (17.40,-10.55)}] (devlane) {};
\end{scope}
\node[lanelabel, anchor=south west] at (hostlane.north west)
  {Build on the host, once per game (\cref{alg:compile}, lines 1--7)};
\node[lanelabel, anchor=south west] at (devlane.north west)
  {Device: allocate once, replay every iteration (lines 8--10)};
\end{tikzpicture}}
\caption{The compiler passes of \cref{alg:compile} on the example game of
\cref{fig:archpipeline}; badges are algorithm lines. The host lane flattens
the tree into slot, node, and edge strips, sorts the edges into $D$ depth
blocks, folds chance into the values template, and emits the per-lane index
arrays, where dark cells point at the sentinel $\bot$. The device lane sizes
every buffer of \cref{tab:memplan} once (bars scale with element count; dots
mark pinned addresses, rings the inputs rewritten between solves), binds the
hooks each update rule uses (dots: bound), and after three eager iterations
records one graph that is replayed thereafter; between solves only the ringed
buffers are rewritten.}
\Description{Two stacked lanes. The top lane shows a small game tree passing a
diamond-shaped validation gate into three flat strips of colored cells for
slots, nodes, and edges, then an arrow into the same edges regrouped into three
depth blocks with braces, a dashed arrow folding chance into a template strip
whose terminal cells are dark, and finally three index strips per block where
opponent and chance cells are dark. A thick arrow drops into the bottom lane
labeled copy once. The bottom lane has a bar chart of buffer sizes with a
pinned column of blue dots and a mutable column of orange rings, a rounded
capsule listing the iteration body with three dashed hook slots and a dot
matrix of four update rules showing which hooks each binds, and two timelines:
a warm-up row with dense ticks, a record box, and a replay row with one thick
tick per iteration. A dashed orange return path leads from the replay row back
to the mutable column.}
\label{fig:compiler}
\end{figure*}

%% file: figures/TABLE_T16_rules.tex
\begin{table*}[!htb]
\centering
\caption{Generated operation schedule per update rule (Aten operations per iteration,
counted under \texttt{TorchDispatchMode}, load-immune). $\Delta$ is the rule's
difference from CFR$^{+}$; the last column lists the operations the rule adds to or
removes from the CFR$^{+}$ schedule; \emph{graph} records whether the rule's iteration
body captured into a CUDA graph on the A100. The delta is identical on both games: rule logic touches only the slot vectors, so its
cost is independent of tree size.}
\label{tab:rules}
{
\setlength{\tabcolsep}{4pt}
\begin{tabular}{lrrrcl}
\toprule
Rule & Leduc & HUNL turn & $\Delta$ & Graph & Schedule difference vs.\ CFR$^{+}$ \\
\midrule
CFR & 119 & 95 & -1 & yes & 0 kinds added, 1 removed \\
CFR$^{+}$ & 120 & 96 & 0 & yes & baseline \\
PCFR$^{+}$ & 133 & 109 & +13 & yes & 12 kinds added (2$\times$\texttt{where}, 1$\times$\texttt{add}, 1$\times$\texttt{div}, \ldots) \\
DCFR & 149 & 125 & +29 & yes & 10 kinds added, 1 removed (7$\times$\texttt{where}, 4$\times$\texttt{gt}, 3$\times$\texttt{div}, \ldots) \\
\bottomrule
\end{tabular}}
\end{table*}

%% file: sections/F_proofs.tex
\section{Proofs}
\label{app:proofs}

Throughout, the compiled game is a finite tree: every node other than the root is the
child of exactly one edge, and $d(c(e)) = d(p(e)) + 1$ for every edge $e$. We write
$z \succeq h$ for ``$z$ is a terminal in the subtree rooted at $h$'', $\pi_c(h)$ for the
product of chance probabilities on the root-to-$h$ path, $\pi_c(h \!\to\! z)$ for the
product over the $h$-to-$z$ path, and $\reach^{\strat}_{1,2}(h \!\to\! z)$ for the
product of both players' strategy probabilities over that same path. Notation and the
operators $\fwd, \bwd, \acc$ are those of \cref{sec:operators}.

\subsection{Proposition~\ref{prop:chance}: folded chance is exact}

\begin{proof}
Induct on the height $\eta(h)$ of $h$, the number of edges on the longest path from $h$
to a terminal below it. Because $\bwd_\ell$ is applied in decreasing $\ell$, when the
block containing $h$'s outgoing edges executes, every child of $h$ has already received
its final value.

\emph{Base case} $\eta(h) = 0$. Then $h \in \termnodes$, no block writes to $h$, and $h$
retains its template value $\cfv_{\mathrm{tmpl}}(h) = \pi_c(h)\, u(h)$ from
\cref{eq:template}. The right side of \cref{eq:chanceexact} has the single term
$z = h$, with $\pi_c(h \!\to\! h) = \reach^{\strat}_{1,2}(h \!\to\! h) = 1$, so it also
equals $\pi_c(h)\,u(h)$.

\emph{Inductive step.} Let $\eta(h) = k > 0$ and assume the claim for all nodes of
height below $k$; every child of $h$ has height at most $k-1$. Since $h \notin \termnodes$,
its template entry is $0$, so after $\bwd_{d(h)}$ the accumulated value is exactly
\begin{equation}
    \cfv(h) \;=\; \sum_{e \,:\, p(e) = h} \strat_{\mathrm{ext}}\big(q_{\mathrm{val}}(e)\big)\, \cfv\big(c(e)\big),
    \label{eq:proofrec}
\end{equation}
where $q_{\mathrm{val}}$ is the backward slot map of \cref{sec:operators}, which sends
chance edges to the sentinel. Two
cases.

If $h$ is a chance node, every outgoing edge is a chance edge, so
$\strat_{\mathrm{ext}}(q_{\mathrm{val}}(e)) = 1$ for all of them and
$\reach^{\strat}_{1,2}(h \!\to\! z) = \reach^{\strat}_{1,2}(c(e) \!\to\! z)$ for
$z \succeq c(e)$. Substituting the inductive hypothesis into \cref{eq:proofrec},
\begin{equation*}
    \cfv(h) = \sum_{e \,:\, p(e) = h} \sum_{z \succeq c(e)}
        \pi_c(c(e) \!\to\! z)\, \pi_c(c(e))
        \reach^{\strat}_{1,2}(c(e) \!\to\! z)\, u(z).
\end{equation*}
For a chance edge $e$ with probability $\gamma(e)$ we have
$\pi_c(c(e)) = \gamma(e)\, \pi_c(h)$ and
$\pi_c(h \!\to\! z) = \gamma(e)\, \pi_c(c(e) \!\to\! z)$, hence
$\pi_c(c(e) \!\to\! z)\, \pi_c(c(e)) = \pi_c(h \!\to\! z)\, \pi_c(h)$ --- the factor
$\gamma(e)$ moves between the two arguments and the product is invariant. The summand
therefore already has the form required by \cref{eq:chanceexact}, and it is the chance
edge's \emph{multiplier} being $1$ that prevents $\gamma(e)$ from being applied a second
time here.

If $h$ is a decision node of player $i$, each outgoing edge $e$ is a decision edge with
$\strat_{\mathrm{ext}}(q_{\mathrm{val}}(e)) = \strat(q(e))$, and $\pi_c(c(e)) = \pi_c(h)$ because no
chance probability sits on a decision edge. Then
$\reach^{\strat}_{1,2}(h \!\to\! z) = \strat(q(e))\, \reach^{\strat}_{1,2}(c(e) \!\to\! z)$
and $\pi_c(h \!\to\! z) = \pi_c(c(e) \!\to\! z)$ for $z \succeq c(e)$, so substituting
the inductive hypothesis again gives summands of the required form.

In both cases the outer sum ranges over the children of $h$, whose terminal sets
$\{z : z \succeq c(e)\}$ partition $\{z : z \succeq h\}$ because the game is a tree.
Every terminal below $h$ therefore contributes to $\cfv(h)$ exactly once, which is the
no-double-counting claim, and collecting the terms yields \cref{eq:chanceexact}. Taking
$h$ to be the root, where $\pi_c(h) = 1$ and $\pi_c(h \!\to\! z) = \pi_c(z)$, gives the
expected payoff under $\strat$.
\end{proof}

The proof also identifies what would break the invariant: any dynamic multiplier other
than $1$ on a chance edge would apply $\gamma(e)$ twice, once at the template and once
in the recurrence.

\subsection{Proposition~\ref{prop:lanes}: the dual-lane update is exact}

\begin{proof}
Fix a lane $i$, a block $B_\ell$, and an edge $e \in B_\ell$. By \cref{eq:opfwd},
$\fwd_{\ell,i}$ writes $\reach_i(c(e)) = \reach_i(p(e))\, \strat_{\mathrm{ext}}(q_i(e))$
and leaves every node that is not a child in $B_\ell$ unchanged. Compare with
\cref{eq:lanes} by cases on the owner of $e$.

If $e$ is a decision edge of player $i$, then $q_i(e) = q(e)$ by \cref{eq:slotmap} and
$q(e) \neq \bot$, so $\strat_{\mathrm{ext}}(q_i(e)) = \strat(\mathrm{slot}(p(e), a))$
for the action $a$ labelling $e$. The written value
$\reach_i(p(e))\, \strat(\mathrm{slot}(p(e), a))$ is the first branch of
\cref{eq:lanes}.

Otherwise $e$ is a decision edge of the opponent or a chance edge, so $q_i(e) = \bot$
and $\strat_{\mathrm{ext}}(\bot) = 1$ by construction. The written value is
$\reach_i(p(e)) \cdot 1 = \reach_i(p(e))$, the second branch.

The two cases are exhaustive and agree with \cref{eq:lanes} pointwise, so the operator
computes the case definition exactly. Since $q_i$ is materialized as an index array at
compile time, execution performs one gather at those indices and no predicate; the two
lanes occupy disjoint halves of a length-$2\numnodes$ buffer, so a single invocation over
the doubled index range advances both. Finally, assignment semantics are sound: within
$B_\ell$ all parents lie at depth $\ell$ and all children at depth $\ell+1$, and in a
tree each child has a unique incoming edge, so no destination is written twice and each
$\reach_i(p(e))$ read in block $\ell$ was written in block $\ell - 1$ (or initialized at
the root).
\end{proof}

\subsection{Proposition~\ref{prop:depth}: the depth schedule is shortest}

\begin{proof}
(i) Let $e, e'$ satisfy $c(e) = p(e')$. Then $d(p(e')) = d(c(e)) = d(p(e)) + 1$, so $e$
and $e'$ fall in blocks $B_\ell$ and $B_{\ell+1}$ with $\ell = d(p(e))$: they are never
in the same group, and executing blocks in increasing $\ell$ places $e$ before $e'$.
The blocks are nonempty by the definition of $\mathcal{D}$ in \cref{eq:blocks} and there
are $|\mathcal{D}| = D$ of them.

(ii) Let $\ell_{\min} < \dots < \ell_{\max}$ enumerate $\mathcal{D}$. Pick any edge
$e_{\max} \in B_{\ell_{\max}}$ and walk from $p(e_{\max})$ to the root, collecting the
incoming edge of each node. This produces edges $e_{\ell}$ with $d(p(e_\ell)) = \ell$
for every $\ell < \ell_{\max}$ with $B_\ell \neq \emptyset$, chained so that
$c(e_\ell) = p(e_{\ell+1})$ whenever both are on the walk. Every depth below
$\ell_{\max}$ that carries an edge is met, because the walk passes through one node per
depth and that node's incoming edge has parent depth one less. The walk therefore yields
a chain of $D$ edges, consecutive ones related by $c(e) = p(e')$. Any
dependency-respecting schedule must put each pair in different groups and in the chain's
order, so it uses at least $D$ groups.

(iii) By \cref{eq:opfwd,eq:opbwd,eq:opacc} each block contributes a fixed number of
operator invocations --- one $\fwd$ forward, one $\bwd$ and one $\acc$ backward, each a
constant number of gather/multiply/scatter calls --- and the remaining work (regret
matching, template initialization, the averaging and clamp steps of
\cref{alg:iteration}) is a fixed sequence independent of the block count. Writing $c_2$
for the per-block count and $c_1$ for the rest gives $c_1 + c_2 D$. Both constants are
independent of $\numnodes$ and $\numedges$: tree size enters only the \emph{length} of the
index arrays each invocation is given.
\end{proof}

On the measured suite the fitted constants are $c_1 \approx 31$ and
$c_2 \approx 8$ Aten operations, and the regression gate of \cref{sec:correctness}
asserts the count stays under $40 + 10D$.

\subsection{Proposition~\ref{prop:horizon}: the averaging horizon}

\begin{proof}
For an ideal real value $S$ in the normal range, let $2^k\le S<2^{k+1}$.
Its binade spacing is $\ulp(S)=2^{k-p+1}$, hence
$2^{-p}S<\ulp(S)\le 2^{1-p}S$.
The assumption $S_T=\Theta(T^2)$ therefore gives
$\ulp(S_T)=\Theta(T^2 2^{-p})$. Since $cT\le T x_T\le T$,
dividing proves \cref{eq:horizon}. Without the lower bound on $x_T$,
the numerator is only $O(T)$, giving the stated upper bound.
These statements concern the ideal sum. The recursively rounded sum can
depart from pointwise monotonicity and from any exact crossing constant.

For local rounding, let $\widehat{S}_{T-1}$ be the actual nonnegative,
finite floating-point accumulator and let its next larger representable
value also be finite. Define its upward spacing by
\[
    g_T=\operatorname{nextUp}(\widehat{S}_{T-1})-\widehat{S}_{T-1}.
\]
For the actual nonnegative increment $\widehat{\delta}_T$ supplied to the
addition, round-to-nearest yields
\[
    \operatorname{fl}(\widehat{S}_{T-1}+\widehat{\delta}_T)
    =\widehat{S}_{T-1}
    \quad\text{if}\quad 0\le\widehat{\delta}_T<g_T/2.
\]
At equality, ties are resolved by the format's tie-breaking rule.
This criterion uses the old rounded accumulator.
A later, larger increment can register, so absorption on one step does not
establish permanent stagnation. Subnormal spacings and overflow lie outside
the normal-range asymptotic argument above.
\end{proof}

Uniform averaging also has finite-resolution loss: with constant unit
increments, a float32 sum reaches $2^{24}$ and then stays there under
round-to-nearest, ties-to-even, because its upward spacing is $2$.
More generally, an ideal uniform sum of order $T$ with constant-order
increments also has increment-to-spacing ratio $\Theta(2^p/T)$.
GPU-CFR therefore uses float64 strategy accumulation for every update rule,
including uniform averaging. This widens accumulator resolution without
changing the precision of the increments or the device-side counter.

%% file: sections/I_future_work.tex
\section{Future Work}
\label{app:future}

GPU-CFR fixes the inner loop of tabular CFR and leaves the algorithm that calls it
untouched, so the natural next steps are the solvers that call that loop many times.
Real-time subgame re-solving \cite{burch2014solving,li2026real} runs a fresh solve
under a per-decision budget on every turn; a captured graph turns each of those
solves into one launch, and the budget then buys iterations instead of dispatch.
Abstraction pipelines \cite{li2026effective,li2026abstraction} score candidate
abstractions by solving each one, and computing tournament continuations in place of
the ICM heuristic \cite{li2026icm} solves one subtree per stack configuration; both
are loops of short, structurally identical solves that the compile-once design serves
directly. On the evaluation side, the exact best-response and value-decomposition
passes that variance-reduced estimators need \cite{li2026correlated,li2026av,li2026agents}
share the compiled arrays of \cref{sec:compiled}. Finally, solver-guided language
agents \cite{li2026pokerskill,wang2026solver} query equilibrium strategies during
reasoning, and a single-GPU solver that answers those queries in milliseconds removes
the latency that today limits such agents to precomputed charts. Extending the compiler
to more than two players and to sampled (Monte Carlo) CFR variants, whose traversal is
no longer static, is the main open design question.